\newtheorem{theorem}{Theorem}
\newtheorem{remark}[theorem]{Remark}
\newtheorem{corollary}[theorem]{Corollary}
\newtheorem{assumption}{Assumption}
\newtheorem{notation}{Notation}
\newcommand{\E}{\mathbb E}
\newcommand{\e}{\mathrm e}
\newcommand{\D}{\mathrm{d}}
\newcommand{\F}{\mathcal F}
\newcommand{\Var}{\mathrm{Var}}
\begin{document}


\title{Marked Hawkes process modeling of price dynamics and volatility estimation}

\author{Kyungsub Lee\footnote{Department of Statistics, Yeungnam University, Gyeongsan, Gyeongbuk 38541, Korea} and Byoung Ki Seo\footnote{Corresponding author, School of  Management Engineering, UNIST(Ulsan National Institute of Science and Technology), Ulsan 44919, Korea}}
\maketitle

\begin{abstract}
A simple Hawkes model have been developed for the price tick structure dynamics incorporating market microstructure noise and trade clustering.
In this paper, the model is extended with random mark to deal with more realistic price tick structures of equities.
We examine the impact of jump in price dynamics to the future movements and dependency between the jump sizes and ground intensities.
We also derive the volatility formula based on stochastic and statistical methods and compare with realized volatility in simulation and empirical studies.
The marked Hawkes model is useful to estimate the intraday volatility similarly in the case of simple Hawkes model.
\end{abstract}

\section{Introduction}

In this paper, the tick dynamics of stock prices observed at ultra-high-frequency level are modeled based on the symmetric marked Hawkes process and the empirical properties of the price dynamics are examined.
The simple self and mutually excited Hawkes model for the price dynamics with the unit jump size incorporates the stylized facts of the ultra-high-frequency financial data, such as market microstructure noise and order clustering.
On the other hand, random size jumps, i.e., not a constant jump, as in the simple Hawkes model, are observed in the tick structure of equity markets, particularly when there is a high ratio between the stock price and minimum tick size.
By combining the Hawkes model with a mark structure, which has additional information for each event, more realistic model of the tick price dynamics is proposed to deal with the random size jumps.

Recent studies on (ultra)-high-frequency data and the market microstructure have been developed in several ways.
The volume of literature on the financial theory of the market microstructure and limit order book \citep{Rosu2009}, and the role of algorithmic trading at a high-frequency rate \citep{Foucault2012,Chaboud2014,Hoffmann2014} is increasing.
A number of studies focused on the reduced form or stochastic modeling of the limit order dynamics and order executions; the reader may refer to \cite{Lo2002}, \cite{Cont2010}, \cite{Malo2012}, \cite{Cont2013}, \cite{Abergel2013}.

The statistical property of the ultra-high-frequency data is also an important subject because they exhibit the distinctive characteristics from the macro price dynamics.
For example, care should be taken when applying the typical statistical methods to ultra-high-frequency data,  and when computing the realized volatility \citep{ABDL} due to microstructure noise, which refers to the mean reverting properties of the price processes at high frequency level.
Previous studies (\cite{Ait2005}, \cite{Zhang2005}, \cite{Hansen} and \cite{Ait2011}) measured the volatility of the return in the presence of market microstructure noise.
\cite{Huth2014} examined the lead/lag relationship between asset prices and showed that there are significant cross correlations in the futures/stock at the high-frequency contrast with the daily data cases where cross correlations are negligible.
The lead/lag relationship among the international index futures of different countries were also observed by \cite{Alsayed2014}.
For the semi-Markov model with price jumps to explain the micorstructure noise, consult \cite{Fodra}.

The financial asset price time series at the ultra-high-frequency level exhibits several autocorrelations that are not observed on a daily basis.
Under the tick structure with a minimum tick size of price variation, the price dynamics is a pure jump process that consists of up jumps and down jumps.
First, the frequency of up movements tends to increase with increasing frequency of the past down movements and vice versa.
This causes a mean reverting property in the price dynamics, even though the correlations last for less than a few seconds.
Second, there are also autocorrelations between the movements of the same direction.
This causes volatility clustering that is different from the clustering on the macro level, which is typically modeled by GARCH \citep{Bollerslev1986} or the stochastic volatility model \citep{Heston1993}, 
because the clustering properties in the tick structure last for only a few seconds.
The durations of the autocorrelations are much shorter than those of the autocorrelation observed on a daily level.

These properties are well incorporated into the Hawkes model, which belongs to the class of point processes and is introduced by \cite{Hawkes1971}.
Therefore, there has been an increase in the related work of modeling price dynamics based on the Hawkes process.
The bivariate Hawkes process was introduced to model the buy and sell order arrivals and the impact of the orders on future prices was examined \citep{Hewlett2006}.
The generalized Hawkes models was used to study the dependence between the occurrence of time trades and changes to the mid quote as well as the dependences between trading days \citep{Bowsher2007}.
\cite{Large2007} examined the market resilience after trades using the limit order book data and mutually excited Hawkes models.

\cite{Bacry2012} explained the non-parametric estimation method for the symmetric Hawkes process based on high-frequency futures data.
Based on the mutually exciting Hawkes process, \cite{Bacry2013} suggested the mathematical framework that incorporates the market microstructure noise and the Epps effect, which is the correlation between the returns of two different assets at high sampling frequency. 
The trade clustering properties of the price dynamics on the micro level was well incorporated by the self-excited Hawkes process \citep{Fonseca2014Self}.
The formulas for the moments and correlation function for the self and mutually excited Hawkes process were derived by \cite{Fonseca2014}.
A multivariate Hawkes process was introduced for the up and down price movements and buy and sell orders to explain the stylized facts of the market impact and microstructure \citep{Bacry2014}.
\cite{Zheng2014} suggested a multivariate constrained Hawkes process to describe the dynamics of the bid and ask prices.
\cite{LeeSeo} focused on the daily and intraday volatility estimation based on the symmetric Hawkes process and compared the result with the realized volatility.
For more about the kernel estimation in the Hawkes model, consult \cite{Bacry2016} and for the correlation and lead-lag relationship in a multi-asset model using the Hawkes process, consult \cite{Fonseca2016}.

The previous studies focused mainly on the simple point process model, where the jump size is constant.
In the present study, the existing simple Hawkes model was extended to the marked Hawkes model to handle more realistic price movements in stock markets where the random size jumps (marks) are introduced.
A marked point process was introduced based on the ACM-ACD model, where points are the transaction time and the marks are information on the transaction \citep{Russell2005}.
The Hawkes process was adopted to explain the aftermath effects of the marks, which is more convenient for calculating the useful formula.
The future effects of the marks depend on the absolute sizes of the marks and hence a linear impact function is introduced to deal with the future impact of the mark.
Our empirical study shows that the estimates of the slope parameter of the impact function are significant positive values in stock markets.
This suggests that the larger marks tend to magnify the future intensities more than the smaller marks.
For the distribution of the mark, a specific distribution is not assumed in this paper but the empirical distribution is used for estimations and volatility calculation.
Our model is not limited to the independent mark as the empirical studies show the intensity dependent mark distribution.

The remainder of the paper is organized as follows.
In Section~\ref{Sect:point}, the marked Hawkes model is proposed to describe the tick price dynamics of equities.
Section~\ref{Sect:simul} and Section~\ref{Sect:empirical} present the simulation results and empirical studies, respectively.
Section~\ref{Sect:concl} concludes the paper.
The mathematical proofs are reported in the Appendix.

\section{Symmetric marked Hawkes model}\label{Sect:point}

\subsection{Marked point process}

This subsection introduces the basic concepts of marked point processes.
The mathematical framework is in line with \cite{Daley}.
With the given complete separable metric state space, $\mathcal X$, a point process is a measure to count the number of random events that occur in an open set that belongs to the $\sigma$-field of $\mathcal X$'s Borel set, $\mathcal B_{\mathcal X}$.
To deal with random events, a probability space $(\Omega, \mathbb P)$ is introduced.
A point process $N(\omega)$ for $\omega \in \Omega$, or simply $N$, is the counting measure on $\mathcal B_{\mathcal X}$, i.e., $N(A)=N(\omega,A)$ is a random non-negative finite integer for any bounded $A \in \mathcal B_{\mathcal X}$.
In other words, $N(A)$ is a random variable that counts the number of events in $A$. 
From now on, the term $\omega$ is omitted for the notational simplicity.

A marked point process is a more complex model that is introduced to describe not only the location of random events but also additional information, called the mark, attached to each event.
A marked point process is a point process $\{(t_\ell, k(t_\ell) )\}$ with locations $t_\ell \in \mathbb R$ and marks $k(t_\ell)$ in a mark space $\mathcal K$.
The location space is not necessarily $\mathbb R$, but in this paper, the space is defined as a real line to model the price movements over time.
In addition, the mark space $\mathcal K = \mathbb Z^+ $ is the space of price jump sizes. 
This suggests that the absolute jump sizes of the price movements are represented by positive integer multiples of a minimum jump size, $\delta.$ 

For price dynamics modeling, there are two marked point processes, $N_1$ and $N_2$, which represent the up and down movements, respectively.
This study assumes that the probabilistic properties of $k_1$ and $k_2$, the mark size of $N_1$ and $N_2$, respectively, are the same.
For each $i=1,2$, the ground process $N_{gi}(\cdot) = N_i (\cdot,\mathbb Z^+)$, which is the marginal process for locations, is itself a point process.
Each ground process, $N_{gi}$, describes the arrival times of up or down price movements.

The price process is assumed to be represented by a two dimensional marked Hawkes point process, which belongs to the marked point processes.
Let $\lambda_{gi}(t)$ be the conditional intensity of $N_{gi}$ upon filtration $\F_{t-}$ which is the
 minimal $\sigma$-algebra generated by the history of the marked point processes of $N_1$ and $N_2$ on $(-\infty, t)$.
The conditional intensities are $\F_t$-adapted stochastic processes, that heuristically speaking, satisfy
$$\lambda_{gi}(t)\D t = \E[N_{gi}(t+\D t) -N_{gi}(t)| \F_{t-}].$$
Owing to the discontinuity, the intensities may not be unique at the discontinuities and $\lambda_{gi}(t)$ are considered the left continuous modifications in general, but if the intensities are used as integrators of some stochastic integrations, then the right continuous modification of $\lambda_{gi}$ is used as the integrators.
The marked Hawkes model for the price dynamics does not need to be symmetric in general.
However, for computational ease, it is convenient to assumed that the joint distributions of $(k_1, \lambda_{g1})$ and $(k_2, \lambda_{g2})$ are the same, for example, $\E[k_1]=\E[k_2]$.

The marked Hawkes process assumption implies that, for each $i=1,2$, the intensities of the ground processes satisfy
\begin{align}
\lambda_{gi}(t) &= \mu + \sum_{j=1,2} q_{ij} \int_{(-\infty,t)\times\mathbb Z^+}  g_{ij}(k_j) \phi_{ij}(t-u) N_j(\D u \times \D k_j)\label{Eq:ground}\\
&= \mu + \sum_{j=1,2}  q_{ij} \sum_{-\infty < u_\ell < t}  g_{ij}(k_j(u_\ell)) \phi_{ij}(t-u_\ell)\label{Eq:ground2}
\end{align}
where $u_\ell$ are the event times.
The mark size is represented by $k_j$ in the integration form with the counting measure, and is represented by $k_j(u_\ell)$ in the summation form with the associated event time $u_\ell$.
With the counting measure, inside the integration, $k_j$ can be considered to be function on $\mathbb R \times \mathbb Z^+$ such that $(u, k_j) \rightarrow k_j$.
If the occurrence time of $k_j$ needs to be specified, then the mark is written as $k_j(u)$ as in Eq.~\eqref{Eq:ground2} with some time notation, $u$, indicating that $k_j(u)$ takes place at time $u$.

The Hawkes processes generated by the above intensities are defined by the ancestor-offspring argument (as long as the intensities are finite).
The immigrant ancestor of type $i$ with mark $k$ arrives at the system from outside in a Poisson process at rate $\mu$.
These ancestors generate offspring and the generated offspring become the new ancestor to generate new offspring.
Owing to the ancestor type $j$ born at time $u$ with mark $k$, whether immigrant or not, the type $i$ offspring are generated by a Poisson process with a rate $q_{ij}g_{ij}(k)\phi_{ij}(t-u)$ at time $t$.
The Poisson rate is emphasized by $g_{ij}(k)$ at time $u$ and decreases with $\phi_{ij}(t-u)$ as $t$ increases.
A normalization method to determine $q_{ij}, g_{ij}(k)$ and $\phi_{ij}(t-u)$ is used in general,
since the combination of $q_{ij}g_{ij}(k)\phi_{ij}(t-u)$ is not unique.

We assume the exponential decay kernel for $\phi_{ij}(t-u) = \phi(t-u) = \beta \e^{-\beta(t-u)}, \beta>0,$ which is normalized such that
$$\int_0^\infty \phi(\tau) \D \tau = 1.$$
The impact of mark $g_{ij}$ is also normalized in the sense that $\E [g_{ij}(k)] = 1$.
In addition, $q_{ij}>0$ are called the branching coefficients and $\mathbf Q := \{q_{ij}\}_{i,j=1,2}$ is called the branching matrix.
With an exponential decay kernel, $(\lambda_{g1}, \lambda_{g2})$ is Markovian and the calculations in this paper depend largely on this property.

This paper considers the case that the distribution of mark of type $i$ may depend on $\lambda_{gi}$ and hence the conditional distribution of the mark is represented by $f(k_i|\lambda_{gi}(t))$.
On the other hand, this paper does not assume the specific parametric distribution for the mark size except when paths are generated under the simulation study.
The estimation procedures and volatility analysis are performed without specifying the mark distribution.

The counting measure $N_i$ can be interpreted as a stochastic jump process.
To apply the stochastic integration theory later, without notational confusion, the associated jump processes is defined as
$$ N_i(t) = \int_{(0,t]\times\mathbb Z^+} k_i  N_i(\D u \times \D k_i) $$
where in the l.h.s., the stochastic process $N_i$ is represented by the sole parameter $t$ and in the r.h.s.,
the measure $N_i$ is represented by both the location $t$ and mark size $k$.
In the stochastic process representation, $N_i(t)$ counts the number of events over $(0,t]$ with weight $k_i$.
The jump processes $N_i(t)$ are considered to be right continuous with left limits.
Similarly, the ground processes also have the stochastic process representations:
$$ N_{gi}(t) = \int_{(0,t]\times\mathbb Z^+} N_i(\D u \times \D k_i) $$
which counts the number of events over $(0,t]$ without considering the jump size $k_i$.
As a jump process, the ground process is also regarded to be right continuous with left limits.

\subsection{Linear impact function}
A linear impact function of the mark is introduced.
The impact function $g_{ij}$ and the distribution of the mark should satisfy some additional criteria so that the marked Hawkes process does not blow up.

\begin{assumption}\label{Assumption}
(i) The ground intensities $\lambda_{gi}$ are assumed to be stationary.\\
(ii) The impact functions have the same formula for all $i,j = 1,2$ and are linear with a slope parameter $\eta$:
$$ g(k) := g_{ii}(k) = g_{ji}(k)= \frac{ 1+(k-1)\eta }{\E[1+(k-1)\eta]}.$$
(iii) The branching matrix is symmetric with 
$$q_s:=q_{11}=q_{22}= \frac{\alpha_s}{\beta} \E[1 + (k-1) \eta], \quad q_c:=q_{12}=q_{21}=\frac{\alpha_c}{\beta} \E[1 + (k -1) \eta].$$
(iv) For $i=1,2$, we assume
\begin{align}
\E[k_i \lambda_{gi}(t)]  = K_{i \lambda_{gi}} \E[\lambda_{gi}(t)] \label{Eq:K}
\end{align}
for some constant $K_{i \lambda_{gi}}>0$ and
\begin{align}
\{1 + ( K_{i \lambda_{gi}} -1)\eta\} \left(\frac{\alpha_s}{\beta} + \frac{\alpha_c}{\beta}\right) < 1.\label{Eq:condition}
\end{align}
In addition, the joint distributions of $(k_1, \lambda_{g1})$ and $(k_2, \lambda_{g2})$ are the same and we have $K_{1 \lambda_{g1}} =  K_{2 \lambda_{g2}}$.
\end{assumption}

The condition of Eq.~\eqref{Eq:condition} is similar to the existence condition of the simple symmetric Hawkes process except for the additional term $\{1 + ( K_{i \lambda_{gi}} -1)\eta\}$.
Indeed, Assumption~\ref{Assumption} (ii)$\sim$(iv) leads to the weak stationarity of $\lambda_{gi}$.
Under Eq.~\eqref{Eq:ground} with the exponential decay function, we have
\begin{align}
\lambda_{g1}(t) = \mu &+ q_s \int_{(-\infty,t)\times\mathbb Z^+} g(k_1) \beta \e^{-\beta(t-u)} N_1(\D u\times \D k_1) \nonumber\\
&+ q_c \int_{(-\infty,t)\times\mathbb Z^+} g(k_2) \beta \e^{-\beta(t-u)} N_2(\D u\times \D k_2), \label{Eq:lambdag1}\\
\lambda_{g2}(t) = \mu &+ q_c \int_{(-\infty,t)\times\mathbb Z^+} g(k_1) \beta \e^{-\beta(t-u)} N_1(\D u\times \D k_1) \nonumber\\
&+ q_s \int_{(-\infty,t)\times\mathbb Z^+} g(k_2) \beta \e^{-\beta(t-u)} N_2(\D u\times \D k_2).\label{Eq:lambdag2}
\end{align}
The point process $(N_1, N_2)$ defined under the above ground intensity is then a two dimensional marked self and mutually excited Hawkes process with a linear impact function.

Assuming the integrand of the following formula is integrable, a predictable finite variation process
$$ \int_\cdot^t  \E[ g(k_i) |\lambda_{gi}(u) ]\lambda_{gi}(u)  \beta \e^{-\beta(t-u)} \D u$$ 
is a compensator for 
$$\int_{(\cdot,t) \times \mathbb Z^+} g(k_i) \beta \e^{-\beta(t-u)} N_i(\D u \times \D k_i),$$
and hence 
$$ \int_{(\cdot,t) \times \mathbb Z^+} g(k_i) \beta \e^{-\beta(t-u)} N_i(\D u \times \D k_1) - \int_\cdot^t  \E[ g(k_i) |\lambda_{gi}(u) ]\lambda_{gi}(u)  \beta \e^{-\beta(t-u)} \D u$$
is a martingale.
Therefore, by taking the unconditional expectation for the ground intensity formulas in Eqs.~\eqref{Eq:lambdag1}~and~\eqref{Eq:lambdag2}, we have
\begin{align*}
\E[\lambda_{g1}(t)] = \mu &+ q_s \int_{-\infty}^{t} \E[\E[ g(k_1) |\lambda_{g1}(u) ]\lambda_{g1}(u)]\beta \e^{-\beta(t-u)}\D u  \\
&+ q_c \int_{-\infty}^{t}\E[ \E[g(k_2)| \lambda_{g2}(u) ]\lambda_{g2}(u)]\beta \e^{-\beta(t-u)}\D u,\\
\E[\lambda_{g2}(t)] = \mu &+ q_c \int_{-\infty}^{t}\E[\E[ g(k_1) |\lambda_{g1}(u) ]\lambda_{g1}(u)]\beta \e^{-\beta(t-u)}\D u \\
&+ q_s \int_{-\infty}^{t} \E[ \E[g(k_2)| \lambda_{g2}(u) ]\lambda_{g2}(u)]\beta \e^{-\beta(t-u)}\D u 
\end{align*}
and by Eq.~\eqref{Eq:K},  
$$ \E[\E[ g(k_i) |\lambda_{gi}(u) ]\lambda_{gi}(u)] = \E[g(k_i)\lambda_{gi}(u) ]  = \frac{\{ 1 + (K_{i \lambda_{gi}}-1)\eta \} \E[\lambda_{gi}(u)]}{ 1+(\E[k]-1)\eta}$$
where $\E[k] = \E[k_i]$ since $k_1$ and $k_2$ have the same distributional property.
We write
\begin{align}
\E[\lambda_{g1}(t)] = \mu &+ \frac{\alpha_s}{\beta} \int_{-\infty}^{t} \{1 + (K_{1 \lambda_{g1}}-1)\eta\}\E[\lambda_{g1}(u)] \beta \e^{-\beta(t-u)}\D u \nonumber \\
&+ \frac{\alpha_c}{\beta} \int_{-\infty}^{t} \{1 + ( K_{2 \lambda_{g2}}-1)\eta\}\E[\lambda_{g2}(u)]\beta \e^{-\beta(t-u)}\D u,\label{Eq:integ_sys1}\\
\E[\lambda_{g2}(t)] =\mu &+  \frac{\alpha_c}{\beta}  \int_{-\infty}^{t} \{1 + ( K_{1 \lambda_{g1}}-1)\eta\}\E[\lambda_{g1}(u)]\beta \e^{-\beta(t-u)}\D u \nonumber \\
&+ \frac{\alpha_s}{\beta} \int_{-\infty}^{t} \{1 + ( K_{2 \lambda_{g2}}-1)\eta\}\E[\lambda_{g2}(u)]\beta \e^{-\beta(t-u)}\D u \label{Eq:integ_sys2}
\end{align}
or, in a system of linear differential equation,  
\begin{align*}
\begin{bmatrix}
\dfrac{\D \E[\lambda_{g1}(t)]}{\D t} \\  
\dfrac{\D \E[\lambda_{g2}(t)]}{\D t}  
\end{bmatrix}
=
\begin{bmatrix}
\alpha_s \{1 + ( K_{1 \lambda_{g1}}-1)\eta\} - \beta & \alpha_c \{1 + ( K_{1 \lambda_{g1}}-1)\eta\} \\
\alpha_c \{1 + ( K_{1 \lambda_{g1}}-1)\eta\} & \alpha_s \{1 + ( K_{1 \lambda_{g1}}-1)\eta\} - \beta
\end{bmatrix}
\begin{bmatrix}
\E[\lambda_{g1}(t)] \\
\E[\lambda_{g2}(t)]
\end{bmatrix}
+
\begin{bmatrix}
\beta \mu \\
\beta \mu \\
\end{bmatrix}
\end{align*}
where $K_{1 \lambda_{g1}}= K_{2 \lambda_{g2}}$ is used.
The eigenvalues of the system are 
$$ (\xi_1, \xi_2) = \left(-\beta+ (\alpha_s -\alpha_c)\{1 + ( K_{1 \lambda_{g1}}-1)\eta\}, -\beta+ (\alpha_s + \alpha_c)\{1 + ( K_{1 \lambda_{g1}}-1)\eta\}\right)$$
and the solution of the above system is  
\begin{align*}
\begin{bmatrix}
\E[\lambda_{g1}(t)] \\
\E[\lambda_{g2}(t)]
\end{bmatrix}
=\frac{-\lambda_{g1}(0)+\lambda_{g2}(0)}{2}\e^{\xi_1 t} \begin{bmatrix}-1\\1\end{bmatrix} 
+\frac{\lambda_{g1}(0)+\lambda_{g2}(0)}{2}\e^{\xi_2 t} \begin{bmatrix}1\\1\end{bmatrix} 
-\frac{\mu \beta}{\xi_2} \left(1-\e^{\xi_2 t} \right) \begin{bmatrix}1\\1\end{bmatrix}.
\end{align*}
If Eq.~\eqref{Eq:condition} holds, then $\xi_1, \xi_2 < 0$ and the solution converges to a constant as $t \rightarrow \infty$.
A similar argument is applied to the second moments of $\lambda_{g1}(t)$ and $\lambda_{g2}(t)$
and $\lambda_{gi}(t)$ are weakly stationary at least in the long-run 
or by assuming that $\lambda_{gi}(0)$ are equal to their long-run expectations.

Note that by the symmetry and the stationarity of $\lambda_{gi}$, Eqs~\eqref{Eq:integ_sys1} and \eqref{Eq:integ_sys2} lead directly to
\begin{align*}
\E[\lambda_{g1}(t)] = \mu &+ \frac{\alpha_s}{\beta} \{1 + (K_{1 \lambda_{g1}}-1)\eta\}\E[\lambda_{g1}(t)] + \frac{\alpha_c}{\beta}  \{1 + (K_{1 \lambda_{g1}}-1)\eta\}\E[\lambda_{g2}(t)],\\
\E[\lambda_{g2}(t)] = \mu &+  \frac{\alpha_c}{\beta} \{1 + (K_{1 \lambda_{g1}}-1)\eta\}\E[\lambda_{g1}(t)] + \frac{\alpha_s}{\beta} \{1 + (K_{1 \lambda_{g1}}-1)\eta\}\E[\lambda_{g2}(t)].
\end{align*}
and
$$ \left( \mathbf{I} - \{1 + ( K_{1 \lambda_{g1}} -1)\eta\} \begin{bmatrix} \dfrac{\alpha_s}{\beta} & \dfrac{\alpha_c}{\beta} \\ \dfrac{\alpha_c}{\beta} & \dfrac{\alpha_s}{\beta} \end{bmatrix} \right) \begin{bmatrix}\E[\lambda_{g1}(t)] \\ \E[\lambda_{g2}(t)]\end{bmatrix} = \begin{bmatrix}\mu \\ \mu \end{bmatrix}. $$
By the symmetry between $\lambda_{g1}$ and $\lambda_{g2}$,
\begin{equation}
\E[\lambda_{g1}(t)] = \E[\lambda_{g2}(t)] = \frac{\mu \beta}{\beta- (\alpha_s+\alpha_c)\{1 + (K_{1 \lambda_{g1}} -1)\eta\} }.\label{Eq:Elambdag1}
\end{equation}
Therefore, if condition~\eqref{Eq:condition} is satisfied, then the ground processes are well defined, i.e., the expectation of the ground intensities are positive and finite.

\subsection{Second moment property}

In this subsection, the volatility formula of the asset return generated by the marked Hawkes processes under the symmetric model is calculated.
The symmetry of the model implies that $i$ and $j$ are interchangeable which makes the formula simple.
In the following notation, various $K$s and $\alpha$s are defined in a similar manner to those in Eq.~\eqref{Eq:K}, which simplify the notations.

\begin{notation}\label{Notation}
For a jump process $X$ such as $\lambda_{gi}$ and $N_i$, let
\begin{align*}
\E[k_i X(t)] &= K_{iX} \E[X(t)], \quad \E[k_i^2 X(t)] = K^{(2)}_{iX} \E[X(t)].
\end{align*}
(In the previous, when $X = \lambda_{gi}$, subscript is omitted for simplicity as in Eq.~\eqref{Eq:K}, i.e., $K=K_{i\lambda_{gi}}$.)
Furthermore,
\begin{align*}
\bar{\bar K} &= 1+2(K_{1\lambda_{g1}}-1)\eta + (K^{(2)}_{1\lambda_{g1}} -2K_{1\lambda_{g1}} +1 )\eta^2 = 1+2(K_{2\lambda_{g2}}-1)\eta + (K^{(2)}_{2\lambda_{g2}} -2K_{2\lambda_{g2}} +1 )\eta^2,\\
\bar K &= K_{1\lambda_{g1}}+(K_{1\lambda_{g1}}^{(2)}-K_{1\lambda_{g1}})\eta = K_{2\lambda_{g2}}+(K_{2\lambda_{g2}}^{(2)}-K_{2\lambda_{g2}})\eta,\\
\breve \alpha &= \alpha\{1+(K_{1\lambda^2_{g1}}-1)\eta\}= \alpha\{1+(K_{2\lambda^2_{g2}}-1)\eta\},\\
\tilde \alpha &= \alpha\{1+(K_{2\lambda_{g1}\lambda_{g2}}-1)\eta\} =\alpha\{1+(K_{1\lambda_{g2}\lambda_{g1}}-1)\eta\},\\
\acute \alpha &= \alpha \{ 1 + (K_{1\lambda_{g1} N_1} - 1 )\eta \} = \alpha \{ 1 + (K_{2\lambda_{g2} N_2} - 1 )\eta \},\\
\grave \alpha &= \alpha \{ 1 + (K_{2\lambda_{g2} N_1} - 1 )\eta \} = \alpha \{ 1 + (K_{1\lambda_{g1} N_2} - 1 )\eta \}, 
\end{align*}
and
\begin{align*}
&\mathbf{M} = \begin{bmatrix} \breve \alpha_s - \beta & \tilde \alpha_c \\ \breve \alpha_c & \tilde \alpha_s - \beta \end{bmatrix}, \quad  \mathbf{M}_2 = \begin{bmatrix} \acute \alpha_s - \beta & \grave \alpha_c \\ \acute \alpha_c & \grave \alpha_s - \beta \end{bmatrix}, 
\\
&\mathbf{K} = \begin{bmatrix} K_{1\lambda_{g1}} & 0 \\ 0 & K_{1\lambda_{g1}} \end{bmatrix}, \quad \mathbf{K}_2 = \begin{bmatrix} K_{1\lambda_{g1}^2} & 0 \\ 0 & K_{2\lambda_{g1}\lambda_{g2}} \end{bmatrix}, \mathbf{K}_3 = \begin{bmatrix} K_{1\lambda_{g1} N_1} & 0 \\ 0 & K_{1\lambda_{g1} N_2} \end{bmatrix}.
\end{align*}
\end{notation}

\begin{theorem}\label{Thm:var}
Let $(N_1, N_2)$ be a two dimensional marked self and mutually excited Hawkes process with a linear impact function under Assumption~\ref{Assumption} with ground intensities of Eqs.~\eqref{Eq:lambdag1}~and~\eqref{Eq:lambdag2}.
If the price process $S_t$ follows 
$$ S_t = S_0 + \delta(N_1(t) - N_2(t))$$
with a minimum jump size $\delta$, then the unconditional variance of the return over $[0,t]$ is 
$$ \Var \left(\frac{S_t - S_0}{S_0} \right) = \frac{\delta^2}{S_0^2}\E[(N_1(t) -N_2(t))^2] $$
with
\begin{align*}
&\begin{bmatrix} \E[N_1^2(t)] \\ \E[N_1(t)N_2(t)] \end{bmatrix} = -\E[\lambda_{g1}(t)]\mathbf{K}_3 \left\{  \beta \mu \mathbf{K} \mathbf{M}_2^{-1} \begin{bmatrix}1\\1 \end{bmatrix}t^2 \right. \\
{}&\left. + \left( 2\mathbf{M}_2^{-1} \begin{bmatrix} \alpha_s \bar K \\ \alpha_c \bar K\end{bmatrix} 
- \mathbf{M}_2^{-1} \mathbf{K}_2 \mathbf{M}^{-1} 
\begin{bmatrix}  (\alpha_s^2 + \alpha_c^2) \bar{\bar K}  \\ 2\alpha_s\alpha_c\bar{\bar K} \end{bmatrix} 
 + 2\mathbf{M}_2^{-1}(\mathbf{K}\mathbf{M}_2^{-1}- \mathbf{K}_2 \mathbf{M}^{-1}) \begin{bmatrix} \beta\mu \\ \beta\mu \end{bmatrix} - \begin{bmatrix}K^{(2)}_{1\lambda_g1}/K_{1\lambda_{g1} N_1} \\ 0\end{bmatrix} \right)t \right\}.
\end{align*}
\end{theorem}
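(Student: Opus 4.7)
The plan is to build a small hierarchy of ODEs by applying the jump-process It\^o formula to the relevant products, taking unconditional expectations, and inverting the resulting $2\times 2$ linear systems. Because the exponential kernel makes $(\lambda_{g1},\lambda_{g2})$ Markovian, each intensity satisfies
$$\D\lambda_{gi}(t) = \beta(\mu-\lambda_{gi}(t))\,\D t + \beta\sum_{j}q_{ij}\int g(k_j)\,N_j(\D t\times \D k_j),$$
while $N_i$ jumps by $k_i$ whenever $N_{gi}$ fires. Products involving $\lambda_{gi}$ therefore pick up a discrete cross-variation from simultaneous jumps, whereas $\D(N_1 N_2)$ has no cross-variation because $N_1,N_2$ share no common jumps.

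First, I would apply It\^o to $\lambda_{gi}^2$ and $\lambda_{g1}\lambda_{g2}$, take expectations, and use Notation~\ref{Notation} to rewrite the joint expectations $\E[g(k_\ell)\lambda_{gi}\lambda_{gj}]$ via $\breve\alpha,\tilde\alpha$, and $\E[g(k_\ell)^2\lambda_{g\ell}]$ via $\bar{\bar K}$. Stationarity zeroes the time derivatives and leaves a $2\times 2$ linear system with coefficient matrix $\mathbf{M}$, giving $(\E[\lambda_{g1}^2],\E[\lambda_{g1}\lambda_{g2}])^{\top}$ as $\mathbf{M}^{-1}$ applied to a constant forcing assembled from $\E[\lambda_{g1}]$ (from Eq.~\eqref{Eq:Elambdag1}) and $\bar{\bar K}\,(\alpha_s^2+\alpha_c^2,\,2\alpha_s\alpha_c)^{\top}$.

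Second, I would apply It\^o to $\lambda_{gi}(t)N_j(t)$. The mean-reversion drift contributes $-\beta\E[\lambda_{gi}N_j]+\beta\mu\E[N_j(t)]$; the jump of $N_j$ contributes $\E[k_j\lambda_{gi}\lambda_{gj}]$, already known from the previous step; the self- and cross-excitation of $\lambda_{gi}$ by $N_1,N_2$ produces the entries of $\mathbf{M}_2$ acting on $(\E[\lambda_{g1}N_j],\E[\lambda_{g2}N_j])^{\top}$ through the $K_{\cdot\lambda_{g\cdot}N_\cdot}$ constants; and the simultaneous-jump correction $\beta q_{ij}\E[g(k_j)k_j\lambda_{gj}]$ generates the $\bar K$ contribution. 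Since the resulting coupled ODE has constant forcing in the stationary regime, its solution is affine in $t$ with slope $\mathbf{M}_2^{-1}$ applied to that forcing and intercept fixed by $\E[\lambda_{gi}(0)N_j(0)]=0$.

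Finally, applying It\^o to $N_i^2$ and $N_1 N_2$ yields $\E[N_i^2(t)]=\int_0^t \E[(2N_i(u^-)+k_i)k_i\lambda_{gi}(u)]\,\D u$ and the analogous identity for $\E[N_1(t)N_2(t)]$. Each $\E[k_j\lambda_{gj}N_i]$ reduces to $K_{i\lambda_{gi}N_j}\E[\lambda_{gi}(u)N_j(u)]$ (producing the outer $\mathbf{K}_3$), while the $\E[k_i^2\lambda_{gi}]$ piece supplies the diagonal $K^{(2)}_{1\lambda_{g1}}/K_{1\lambda_{g1}N_1}$ correction. Integrating the affine expression from the previous step converts its $u$-slope into the $t^2$ coefficient $\beta\mu\mathbf{K}\mathbf{M}_2^{-1}[1,1]^{\top}$ and its intercept into the linear-in-$t$ bracket combining $\mathbf{M}_2^{-1}[\alpha_s\bar K,\alpha_c\bar K]^{\top}$, the feed-through $-\mathbf{M}_2^{-1}\mathbf{K}_2\mathbf{M}^{-1}[(\alpha_s^2+\alpha_c^2)\bar{\bar K},2\alpha_s\alpha_c\bar{\bar K}]^{\top}$, and the $\mathbf{K}\mathbf{M}_2^{-1}-\mathbf{K}_2\mathbf{M}^{-1}$ cross term, reproducing the stated expression. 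The main obstacle is the bookkeeping of the $K$-constants: because the conditional mark distribution $f(k_i|\lambda_{gi})$ is intensity-dependent, no joint expectation factorises, so every occurrence of $g(k)$ inside a joint expectation produces a fresh $K$, and carefully matching these against the definitions of $\bar K,\bar{\bar K},\breve\alpha,\tilde\alpha,\acute\alpha,\grave\alpha$ and $\mathbf{K},\mathbf{K}_2,\mathbf{K}_3$ is what collapses the computation into the clean matrix formula of the theorem.
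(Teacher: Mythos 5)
Your proposal is correct and follows essentially the same route as the paper: the identical three-tier hierarchy (second moments of the intensities via $\mathbf{M}$, then the intensity--counting cross moments via $\mathbf{M}_2$, then $\E[N_i N_j]$ by integration), with the quadratic (co)variation terms and the intensity-dependent $K$-constants handled exactly as in the paper's appendix. No substantive difference to report.
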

\begin{proof}
See \ref{Proof:var}.
\end{proof}
The result is slightly complicated but the following remark is useful for the practice.
\begin{remark}\label{Remark:vol}
By assuming $K_{1\lambda_{g1} N_{1}} \approx K_{1\lambda_{g1} N_2}$ and $K_{1\lambda^2_{g1}} \approx K_{1\lambda_{g1}\lambda_{g2}}$, we have
\begin{align}
\E[(N_1(t) -N_2(t))^2] &= 2(\E[N_1^2(t)] - \E[N_1(t)N_2(t)]) \nonumber\\
&\approx 2 K_{1\lambda_{g1} N_1} \E[\lambda_{g1}(t)] \left(  \frac{K_{1\lambda^2_{g1}}\bar{\bar K}(\alpha_s-\alpha_c)^2}{(\beta - \breve\alpha_s + \breve\alpha_c)(\beta - \acute\alpha_s + \acute\alpha_c)} + \frac{2(\alpha_s-\alpha_c)\bar K}{\beta - \acute\alpha_s + \acute\alpha_c}  + \frac{K^{(2)}_{1\lambda_{g1}}}{K_{1\lambda_{g1} N_1}}\right) t. \label{Eq:var}
\end{align}
Under the assumption, $\mathbf{M}, \mathbf{M}_2$ and $\mathbf{K}_2$ are symmetric and the variance formula becomes simple.
In addition, if all $K$s are equal to 1, then the variance formula is reduced to
\begin{align*}
\E[(N_1(t) -N_2(t))^2] = 2 \E[\lambda_{g1}(t)] \frac{\beta^2}{(\beta-\alpha_s-\alpha_c)^2} t.
\end{align*}
which is the same formula of the variance in the simple Hawkes model.
\end{remark}

\begin{corollary}\label{Cor:iidvol}
Under the assumption in Theorem~\ref{Thm:var}, if the marks $k_i$ are i.i.d., then $K_{iX} = K := \E[k_i]$ and $K^{(2)}_{iX} = K^{(2)} := \E[k^2_i]$ for all $X \in \{\lambda_{gi}, \lambda_{gi}\lambda_{gj}, \lambda_{gi}N_j : i,j=1,2\}$ and
$\breve \alpha = \tilde \alpha = \acute \alpha = \grave \alpha$.
Therefore,
$$ \E[(N_1(t) -N_2(t))^2] =2 K \E[\lambda_{g1}(t)] \left(  \frac{K\bar{\bar K}(\alpha_s-\alpha_c)^2}{(\beta - \breve\alpha_s + \breve\alpha_c)^2} + \frac{2(\alpha_s-\alpha_c)\bar K}{\beta - \breve\alpha_s + \breve\alpha_c}  + \frac{K^{(2)}}{K}\right) t$$
where $\bar{\bar K}$ and $\bar K$ are now represented by
\begin{align*}
\bar{\bar K} = 1+2(K-1)\eta + (K^{(2)} -2K +1 )\eta^2, \qquad
\bar K = K+(K^{(2)}-K)\eta .
\end{align*}
\end{corollary}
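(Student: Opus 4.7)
The plan is to deduce the corollary by specializing Theorem~\ref{Thm:var} (via Remark~\ref{Remark:vol}) to the i.i.d.\ setting, where every constant $K_{iX}$ collapses to an unconditional moment of a single mark. My strategy is therefore twofold: first, justify that all the different $K$'s reduce to $K=\E[k_i]$ (and their $(2)$-versions to $K^{(2)}=\E[k_i^2]$); second, substitute these identities into the formula of Remark~\ref{Remark:vol} and collect terms.

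For the first step, I would argue as follows. Each quantity $X(t)\in\{\lambda_{gi}(t),\,\lambda_{gi}(t)\lambda_{gj}(t),\,N_j(t)\}$ appearing in Notation~\ref{Notation} depends only on events and marks occurring strictly before time $t$ (the intensities are left-continuous functionals of $\F_{t-}$, and the cumulative jump processes agree almost surely with their left limits). Under the i.i.d.\ mark hypothesis, the mark $k_i$ is by definition independent of that past history, and therefore of $X(t)$. Consequently $\E[k_i X(t)]=\E[k_i]\,\E[X(t)]=K\,\E[X(t)]$, so $K_{iX}=K$, and analogously $K^{(2)}_{iX}=K^{(2)}$, for every $X$ in the list of Notation~\ref{Notation}.

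Substituting these identities into the decorated branching coefficients of Notation~\ref{Notation}, one immediately obtains $\breve\alpha=\tilde\alpha=\acute\alpha=\grave\alpha=\alpha\{1+(K-1)\eta\}$ (for both the $s$ and $c$ subscripts). In particular, the two approximate equalities $K_{1\lambda_{g1}N_1}\approx K_{1\lambda_{g1}N_2}$ and $K_{1\lambda_{g1}^2}\approx K_{1\lambda_{g1}\lambda_{g2}}$ invoked in Remark~\ref{Remark:vol} now hold with equality, so formula~\eqref{Eq:var} holds as an exact identity. Plugging in $K_{1\lambda_{g1}N_1}=K$, $K_{1\lambda_{g1}^2}=K$, $K^{(2)}_{1\lambda_{g1}}=K^{(2)}$, and $\acute\alpha=\breve\alpha$ collapses the product $(\beta-\breve\alpha_s+\breve\alpha_c)(\beta-\acute\alpha_s+\acute\alpha_c)$ to $(\beta-\breve\alpha_s+\breve\alpha_c)^2$ and reduces $K^{(2)}_{1\lambda_{g1}}/K_{1\lambda_{g1}N_1}$ to $K^{(2)}/K$, yielding the stated main formula. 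The simplified expressions for $\bar{\bar K}$ and $\bar K$ follow by the same substitution in their definitions.

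The only point requiring care --- and the main ``obstacle'' of an otherwise purely algebraic corollary --- is the independence step: one must verify that the ``current mark'' $k_i$ implicit in each $K_{iX}$ is genuinely independent of $X(t)$. Under i.i.d.\ marks this is immediate, since $k_i$ is drawn afresh from a fixed distribution, but it is precisely this independence that fails in the general intensity-dependent mark framework of Theorem~\ref{Thm:var}. Once the reduction to $K$ and $K^{(2)}$ is in hand, no further computation is needed.
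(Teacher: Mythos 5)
Your proposal is correct and follows essentially the route the paper intends: the corollary is a direct specialization of Remark~\ref{Remark:vol}, with the i.i.d.\ (hence history-independent) marks forcing every $K_{iX}$ to collapse to $K$ and every $K^{(2)}_{iX}$ to $K^{(2)}$, so that the two approximate identities in the remark hold exactly and the denominators merge into a square. Your identification of the independence of the current mark from the predictable quantities $\lambda_{gi}(t)$, $\lambda_{gi}(t)\lambda_{gj}(t)$, $N_j(t-)$ as the one substantive step is exactly right.
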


\subsection{Likelihood function}

To estimate the parameters in the intensity processes, such as $\mu, \alpha_s, \alpha_c, \beta$ and $\eta$, the log-likelihood needs to be computed.
The joint log-likelihood function of the realized interarrival of the jumps and marks of $(N_1, N_2)$ over the period $[0,T]$ is represented by
\begin{align}
&\left( \int_{(0, T]} \log \lambda_{g1}(u)  N_{g1}(\D u) + \int_{(0, T]} \log \lambda_{g2}(u)  N_{g2}(\D u) - \int_0^T (\lambda_{g1}(u) + \lambda_{g2}(u)) \D u  \right) \nonumber \\
&+ \left( \int_{(0,T]\times\mathbb Z^+ } \log f (k_1 | \lambda_{g1} (u) ) N_{1} (\D u\times\D k_1) + \int_{(0,T]\times\mathbb Z^+} \log f (k_2 | \lambda_{g2} (u) ) N_2(\D u \times \D k_2) \right) \nonumber\\
&=: \log L_g + \log L_m \label{Eq:likelihood}
\end{align}
where $f$ denotes the conditional distribution of the mark $k_i$ with a given $\lambda_{gi}$.

In the above formula, the log-likelihood is separated into two parts, $\log L_g$ and $\log L_m$.
The first part $\log L_g$ is the log-likelihood function of the ground intensity processes,
or more precisely, the joint log-likelihood function of the jump interarrival times.
The second part $\log L_m$ is the log-likelihood function of the conditional mark distribution.
When $\log L_g$ is used solely for the maximum likelihood estimation,
then $\log L_g$ is indeed the conditional log-likelihood function of the jump interarrival times with the conditions on the realized marks,
i.e., the estimation is the maximum likelihood estimation based on the interarrival times of jumps with given realized marks.

In the estimation procedure of the simulation and empirical studies later, no specific form of the joint distribution of the ground intensities and mark sizes is assumed 
but the empirical mark distribution is used to conduct the maximum likelihood procedure to maximize $\log L_g$, the conditional log-likelihood function of the jump interarrival times.
Because the empirical mark distribution is used, the $\log L_m$ part does not affect the estimations of $\boldsymbol\theta=(\mu, \alpha_s, \alpha_c, \beta, \eta)$.
On the other hand, if one assume a specific parametric modeling on the mark distribution by specifying a conditional distribution $f (k_i | \lambda_{gi})$ as in Subsection~\ref{Subsect:geo}
and also want to estimate the parameters in $f$, then 
$\log L_m$ is also affected by $\boldsymbol\theta$ 
since when $\lambda_{gi}$s are inferred, it is calculated using $\boldsymbol\theta$.

In another aspect, the estimation procedure only on $\log L_g$ is possible since the marks $k_i$ are observable.
If the mark sizes are unobservable, it is then inevitable to assume a parametric modeling on the joint or conditional distribution between $\lambda_{gi}$ and $k_i$ to infer the mark size $k_i$.
Owing to the parametric assumption on $f (k_i | \lambda_{gi} (u) )$, as mentioned before, the parameter family of $\boldsymbol\theta$ also appears in the formula of $\log L_m$,
and the changing values of $\boldsymbol\theta$ change not only the value $\log L_g$ but also  $\log L_m$.
Therefore, the estimator that maximizes $\log L_g$ may not converge to the estimator that maximizes $(\log L_g + \log L_m)$ as the sample size increases.

Fortunately, since all $k_i$ are observable in our empirical study, the observed realized mark sizes are used to compute $\log L_g$, not the inferred $k_i$ based on some parametric assumption.
Hence, the estimator of $\boldsymbol\theta$ that maximizes $\log L_g$ are the maximum likelihood estimator associated with the conditional joint distribution of the interarrival times of jumps with the given marks.
Note that $\log L_g$ and $\log L_m$ are separated not because the marks $k_i$ are independent from the ground intensities but because $\log L_g$ can be represented as the log-likelihood function of the conditional joint distribution of the interarrival times with the given marks.

In practice, $\log L_g$ is computed as follows.
With the presumed parameter values of $\alpha_s, \alpha_c, \beta$ and $\eta$,
we compute the inferred ground intensity processes $\lambda_{gi}$s based on the realized $N_{i}$s of the stock price process and Eqs.~\eqref{Eq:lambdag1} and \eqref{Eq:lambdag2}.
With the inferred ground intensity processes, the realized values of the stochastic integration part are calculated by
$$\int_{(0,T]} \log \lambda_{gi}(u) N_{gi}(\D u) = \sum_{n=1}^{N_{gi}(T)} \log  \lambda_{gi}(u_n) $$ where $u_n$ denotes the realized jump time by $N_{gi}$.
In addition, $\int_0^T (\lambda_{g1}(u) + \lambda_{g2}(u)) \D u $ is computed using the Riemann integral which has a closed form formula.
By repeating the above procedure with the changing presumed parameters, the numerical solver for the optimization try to find the global maximum of $\log L_g$.

Consider the concavity of the log-likelihood function.
If the Hessian of the log-likelihood function is negative semi-definite for all parameters with a given realization,
then the log-likelihood function is concave.
However, the formula of the Hessian of the marked Hawkes model is complicated, we instead examine the conditional concavity when $\beta$ and $\eta$ are fixed.
Note that, with the given realized jump times $t_i$, we have
\begin{align*}
\log L_{g1}(T) :={}& \int_{(0,T]} \log \lambda_{g1}(u) N_{g1}( \D u) - \int_0^T \lambda_{g1}(u) \D u\\
={}& \sum_{t_i < T} \left( \log \lambda_{g1}(t_i) - \int_{t_{i-1}}^{t_i} \lambda_{g1}(u) \D u \right) - \int_{t_N}^{T} \lambda_{g1}(u) \D u \\
={}& \sum_{t_i < T} \left( \log \lambda_{g1}(t_i) - \frac{1-\e^{-\beta \tau_i}}{\beta}\lambda_{g1}(t_i)  \right) - \int_{t_N}^{T} \lambda_{g1}(u) \D u
\end{align*}
where $t_N$ is the last jump time.
Using the definition of $\lambda_{g1}$, each term in the summation can be rewritten as follows:
\begin{align*}
&\log \lambda_{g1}(t_i) - \frac{1-\e^{-\beta \tau_i}}{\beta}\lambda_{g1}(t_i)\\
={}& \log \left[ \mu + (\lambda_{g1}(0) - \mu)\e^{-\beta t_i } + \int_{(0, t_i] \times \mathbb Z_+} \alpha_s \{ 1+(k_1-1)\eta\}\e^{-\beta (t_i -u)} N_1 (\D u \times \D k_1) \right.\\
&\quad \left. + \int_{(0, t_i]\times \mathbb Z_+} \alpha_c \{ 1+(k_2-1)\eta\}\e^{-\beta (t_i -u)} N_2 (\D u \times \D k_2) \right]\\
&-  \frac{1-\e^{\beta \tau_i}}{\beta} \left[ \mu + (\lambda_{g1}(0) - \mu)\e^{-\beta t_i } + \int_{(0, t_i] \times \mathbb Z_+} \alpha_s \{ 1+(k_1-1)\eta\}\e^{-\beta (t_i -u)} N_1 (\D u \times \D k_1) \right.\\
&\quad \left. + \int_{(0, t_i]\times \mathbb Z_+} \alpha_c \{ 1+(k_2-1)\eta\}\e^{-\beta (t_i -u)} N_2 (\D u \times \D k_2) \right].
\end{align*}

With fixed $\beta$ and $\eta$, the term is represented as
$$ \log(m\mu + a_s \alpha_s + a_c \alpha_c + C) -  \frac{1-\e^{\beta \tau_i}}{\beta} (m\mu + a_s \alpha_s + a_c \alpha_c + C)$$
for some constants $m, a_s, a_c,$ and $C$.
Therefore, for any fixed $\beta$ and $\eta$, the Hessian matrix of the term with respect to $\mu, \alpha_s$ and $\alpha_c$ is
\begin{align*}
\frac{1}{\lambda^2_{g1}(t_i)}
\begin{bmatrix}
- m^2 & -m a_s & -m a_c \\
- m a_s & - a_s^2 & -a_s a_c \\
-m a_c & -a_s a_c & -a_c^2
\end{bmatrix}
\end{align*}
which is negative semidefinite.
The conditional Hessian of the log-likelihood function $\log L_g$ with fixed $\beta$ and $\eta$ is represented by 
\begin{align*}
\sum_{t_i<T} \left(\frac{1}{\lambda^2_{g1}(t_i)} + \frac{1}{\lambda^2_{g2}(t_i)}\right)
\begin{bmatrix}
- m^2 & -m a_s & -m a_c \\
- m a_s & - a_s^2 & -a_s a_c \\
-m a_c & -a_s a_c & -a_c^2
\end{bmatrix}
\end{align*}
which is also negative semidefinite.

Therefore, at least if the parameter values of $\beta$ and $\eta$ are fixed, the concavity of the log-likelihood $\log L_g(\mu, \alpha_s, \alpha_c | \beta, \eta)$ as a function of $\mu, \alpha_s, \alpha_c$ can be guaranteed,
and we can assume that a numerical solver for the optimization will find the global maximum.
Consider the points of $\beta$ and $\eta$ over a sufficiently large and dense grid.
For each $\beta$ and $\eta$, we can find the conditional global maximum of $\log L_g(\mu, \alpha_s, \alpha_c | \beta, \eta)$ over the parameter space of $\mu, \alpha_s, \alpha_c$ due to the conditional concavity.
Now the interest is the shape of the conditional global maximums over the grid of $\beta$ and $\eta$.
If the conditional global maximum is still concave and the maximum of the conditional maximums can be found, 
then the numerical optimizer can be checked to determine if it finds the overall global maximum.
In the numerical procedure later, the shape of the conditional log-likelihood function over a grid of $\beta$ and $\eta$ will be examined.

\section{Simulation example}\label{Sect:simul}

\subsection{Symmetric model}\label{Subsect:geo}
In this paper, the specific distribution of the mark is  generally not assumed.
For the simulation study, however, it is necessary to assume a specific conditional distribution of the mark sizes to generate paths.
Suppose that the mark $k_i$ follows the conditional geometric distribution with
$$ p(\lambda_{gi}(u)) = \frac{1}{\min(d + c \lambda_{gi}(u), U)}$$
for some constants $c, d$ and $U$, i.e.,
$$ \mathbb P( k_i = n | \lambda_{gi}(u)) = p(\lambda_{gi}(u))(1-p(\lambda_{gi}(u)))^{n-1}.$$
This suggests that the conditional expectation of the mark size $k_i$ with a given ground intensity $\lambda_{gi}$ is
$$\E[k_i |\lambda_{gi}(u)]  = \min(d + c \lambda_{gi}(u), U)$$
for some slope $c$, intercept $d$, and upper bound $U$.
It is needed to set the upper bound for the conditional mean of the mark size to prevent a blow up of the marked Hawkes process.
With this setting, the conditional expectation of the impact depends on the current intensity:
$$ \E[g(k_i) | \lambda_{gi}(u)] = \frac{ 1+\{ \min(d + c \lambda_{gi}(u), U) - 1\} \eta}{\E[1+(k_i -1)\eta]}.$$

With each differently presumed conditional distribution and parameter setting, 500 sample paths of the two dimensional marked Hawkes process and corresponding ground intensities are generated.
The time horizon for the path is set to be 5.5 hours, which equals the time horizon used in empirical studies later.
The simulation mechanism is similar to the simple Hawkes models but it needs to incorporate the mark size and its future impacts.

With the realized interarrival times of the generated path and realized mark sizes, the maximum likelihood estimation is performed on the maximized $\log L_g$ in Eq.~\eqref{Eq:likelihood} and the results are listed in Table~\ref{Table:simul}.
The table consists of three panels with different parameter settings, which are presented in `True' rows.
For the first panel, $c=0.15, d=1.0, U=2.0$, for the second panel, $c = 0.18, d=1.0, U = 2.2$;
for the third panel, $c = 0.18, d=1.0, U = 3.5$;
 and for the fourth panel, $c=0.25, d=1.0, U=9$.

Because the likelihood for the ground processes was calculated, the estimates of $\mu, \alpha_s, \alpha_c, 
\beta$ and $\eta$ were computed but not for $c, d$ and $U$.
The sample mean of the estimates with 500 sample paths are reported in the row `mean'.
The row `std.' presents the sample standard deviations of each estimate with 500 samples.
The table shows that the estimates are consistent with the true values.

\begin{table}
\caption{Simulation study for the marked Hawkes model with 500 sample paths}\label{Table:simul}
\centering
\begin{tabular}{cccccccc}
\hline
 & $\mu$ & $\alpha_s$ & $\alpha_c$ & $\beta$ & $\eta$ & TSRV & H.Vol. \\
\hline
True & 0.1000 & 0.9500 & 0.8200 & 2.2500 & 0.1900 &  \\
mean & 0.0999 & 0.9496 & 0.8199 & 2.2487 & 0.1882 & 0.2807 & 0.2798 \\
std. & 0.0021 & 0.0194 & 0.0190 & 0.0326 & 0.0170 & 0.0333 & 0.0126\\
\hline
True & 0.1500 & 0.6200 & 0.5000 & 1.9000 & 0.2200 &  \\
mean & 0.1499 & 0.6193 & 0.5008 & 1.8999 & 0.2177 & 0.1317 & 0.1312 \\
std. & 0.0027 & 0.0172 & 0.0149 & 0.0419 & 0.0502 & 0.0101 & 0.0018\\
\hline
True & 0.3000 & 1.0500 & 0.9200 & 2.3000 & 0.0100\\
mean & 0.3003 & 1.0508 & 0.9206 & 2.3014 & 0.0094 & 0.6391 & 0.6291\\
std. & 0.0051 & 0.0136 & 0.0135 & 0.0229 & 0.0051 & 0.0562 & 0.0177\\
\hline
True & 0.2000 & 1.1000 & 1.2600 & 2.5700 & 0.0100\\
mean & 0.2002 & 1.0997 & 1.2610 & 2.5702 & 0.0099 & 4.4299 & 4.3628\\
std. & 0.0039 & 0.0154 & 0.0164 & 0.0238 & 0.0007 & 1.6973 & 0.6811\\
\hline
\end{tabular}
\end{table}

Figure~\ref{Fig:max} presents the global maximums of the conditional log-likelihood functions with various $\beta$ and $\eta$ under the first simulation setting as explained in the previous section.
The numerically computed conditional maximum points shows the concavity and it is expected that the numerical optimizer will find the global maximum in the procedure.

\begin{figure}
\centering
\includegraphics[width = 0.45\textwidth]{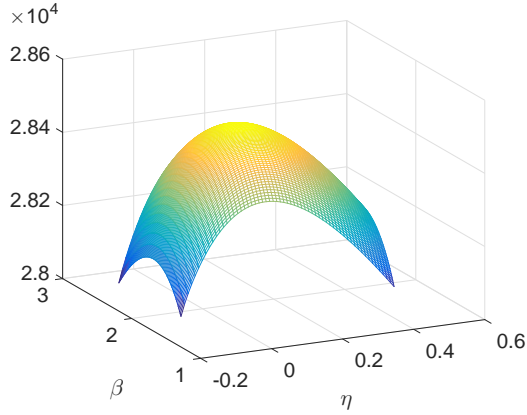}
\caption{Maximums of conditional log-likelihood function over $\beta$ and $\eta$}\label{Fig:max}
\end{figure}
 
To calculate the volatility, we need to compute $K$s in Notation~\ref{Notation}
which involves several unconditional expectations with the mark size, intensities and counting processes.
In the absence of the exact formula of the expectations due to the complicated relationship between the mark and the intensities, the following statistics are used for the expectations instead:
\begin{align}
\E[\lambda_{gi}(t)] &\approx \frac{1}{T}N_{gi}(T) \label{Eq:El}\\
\E[k_i \lambda_{gi}(t)]  &\approx \frac{1}{T}N_i(T)\label{Eq:Ekl}\\
\E[k_i^2 \lambda_{gi}(t)] &\approx \frac{1}{T}\int_{(0,T] \times \mathbb Z^+} k_i^2 N_{i}(\D u \times \D k_i)\label{Eq:Ek2l}\\
\E[\lambda^2_{gi}(t)] &\approx \frac{1}{T}\int_{(0,T] \times \mathbb Z^+} \lambda_{gi}(u)  N_{i}(\D u \times \D k_i)\label{Eq:ElN}\\
\E[k_i \lambda^2_{gi}(t)] &\approx \frac{1}{T}\int_{(0,T] \times \mathbb Z^+} k_i \lambda_{gi}(u) N_{i}(\D u \times \D k)\label{Eq:EklN}\\
\frac{1}{t}\E[\lambda_{gi}(t)N_i(t)] &\approx \frac{2}{T^2}\int_{(0,T]\times \mathbb Z^+} N_{i}(u-) N_{i}(\D u \times \D k_i)\label{Eq:ENN}\\
\frac{1}{t}\E[k_i \lambda_{gi}(t)N_i(t)] &\approx \frac{2}{T^2}\int_{(0,T] \times \mathbb Z^+} k_i N_{i}(u-) N_{i}(\D u \times \D k_i)\label{Eq:kNN}
\end{align}
where $[0, T]$ is an observation time interval.
To calculate the right hand sides, the realized $k_i$, $N_i$ and $N_{gi}$ of the generated paths and inferred $\lambda_{gi}$ from the estimates of $\mu, \alpha_s, \alpha_c, \beta$ and $\eta$ are used.
The inferred intensities $\lambda_{gi}$ are computed using Eqs.~\eqref{Eq:lambdag1}~\eqref{Eq:lambdag2}, once $\mu, \alpha_s, \alpha_c, \beta$ and $\eta$ are estimated.

The expectations of the ground intensities are approximated by the sample average of the total number of corresponding up or down moves per unit time in Eq.~\eqref{Eq:El}.
Similarly for $\E[k_i \lambda_{gi}(t)]$ where the counting process $N_i$ is used instead to compute the sample average.

The right hand side of Eq.~\eqref{Eq:Ek2l} is the sample average of the total number of jumps per unit time with weight $k_i^2$ for each jump and this approximates the left hand side.
For Eqs.~\eqref{Eq:ElN}~and~\eqref{Eq:EklN}, consider 
\begin{align*}
&\E\left[ \int_{(0,T] \times \mathbb Z^+}  \lambda_{gi}(u) N_{i}(\D u \times \D k) \right] = \int_{0}^T \E [\lambda^2_{gi}(t) ] \D t = T \E [ \lambda^2_{gi}(t) ]\\
&\E\left[ \int_{(0,T] \times \mathbb Z^+} k\lambda_{gi}(u) N_{i}(\D u \times \D k) \right] = \int_{0}^T \E [k_i \lambda^2_{gi}(t) ] \D t = T \E [k_i \lambda^2_{gi}(t) ].
\end{align*}
Figure~\ref{Fig:convergence} presents the convergences of the computed $K$ and $K^{(2)}$ with the above method as the sample size increases with the first parameter set in the simulation.

\begin{figure}
\centering
\includegraphics[width=0.4\textwidth]{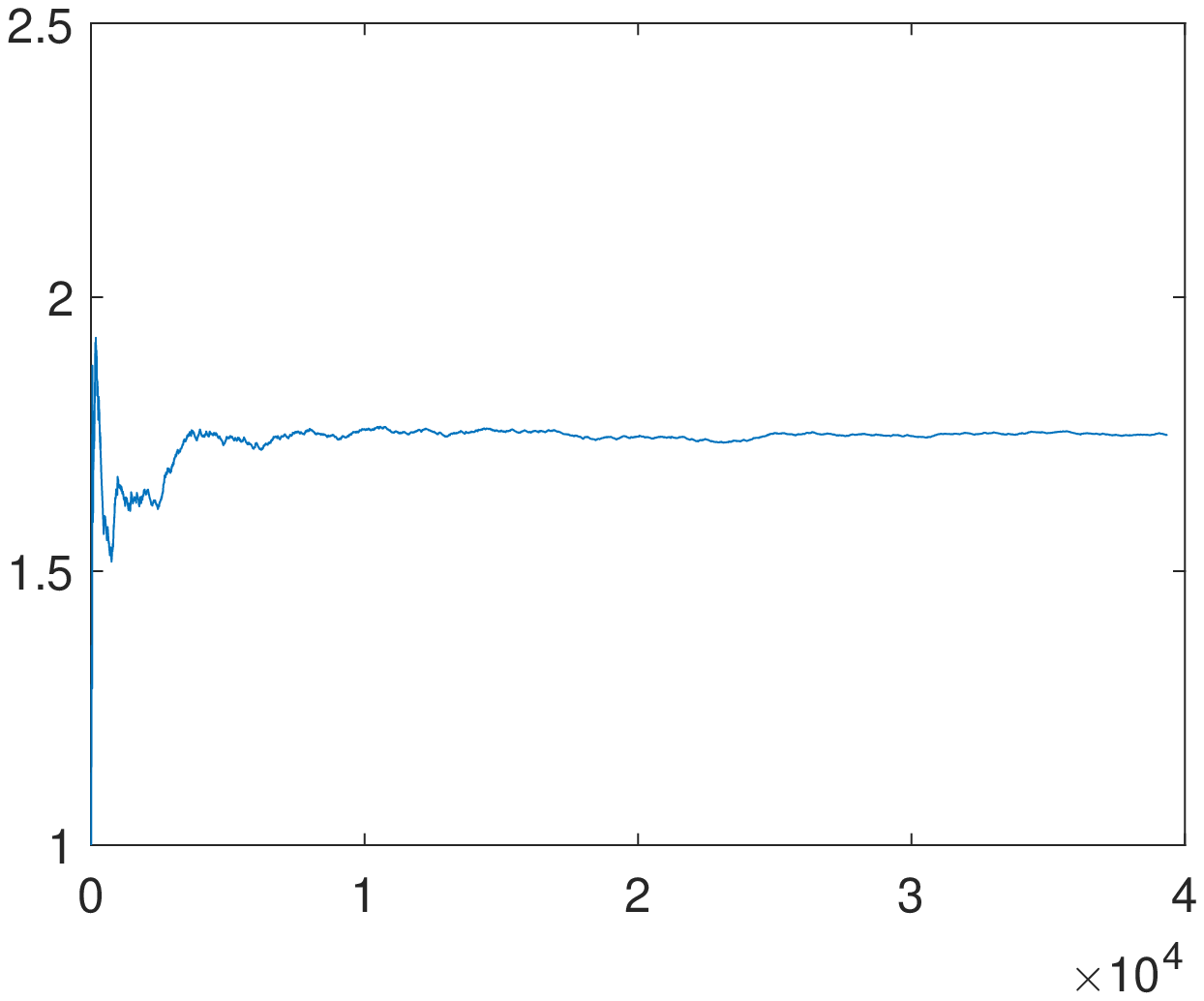}
\includegraphics[width=0.4\textwidth]{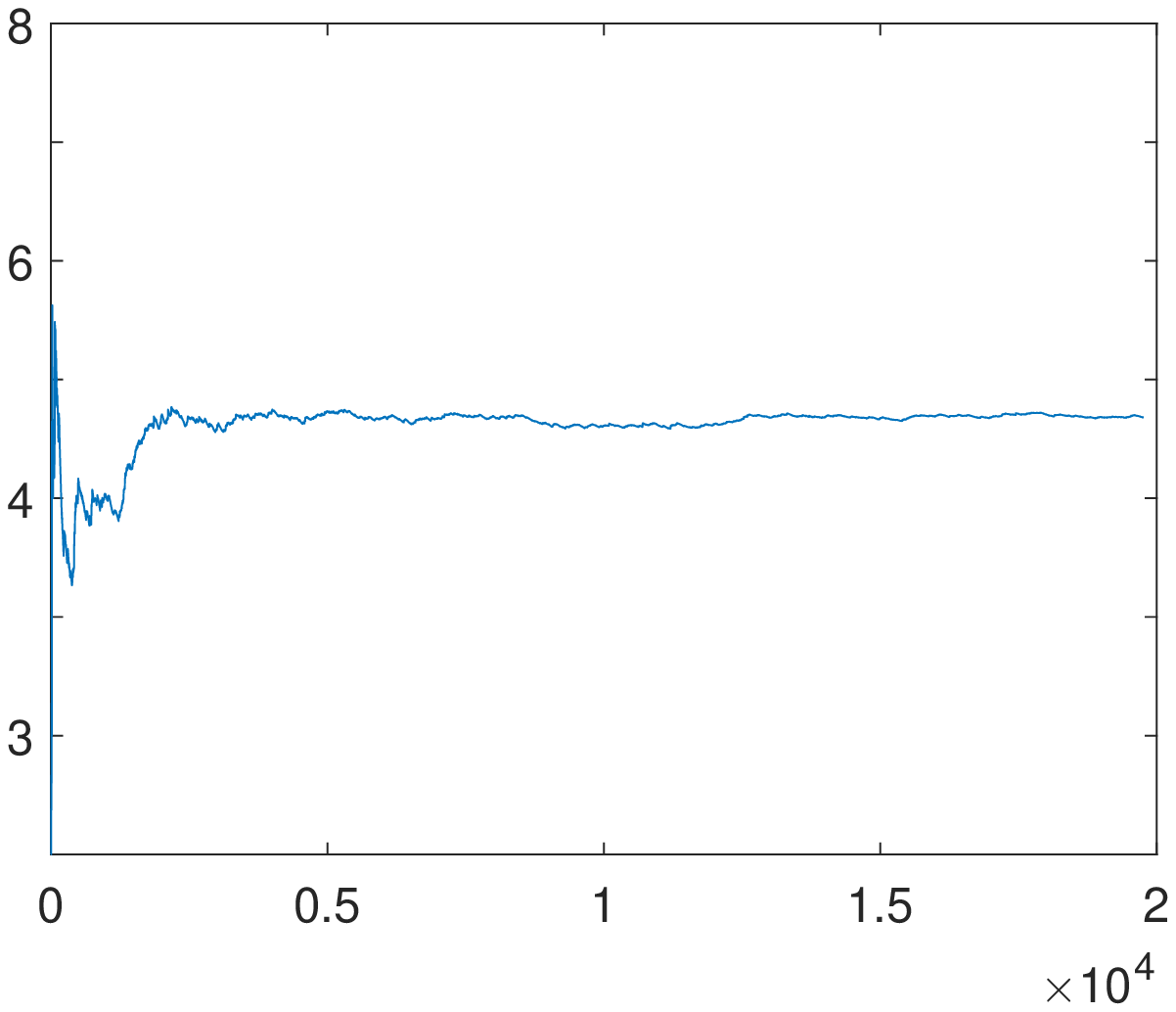}
\caption{Convergences of $K_{1\lambda_{g1}}$ and $K_{1\lambda_{g1}}^{(2)}$ as sample size increases}\label{Fig:convergence}
\end{figure}

Furthermore, $\E[\lambda_{gi}(t)N_i(t)] = c_1 t + c_2 $ for some constants $c_1 $ and $c_2$, according to \ref{Proof:var} and  $\E[\lambda_{gi}(t)N_i(t)]/t $ converges to $c_1$ as $t$ increases.
Note that
\begin{align*}
\frac{2}{T^2} \E\left[ \int_{(0,T] \times \mathbb Z^+} N_{i}(u-) N_{gi}(\D u \times \D k) \right] &= \frac{2}{T^2}\int_0^T \E[\lambda_{gi}(t)N_i(t) ] \D t \\
&=  c_1 + \frac{2c_2}{T} \approx  c_1 \approx \frac{1}{2t} \E[\lambda_{gi}(t)N_i(t)]
\end{align*}
with approximations for a large enough $t$ and $T$.
A similar argument was applied to Eq.~\eqref{Eq:kNN}.

The column `H. vol' is for the mean of the volatility estimates computed by the likelihood estimates of $\mu, \alpha_s, \alpha_c, \beta, \eta$ and $K$s using Remark~\ref{Remark:vol}.
This was compared with the two scale realized volatility (TSRV) in the column `TSRV' proposed by \cite{Zhang2005}.
The small time scale was set to be one second and the large time scale was set to be five minutes for the TSRV computation.
The results show that the Hawkes volatility and TSRV are similar.
The standard deviations of the Hawkes volatility are smaller than those of the TSRV for all simulation cases.

\subsection{Other examples}\label{Subsect:full}

This subsection examines the cases where there is a discrepancy between the Hawkes volatility and the realized volatility.
First, the fully characterized Hawkes model is examined, i.e., the coefficients of the branching matrix is represented by 
$$q_{ij} = \frac{\alpha_{ij}}{\beta_{ij}} \E[1+(k_j -1)\eta]$$
with the linear impact function of Assumption~\ref{Assumption}~(ii).
Under this setting, no symmetry is guaranteed.
Recall that in the symmetric model, $\alpha_s = \alpha_{11} = \alpha_{22}$, $\alpha_c = \alpha_{12} = \alpha_{21}$, and $\beta = \beta_{11} = \beta_{12} = \beta_{21} = \beta_{22}$.

Table~\ref{Table:full} lists the estimation results of the fully characterized Hawkes model with simulated paths with the presumed parameters.
The presumed parameters are presented in the `true' column and 500 sample paths are generated over a one day time horizon, more precisely, 5.5 hours as in the previous example.
The columns `full' report the means and standard deviations of the estimates under the maximum likelihood estimation with the fully characterized Hawkes model.
The likelihood estimations were also performed under the symmetric Hawkes model, even though the paths are generated by the fully characterized Hawkes model.
The results are presented in the columns `symmetric' at the centers of the rows of corresponding parameters.
For example, $\mu$ is presented at the center of two rows of $\mu_1$ and $\mu_2$,
$\alpha_s$ is presented at the center of two rows of $\alpha_{11}$ and $\alpha_{22}$, etc..

The `S.Vol.' represents the sample volatility of the return computed by the sample standard deviation of the closing stock prices generated by the 500 sample paths.
The TSRV and marked Hawkes volatility with corresponding standard deviations are reported in column `TSRV' and `H.Vol', respectively.
The Hawkes volatility is calculated using the estimates of the symmetric Hawkes model.
Two volatilities are biased around 4\% compared to the sample volatility.
The TSRV are larger than the sample volatilities and the Hawkes volatilities are smaller in these cases.

\begin{table}
\caption{Fully characterized Hawkes model with 500 sample paths}\label{Table:full}
\centering
\begin{tabular}{ccccccc|ccccc}
\hline
 & & \multicolumn{2}{c}{full} & \multicolumn{2}{c}{symmetric}  & & &\multicolumn{2}{c}{full} & \multicolumn{2}{c}{symmetric} \\
 & true & mean & std. & mean & std. & & true & mean & std. & mean & std.\\
\hline
$\mu_1$ & 0.1461 & 0.1467 & 0.0038 & \multirow{2}{*}{0.1345} & \multirow{2}{*}{0.0026} & 
 & 0.1130 & 0.1131 & 0.0030 & \multirow{2}{*}{0.1152} & \multirow{2}{*}{0.0024}\\
$\mu_2$ & 0.1155 & 0.1159 & 0.0032 & & &
 & 0.1149 & 0.1153 & 0.0033 &\\
$\alpha_{11}$ & 0.3185 & 0.3204 & 0.0150 & \multirow{2}{*}{0.4102} & \multirow{2}{*}{0.0148} &
 & 0.4994 & 0.5031 & 0.0242 & \multirow{2}{*}{0.5252} & \multirow{2}{*}{0.0159}\\
$\alpha_{22}$ & 0.3821 & 0.3865 & 0.0219 & & &
 & 0.4682 & 0.4799 & 0.0210 &\\
$\alpha_{12}$ & 0.9812 & 0.9848 & 0.0282  & \multirow{2}{*}{1.1512} & \multirow{2}{*}{0.0223} &
 & 0.5937 & 0.5992 & 0.0232 & \multirow{2}{*}{0.7012} & \multirow{2}{*}{0.0199}\\
$\alpha_{21}$ & 1.4949 & 1.5000 & 0.0334 & & & 
 & 0.9754 & 0.9854 & 0.0368  &\\
$\beta_{11}$ & 1.1799 & 1.1893 & 0.0567 & \multirow{4}{*}{2.0547} & \multirow{4}{*}{0.0315} & 
 & 1.8305 & 1.8512 & 0.0948 & \multirow{4}{*}{1.8744} & \multirow{4}{*}{0.0364} \\
$\beta_{22}$ & 1.9553 & 1.9840 & 0.1195 & & & 
 & 1.4706 & 1.5142 & 0.0666 &\\
$\beta_{12}$ & 2.0952 & 2.1077 & 0.0697 & & & 
 & 1.5963 & 1.6110 & 0.0624 &\\
$\beta_{21}$ & 2.5030 & 2.5132 & 0.0587 & & &
 & 2.7850 & 2.8064 & 0.1036 &\\
$\eta$ & 0.1488 & 0.1501 & 0.0235 & 0.1424 & 0.0255 &
 & 0.1761 & 0.1768 & 0.0216 & 0.1756 & 0.0225\\
\hline
&  \multicolumn{5}{c}{ $c=0.1, d= 1.0, U =2.0$} & & \multicolumn{5}{c}{ $c=0.08, d= 1.5, U = 3.0$} \\
\hline
 & S.Vol. & TSRV & std. & H.Vol. & std.  &  & S.Vol. & TSRV & std. & H.Vol. & std.  \\
 & 0.1405 & 0.1463 & 0.0146 & 0.1346 & 0.0051 & & 0.1853 & 0.1897 & 0.0161 & 0.1795 & 0.0044\\
\hline
\end{tabular}
\end{table}

Second, the symmetric marked Hawkes models were examined, where the model parameters change during the sample period.
Table~\ref{Table:timevarying} lists the estimation results with the symmetric Hawkes models of the 5.5 hour's time horizon but the model parameter of the first one hour of the period is according to the row of `True 1' and in the rest of the period, the model follows `True 2'.
In the first panel, the varying part is the upper bound of the conditional mean of the mark distribution.
In other words, during the first part of the sample period, the price process is quite volatile due to the possible large size of the mark, and the remaining part is rather stable.
This mimics the case of the 2010 Flash Crash and the empirical analysis will be performed later.
The result shows the discrepancy between the TSRV and the Hawkes volatility which are both less than the sample volatility and the TSRV is even less than the Hawkes volatility.

\begin{table}
\caption{Simulation study for the marked Hawkes model with 500 sample paths with time varying parameters}\label{Table:timevarying}
\centering
\begin{tabular}{cccccccccccc}
\hline
 & $\mu$ & $\alpha_s$ & $\alpha_c$ & $\beta$ & $\eta$ & $c$ & $d$ & $U$ & S.Vol. & TSRV & H.Vol.  \\
\hline
True 1 & 0.1000 & 1.1000 & 1.2600 & 2.5700 & 0.0100 & 0.2500 & 1 & 7   \\
True 2 & 0.1000 & 1.1000 & 1.2600 & 2.5700 & 0.0100 & 0.2500 & 1 & 1.5 \\
mean & 0.1017 & 1.1017 & 1.2662 & 2.5667 & 0.0085 &  & &  & 0.6431  & 0.5801 & 0.6288  \\
std. & 0.0022 & 0.0205 & 0.0265 & 0.0319 & 0.0039 &  & & & & 0.1990 & 0.1381\\ 
\hline
True 1 & 0.1000 & 1.1000 & 1.2600 & 2.5700 & 0.1000 & 0.1000 & 1 & 7 \\
True 2 & 0.0500 & 0.5000 & 0.5000 & 2.0000 & 0.1000 & 0.1000 & 1 & 1.5 \\
mean & 0.0375 & 1.0162 & 1.1190 & 2.3567 & 0.0233 & & & & 0.2496 & 0.2204 & 0.2300 \\
std. & 0.0010 & 0.0302 & 0.0347 & 0.0438 & 0.0142 & & & & & 0.0378 & 0.0205 \\
\hline
\end{tabular}
\end{table}

\section{Empirical study}\label{Sect:empirical}
\subsection{Data}

The empirical studies used the ultra high-frequency tick-by-tick data of some major stock prices consisting of several years with the best bid and ask quotes reported in the New York Stock Exchange (NYSE).
The time horizon of the sample for each day is set to be from 10:00 to 15:30.
The data of 30 minutes immediately after the opening and before the closing time were not used to reduce the seasonality effects.
The price movement patterns are usually different at the near opening and closing from the rest of the day.

The jump sizes of the price movements of equities in the S\&P 500 are not constant over time particularly when the price of the equity is high and hence the ratio between the price and the minimum tick size in transaction on the NYSE, \$0.01, is high.
The tick size of the NYSE was reduced from \$1/8 to \$1/16 in 1997 and from \$1/16 to \$0.01 in 2001.
In this paper, the mid-price movements is considered for the marked Hawkes modeling to remove the bid-ask bounce and hence the minimum jump size is the half tick size, \$0.005.

In the original data, the time resolution of the record is one second. 
If more than one timestamps of the price changes are reported for one second, 
then the reported events are distributed over a one second interval to equidistant finer partitions. 

\subsection{Unconditional distribution of mark}

Table~\ref{Table:tick} compares the percentage of the mark size of IBM, GE and CVX from 2008 to 2011, i.e., the unconditional distribution of mark sizes are reported in the table.
IBM and CVX have a range of mark sizes over the years but GE's mark size distributions concentrates on the minimum mark size.
This is because the price of IBM and CVX is relatively high (IBM is around \$150 and CVX is around \$100), whereas the price of GE is around \$25.
The unconditional distributions of the marks have exponentially decreasing shapes which are similar to the geometric distributions.
The empirical distribution of the marks of IBM, 2010 and 2011, was compared with the geometric distributions in Figure~\ref{Fig:mark}.
The solid lines are for the empirical distribution and the dashed lines are for the geometric distribution fitted by matching the first moments of the empirical and geometric distributions.

\begin{table}
\caption{Mark size distribution (\%) of IBM (left) and GE (center) and CVX (right) from 2008 to 2011}\label{Table:tick}
\centering
\begin{tabular}{ccccc|cccc|cccc}
\hline
mark & 2008 & 2009 & 2010 & 2011 & 2008 & 2009 & 2010 & 2011 & 2008 & 2009 & 2010 & 2011\\
\hline
1 & 51.80 & 59.98 & 80.88 & 57.04 & 89.81 & 98.39 & 99.61 & 99.68 & 60.69 & 68.30 & 91.55 & 77.21\\
2 & 21.53 & 20.57 & 13.96 & 18.41 & 7.61 & 1.51 & 0.34 & 0.30 & 20.02 & 19.75 & 7.32 & 16.20\\
3 & 11.22 & 10.89 & 3.53 & 9.54 & 1.55 & 0.00 & 0.00 & 0.00 & 9.25 & 8.43 & 0.77 & 4.77\\
4 & 6.36 & 5.50 & 1.01 & 5.80 & 0.52 & 0.00 & 0.00 & 0.00 & 4.78 & 2.70 & 0.15 & 1.28\\
5 & 3.61 & 1.98 & 0.35 & 3.68 & 0.19 & 0.00 & 0.00 & 0.00 & 2.49 & 0.62 & 0.06 & 0.32\\
6 & 2.04 & 0.66 & 0.10 & 2.22 & 0.00 & 0.00 & 0.00 & 0.00 & 1.29 & 0.13 & 0.03 & 0.12\\
7 & 1.17 & 0.24 & 0.00 & 1.37 & 0.00 & 0.00 & 0.00 & 0.00 & 0.66 & 0.04 & 0.02 & 0.05\\
8 & 0.71 & 0.01 & 0.00 & 0.81 & 0.00 & 0.00 & 0.00 & 0.00 & 0.35 & 0.01 & 0.01 & 0.02\\
\hline
\end{tabular}
\end{table}

\begin{figure}
	\centering
	\begin{subfigure}[b]{0.45\textwidth}
		\includegraphics[width=\textwidth]{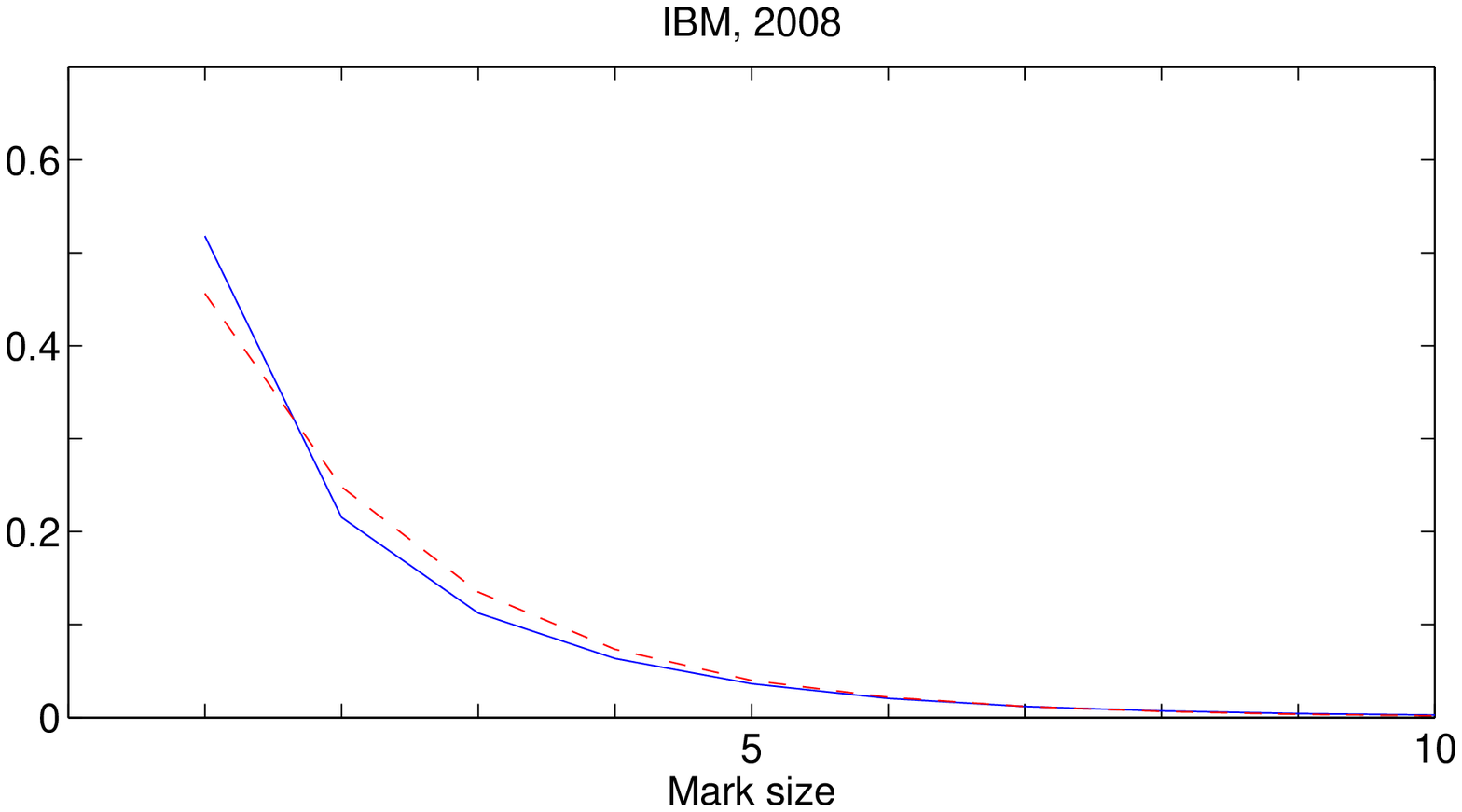}
		\caption{IBM, 2008}
	\end{subfigure}
	\begin{subfigure}[b]{0.45\textwidth}
		\includegraphics[width=\textwidth]{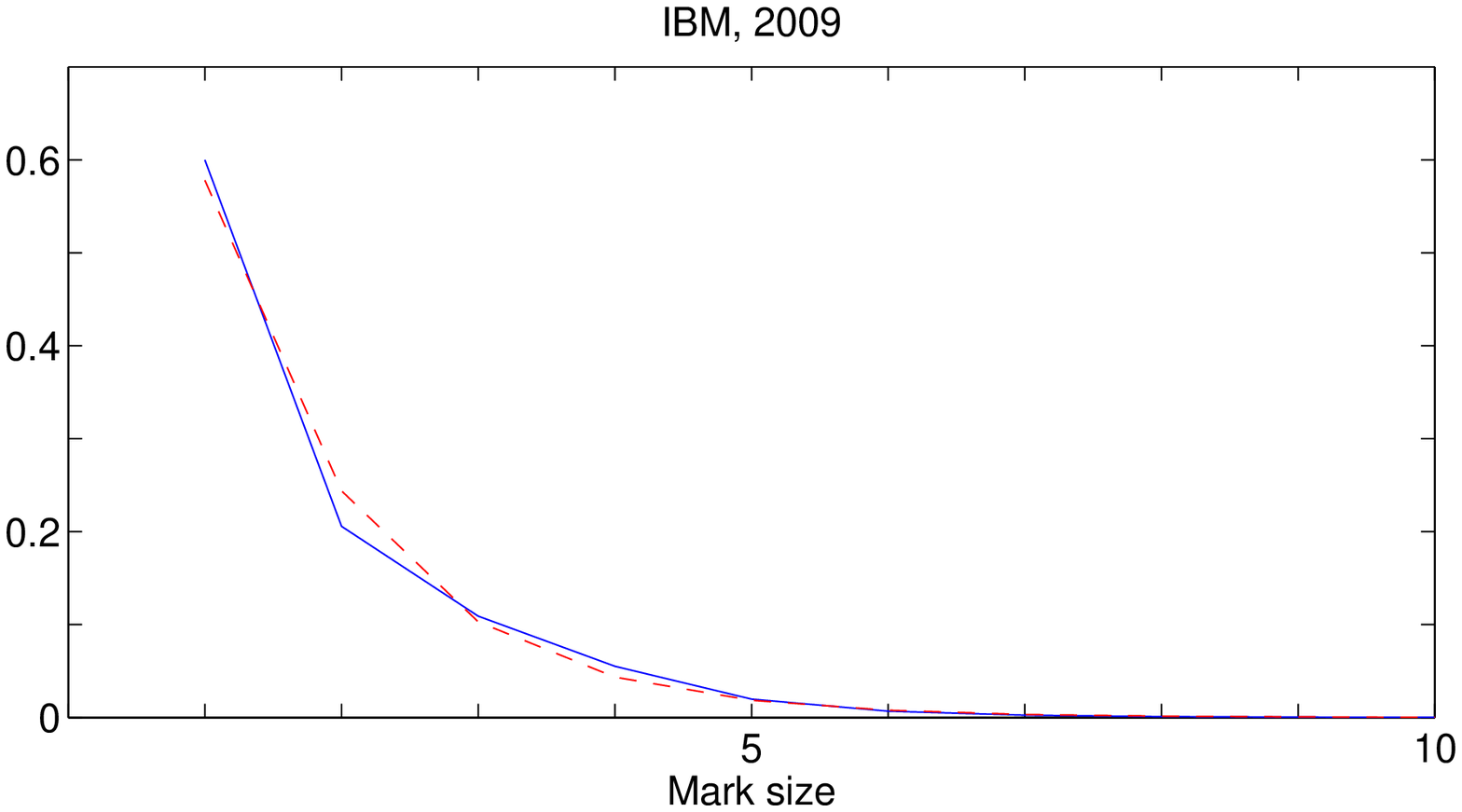}
		\caption{IBM, 2009}
	\end{subfigure}
	\begin{subfigure}[b]{0.45\textwidth}
		\includegraphics[width=\textwidth]{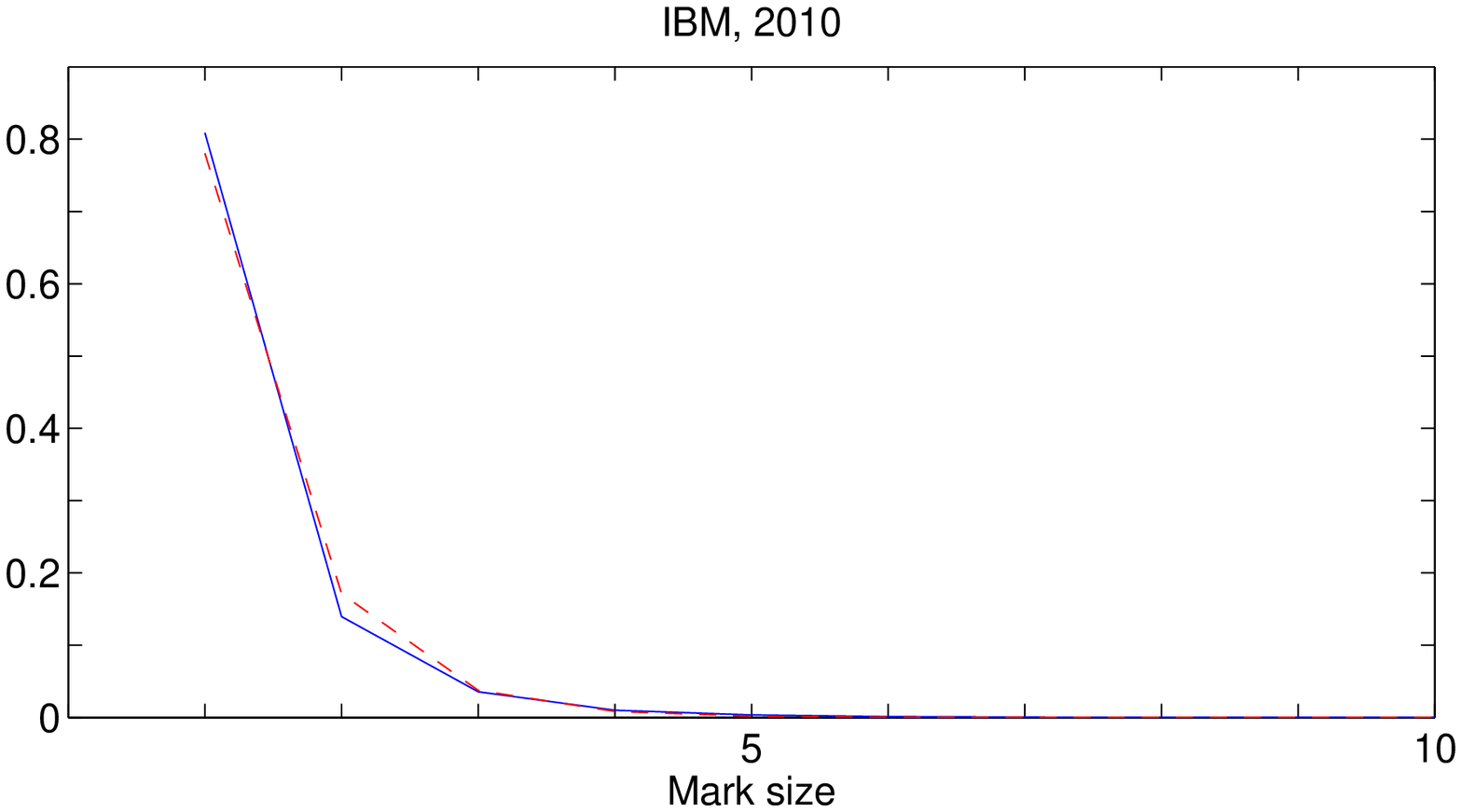}
		\caption{IBM, 2010}
	\end{subfigure}
	\begin{subfigure}[b]{0.45\textwidth}
		\includegraphics[width=\textwidth]{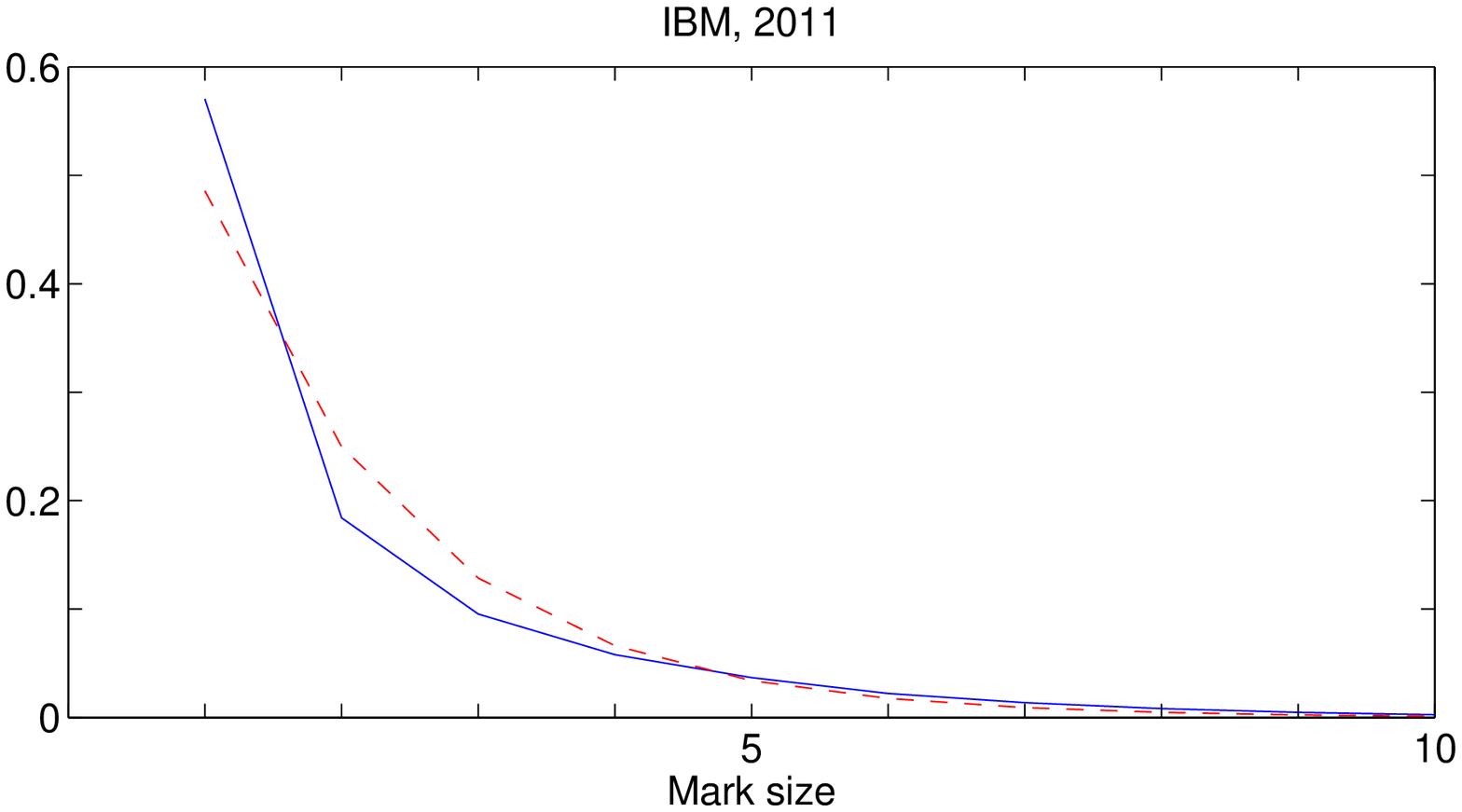}
		\caption{IBM, 2011}
	\end{subfigure}
	\caption{Empirical unconditional distribution of mark}\label{Fig:mark}
\end{figure}

\subsection{Mark size and intensity}

This subsection examines the dependence between the mark size and the ground intensity, and the number of expected events over unit interval.
The empirical evidence shows that the mark size and the current ground intensity are significantly related to each other.
First, the empirical conditional expectation of the intensities with given mark size, $\E[\lambda_{gi}(t) | k_i ]$, were calculated.
The proxy intensities are introduced because the ground intensities are unobservable.
The proxy intensities for the up, down and total jumps are defined by the numbers of up, down and total jumps, respectively, over a fixed time period, which is ended just before the time of the jump, divided by the length of the period.
The period for the proxy intensities was chosen as ten seconds.
Mathematically, the up proxy intensity for the mark $k_1$, which takes place at time $t$ is represented by
$ N_{g1}(t+\tau) - N_{g1}(t) $,
where $\tau$ is the length of the period.

Table~\ref{Table:mark_size} presents the calculated sample mean and standard error of the proxy intensities for each mark size and for each year of 2010 and 2011, IBM.
For example, for the mark size 6, there are 86,738 up jumps reported and the sample mean of the up proxy intensity is 3.4784 and the sample standard error is 0.0145.
Note that the intensity of 3.4784 implies that the expected number of movements over unit time, which was set to 1 second, is approximately 3.4784.
The table shows that the proxy intensities increase with increasing given mark size.
The negative integers in the column of the mark size represent the down jump of the price.
The proxy intensities were also calculated for five seconds time horizon.
The results are similar to the previous case of 10 seconds time horizon so the results are not shown.

\begin{table}
\caption{Relationship between the mark size and the mean of proxy intensity (10 seconds), IBM}\label{Table:mark_size}
\centering
\begin{tabular}{c|ccc|ccccccc}
\hline
&  & 2011 & & & 2010 & &\\
mark size & up & down & total & up & down & total\\
\hline
6 & 3.4784 & 3.6400 & 7.1184 & 6.4059 & 6.5618 & 12.9676\\
 & (0.0145) & (0.0149) & (0.0292) & (0.2306) & (0.1171) & (0.1150)\\
5 & 3.1191 & 3.2461 & 6.3652 & 4.3289 & 4.5448 & 8.8738\\
 & (0.0105) & (0.0106) & (0.0209) & (0.0890) & (0.0449) & (0.0448)\\
4 & 2.8206 & 2.9237 & 5.7443 & 3.4963 & 3.6881 & 7.1844\\
 & (0.0078) & (0.0078) & (0.0155) & (0.0439) & (0.0221) & (0.0221)\\
3 & 2.5683 & 2.6551 & 5.2233 & 2.7119 & 2.8717 & 5.5836\\
 & (0.0061) & (0.0060) & (0.0120) & (0.0182) & (0.0092) & (0.0092)\\
2 & 2.3051 & 2.3795 & 4.6846 & 2.1799 & 2.2892 & 4.4691\\
 & (0.0042) & (0.0041) & (0.0083) & (0.0071) & (0.0036) & (0.0036)\\
1 & 2.3131 & 2.3501 & 4.6632 & 1.7682 & 1.8481 & 3.6163\\
 & (0.0028) & (0.0028) & (0.0056) & (0.0024) & (0.0012) & (0.0012)\\
$-1$ & 2.2403 & 2.4140 & 4.6543 & 1.7488 & 1.8883 & 3.6371\\
 & (0.0028) & (0.0028) & (0.0056) & (0.0024) & (0.0012) & (0.0012)\\
$-2$ & 2.3004 & 2.4376 & 4.7381 & 2.1655 & 2.3022 & 4.4687\\
 & (0.0042) & (0.0042) & (0.0084) & (0.0072) & (0.0036) & (0.0036)\\
$-3$ & 2.5924 & 2.7167 & 5.3090 & 2.7264 & 2.8552 & 5.5816\\
 & (0.0062) & (0.0062) & (0.0123) & (0.0185) & (0.0094) & (0.0093)\\
$-4$ & 2.8174 & 2.9552 & 5.7726 & 3.4806 & 3.6163 & 7.0969\\
 & (0.0078) & (0.0079) & (0.0156) & (0.0432) & (0.0218) & (0.0217)\\
$-5$ & 3.1078 & 3.2399 & 6.3477 & 4.4487 & 4.5968 & 9.0454\\
 & (0.0104) & (0.0106) & (0.0209) & (0.0902) & (0.0466) & (0.0469)\\
$-6$ & 3.4606 & 3.5845 & 7.0451 & 6.1019 & 6.2334 & 12.3352\\
 & (0.0145) & (0.0149) & (0.0292) & (0.2141) & (0.1083) & (0.1074)\\
\hline
\end{tabular}
\end{table}

Second, Table~\ref{Table:mark_size2} presents the relationship between the mark sizes and the inferred ground intensities with the linear impact function using the IBM tick data.
Prior to calculating the inferred ground process, the parameters $\omega, \alpha_s, \alpha_c, \beta,$ and $\eta$ were estimated by maximizing $\log L_g$ defined in Eq.~\eqref{Eq:likelihood}.
The estimations were performed on a daily basis and the detailed estimation results will be demonstrated later.
Subsequently, the inferred ground intensities were computed with the estimates of $\omega, \alpha_s, \alpha_c, \beta$, and $\eta$ using the definition of the ground intensities in Eqs.~\eqref{Eq:lambdag1}~and~\eqref{Eq:lambdag2}.
The sample mean and sample standard errors of the inferred ground intensities for each mark size is reported.
Similarly with the case of the proxy intensities, the inferred ground intensities increase with increasing given mark size.
This implies that if a large size of the mark is observed, it is probably based on the large ground intensities.

\begin{table}
\caption{Relationship between the mark size and the mean of inferred ground intensity with the linear impact function, IBM}\label{Table:mark_size2}
\centering
\begin{tabular}{c|ccc|ccccccc}
\hline
&  & 2011 & & & 2010 & &\\
mark size & $\lambda_{g1}$ & $\lambda_{g2}$ & $\lambda_g$ & $\lambda_{g1}$ & $\lambda_{g2}$ & $\lambda_g$\\
\hline
6 & 5.1666 & 5.1108 & 10.2774 & 8.1033 & 8.0330 & 16.1364\\
 & (0.0222) & (0.0221) & (0.0442) & (0.1366) & (0.1370) & (0.2734)\\
5 & 4.7423 & 4.6834 & 9.4257 & 6.5586 & 6.4855 & 13.0443\\
 & (0.0165) & (0.0164) & (0.0328) & (0.0621) & (0.0620) & (0.1240)\\
4 & 4.4092 & 4.3565 & 8.7656 & 5.4497 & 5.3975 & 10.8472\\
 & (0.0126) & (0.0125) & (0.0251) & (0.0338) & (0.0337) & (0.0674)\\
3 & 4.0264 & 3.9842 & 8.0106 & 4.2182 & 4.1806 & 8.3988\\
 & (0.0094) & (0.0094) & (0.0188) & (0.0145) & (0.0145) & (0.0290)\\
2 & 3.5870 & 3.5505 & 7.1375 & 3.4705 & 3.4476 & 6.9181\\
 & (0.0066) & (0.0066) & (0.0131)& (0.0064) & (0.0064) & (0.0128)\\
1 & 3.5924 & 3.5580 & 7.1504 & 2.5428 & 2.5321 & 5.0749\\
 & (0.0057) & (0.0057) & (0.0114) & (0.0021) & (0.0021) & (0.0041)\\
$-1$ & 3.5568 & 3.5942 & 7.1509 & 2.5702 & 2.5836 & 5.1538\\
 & (0.0057) & (0.0057) & (0.0113) & (0.0021) & (0.0021) & (0.0041)\\
$-2$ & 3.5858 & 3.6270 & 7.2128 & 3.4730 & 3.4978 & 6.9708\\
 & (0.0066) & (0.0066) & (0.0132) & (0.0064) & (0.0064) & (0.0129)\\
$-3$ & 4.0451 & 4.0927 & 8.1378 & 4.2369 & 4.2781 & 8.5150\\
 & (0.0096) & (0.0096) & (0.0192) & (0.0148) & (0.0148) & (0.0296)\\
$-4$ & 4.3580 & 4.4177 & 8.7757 & 5.3775 & 5.4380 & 10.8154\\
 & (0.0126)& (0.0127) & (0.0253) & (0.0333) & (0.0334) & (0.0667)\\
$-5$ & 4.6872 & 4.7587 & 9.4459 & 6.4356 & 6.5047 & 12.9402\\
 & (0.0163) & (0.0164) & (0.0327) & (0.0624) & (0.0626) & (0.1250)\\
$-6$ & 5.0360 & 5.0884 & 10.1244 & 7.9426 & 8.0193 & 15.9620 \\
 & (0.0216) & (0.0217) & (0.0433) & (0.1343) & (0.1344) & (0.2685)\\
\hline
\end{tabular}
\end{table}

Third, Figure~\ref{Fig:CEk} illustrates the empirical expectations of the mark size conditionally upon given inferred ground intensities, $\E[k_i |\lambda_{gi}(t)]$, using the tick data of IBM, 2008-2011.
For each year, the empirical conditional expectation with a given $\lambda_{gi} = n$ for an integer $n$ was computed using the sample mean of the mark sizes, whose associated inferred ground process is falling into $(n-1, n]$.
The conditional expectations of the marks were plotted where the total observed numbers of the mark were larger than 100 for each year, i.e., the samples with a small number of observations are dropped out.
In the figure, the ground intensities vary by more as the years pass, suggesting that the overall number of activities increases.
Most of intensities were less than 15 in 2008 but the inferred intensity in 2011 were more widely distributed as one can see that the large portion of the observed intensities were larger than 15. 

The shape of the conditional expectation of the mark changes over time.
The conditional expectation tends to increase with increasing intensity in 2008 and 2010.
In 2009 and 2011, the conditional expectations showed humped shapes. 
Figure~\ref{Fig:CEk_M} shows the empirical conditional expectations of the mark given ground intensity computed monthly basis from January to June, 2011, of IBM.
In the monthly basis empirical conditional expectation, irregular patterns were observed over time.
The changing shape of the conditional distribution of marks over time is the reason why the mark distribution is not specified, and the estimation was performed in a non-parametric manner for the part of the mark distribution. 

\begin{figure}
	\centering
	\begin{subfigure}[b]{0.45\textwidth}
		\includegraphics[width=\textwidth]{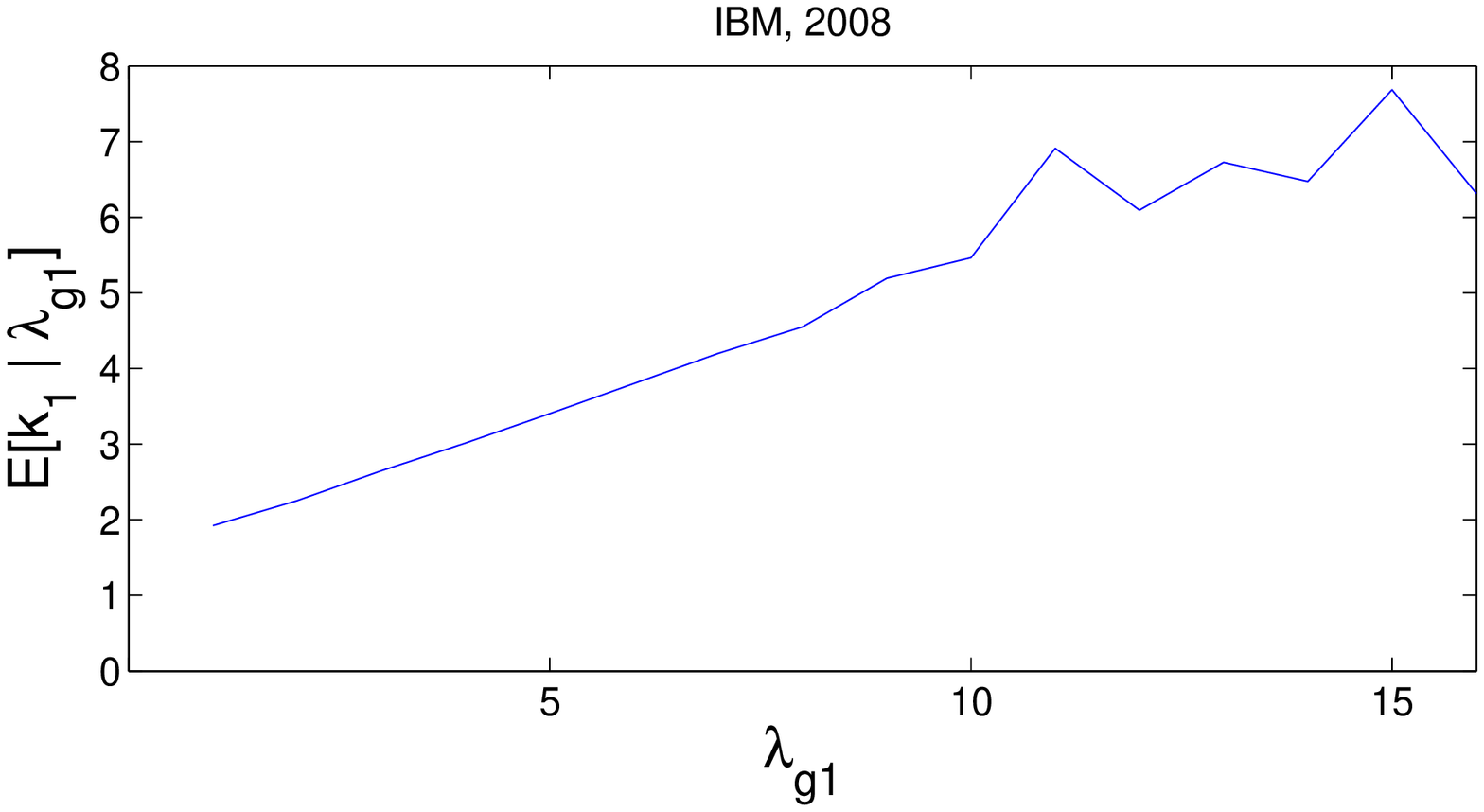}
		\caption{IBM, 2008}\label{CE_IBM2008}
	\end{subfigure}
	\begin{subfigure}[b]{0.45\textwidth}
		\includegraphics[width=\textwidth]{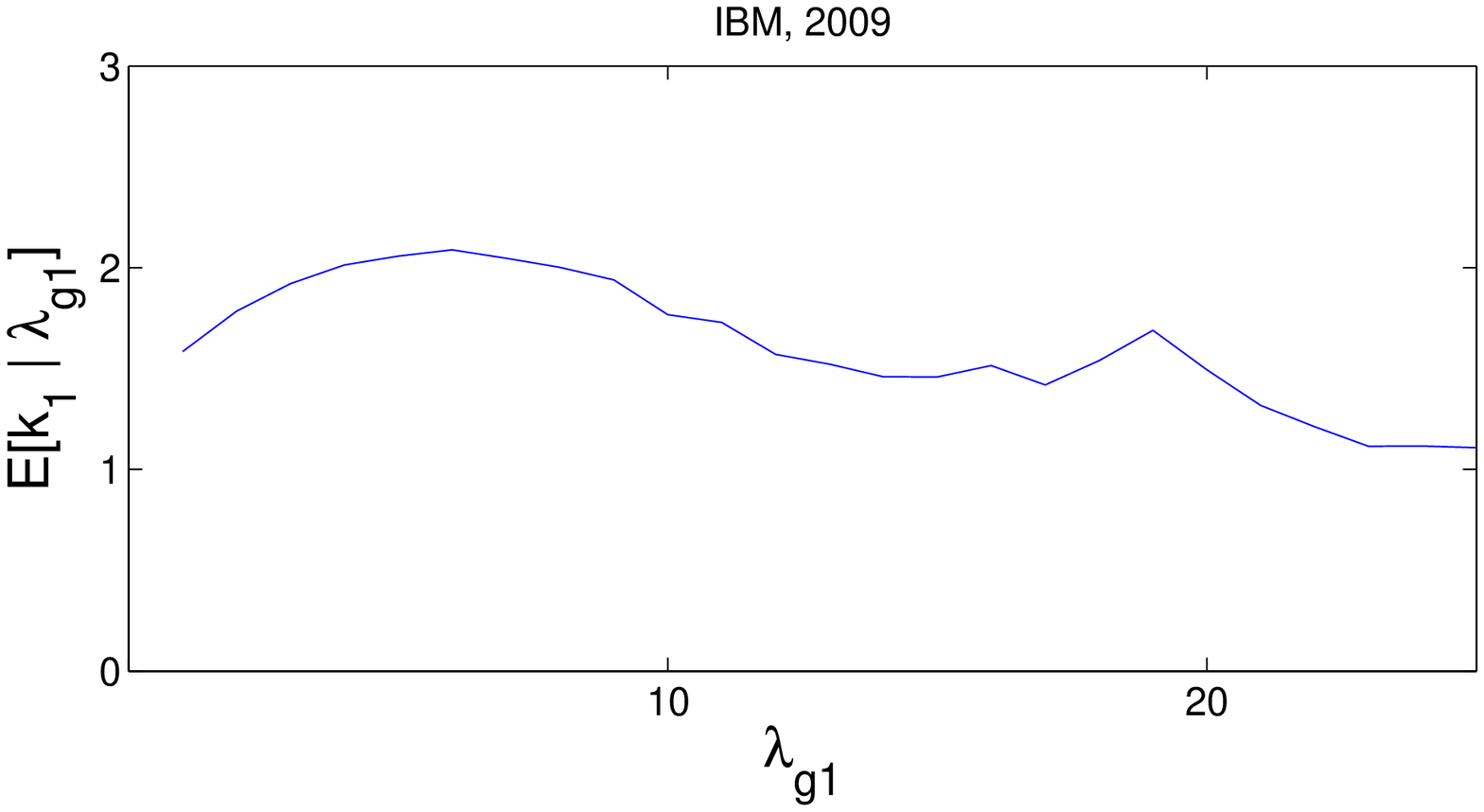}
		\caption{IBM, 2009}
	\end{subfigure}
	\begin{subfigure}[b]{0.45\textwidth}
		\includegraphics[width=\textwidth]{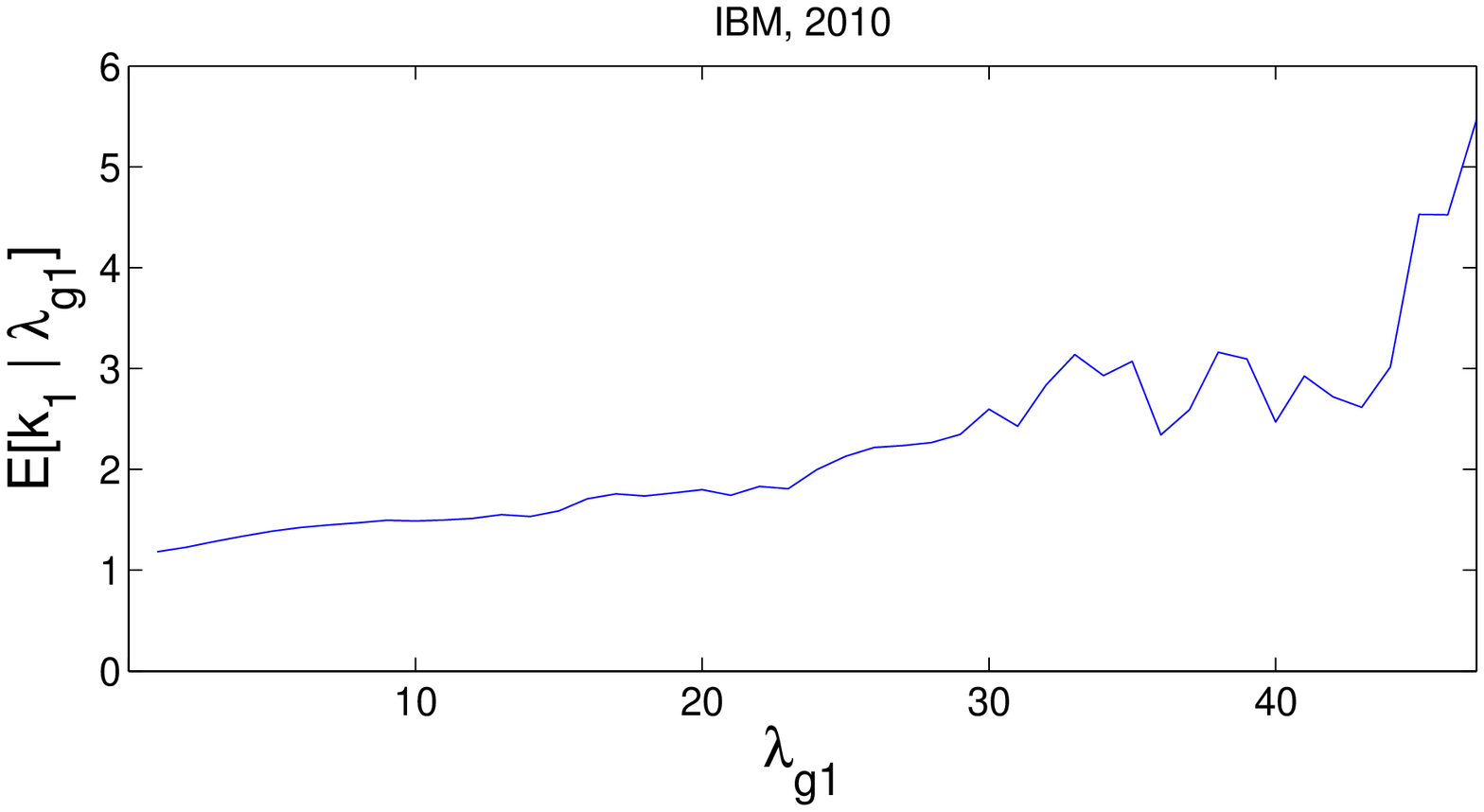}
		\caption{IBM, 2010}
	\end{subfigure}
	\begin{subfigure}[b]{0.45\textwidth}
		\includegraphics[width=\textwidth]{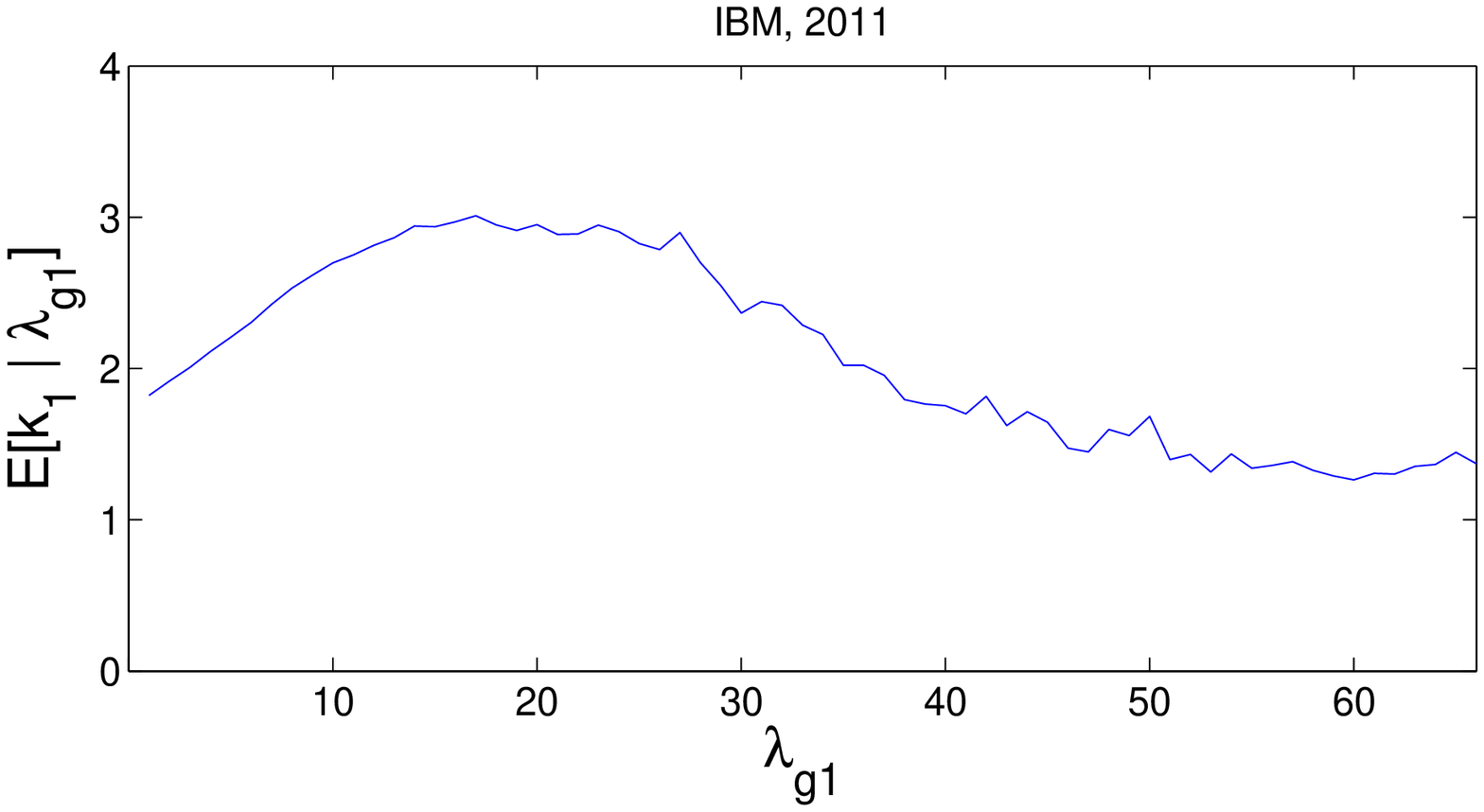}
		\caption{IBM, 2011}
	\end{subfigure}
	\caption{Conditional expectation of $k_1$ on $\lambda_{g1}$, IBM, 2008-2011}\label{Fig:CEk}
\end{figure}

\begin{figure}
	\centering
	\begin{subfigure}[b]{0.45\textwidth}
		\includegraphics[width=\textwidth]{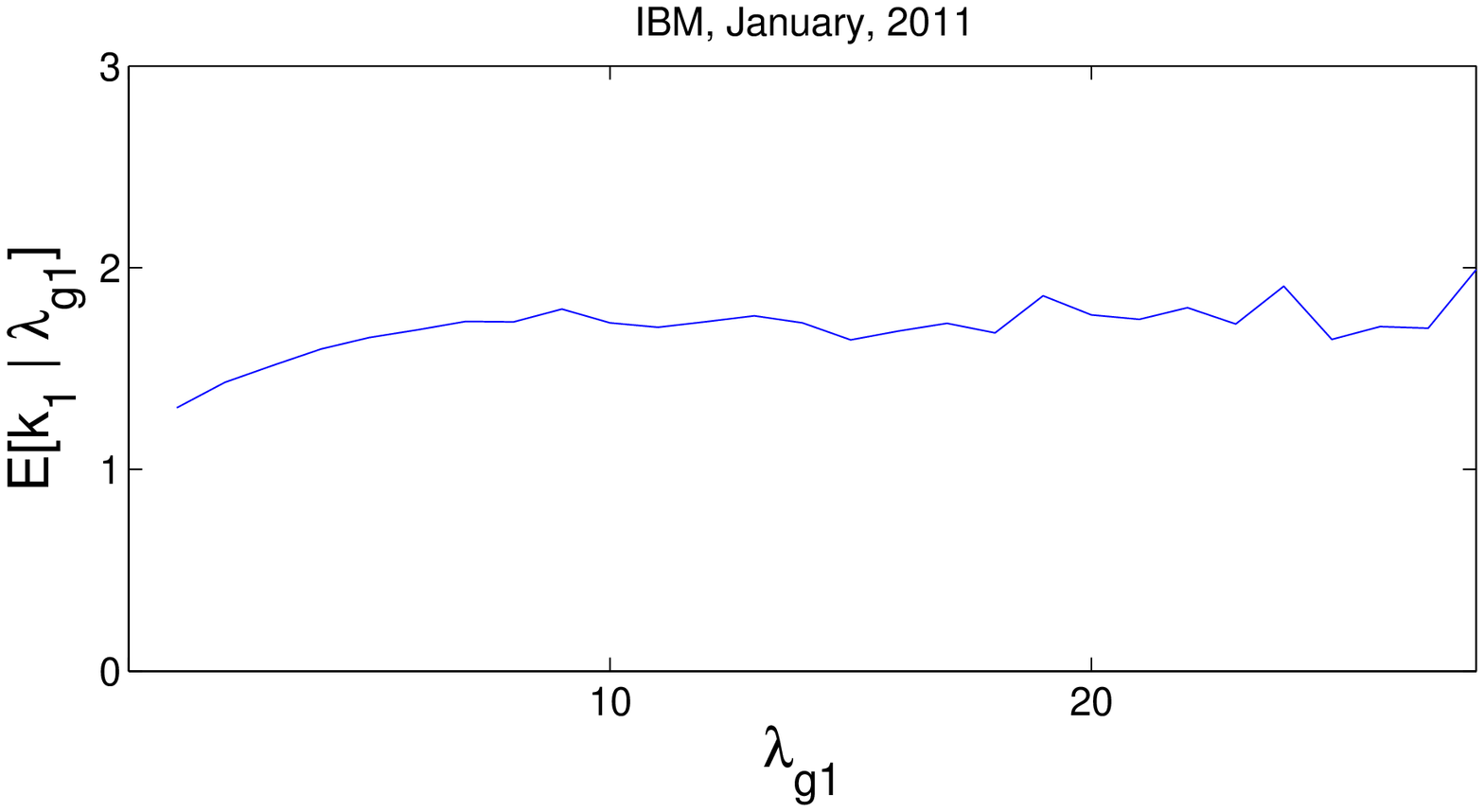}
	\end{subfigure}
	\begin{subfigure}[b]{0.45\textwidth}
		\includegraphics[width=\textwidth]{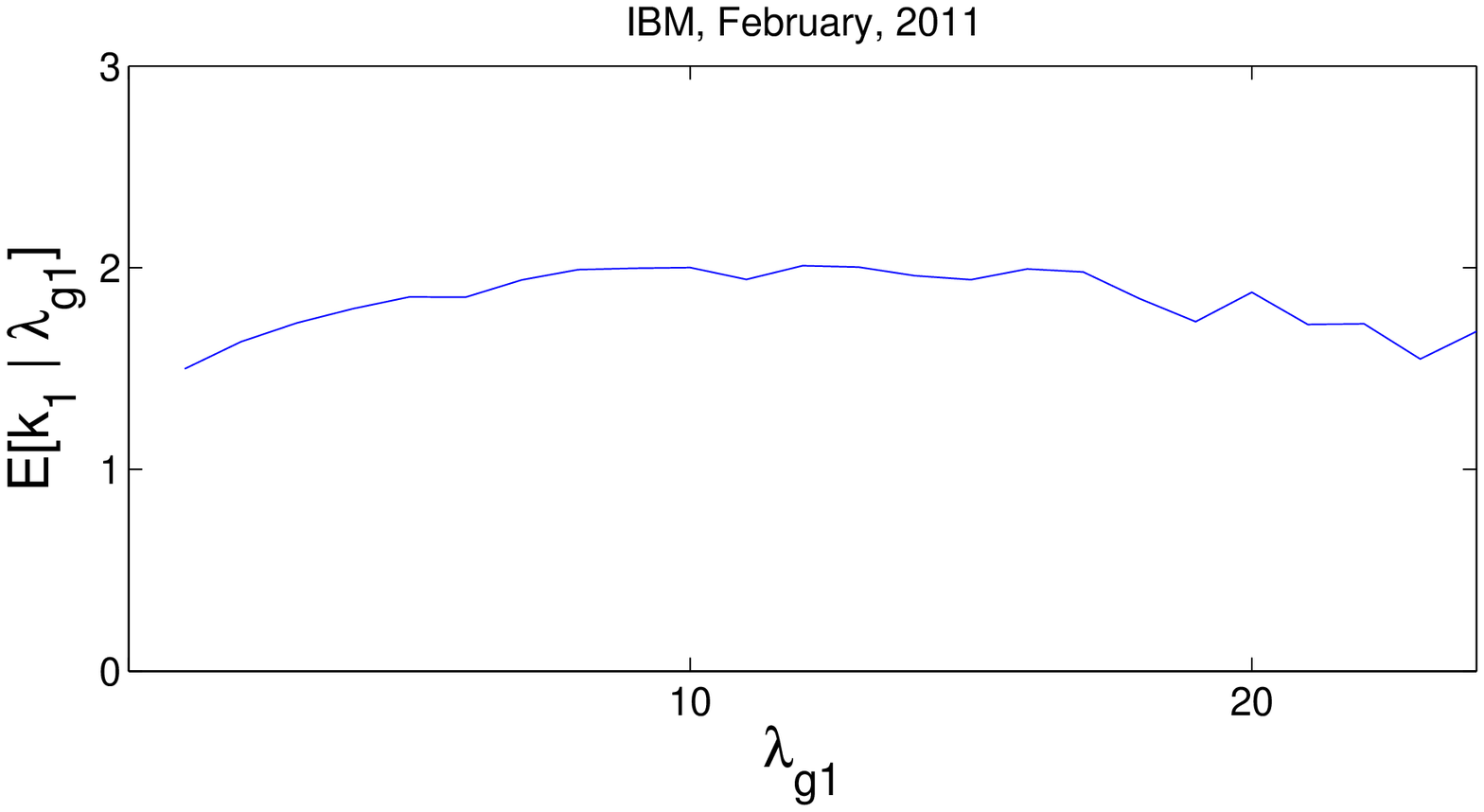}
	\end{subfigure}
	\begin{subfigure}[b]{0.45\textwidth}
		\includegraphics[width=\textwidth]{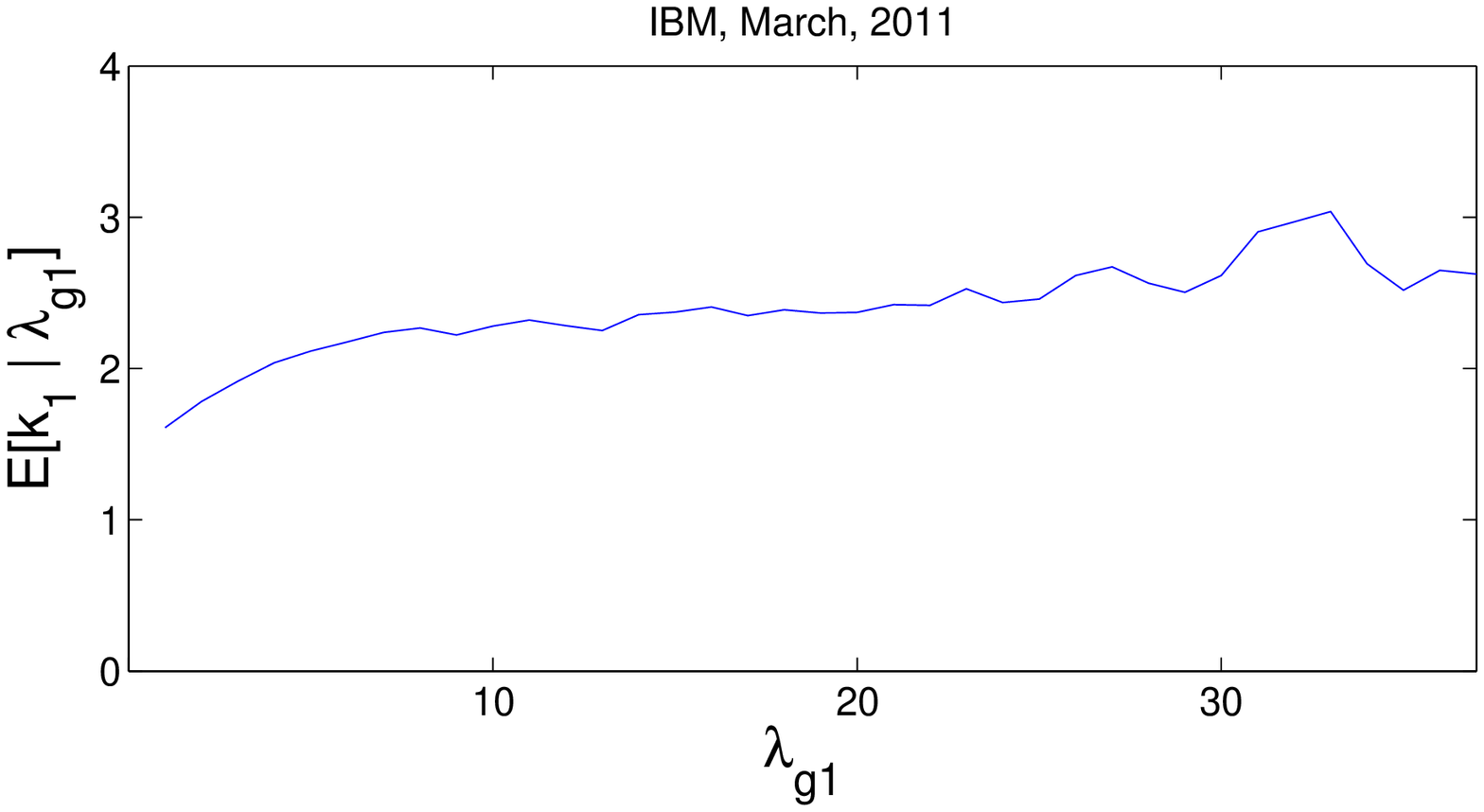}
	\end{subfigure}
	\begin{subfigure}[b]{0.45\textwidth}
		\includegraphics[width=\textwidth]{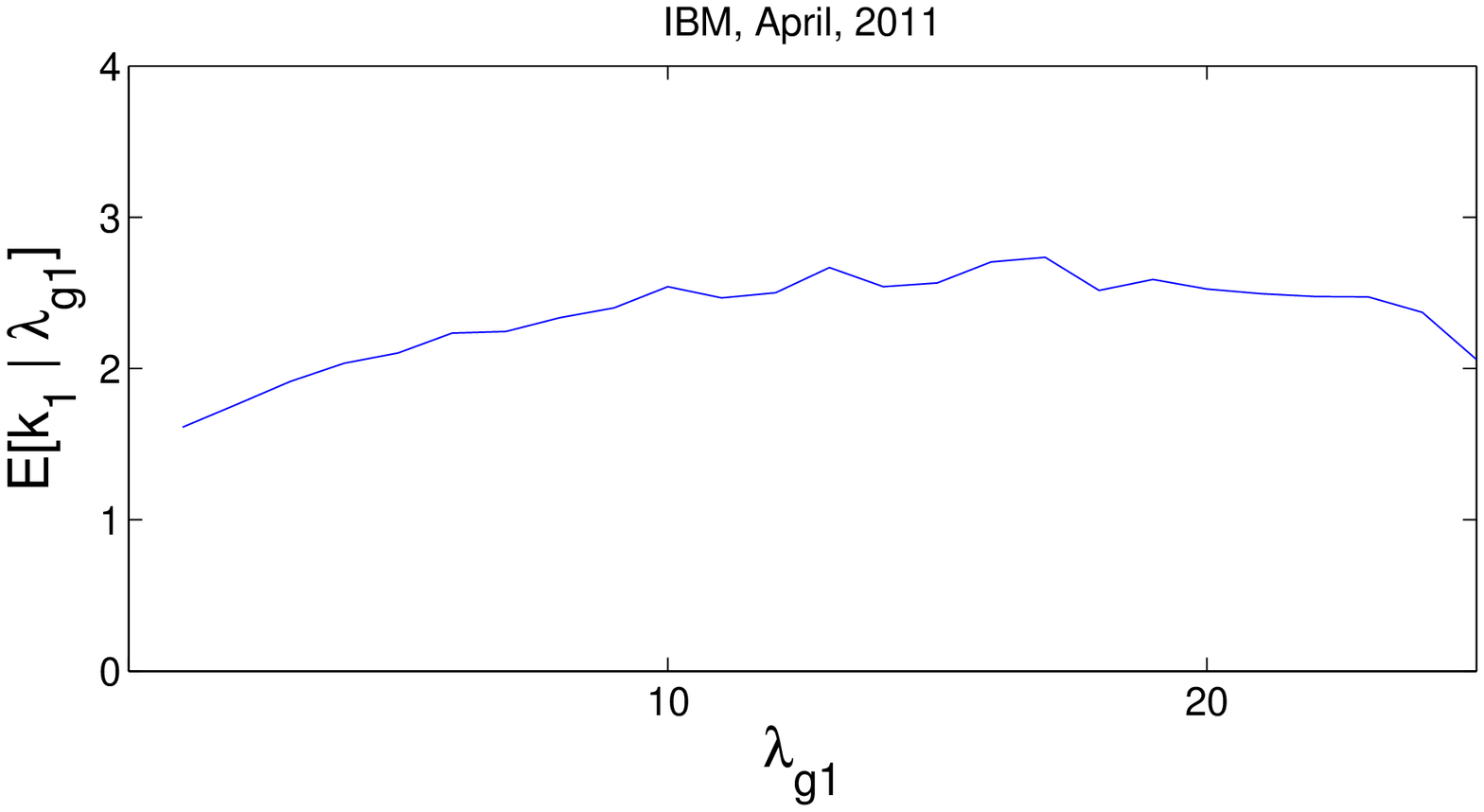}
	\end{subfigure}
	\begin{subfigure}[b]{0.45\textwidth}
		\includegraphics[width=\textwidth]{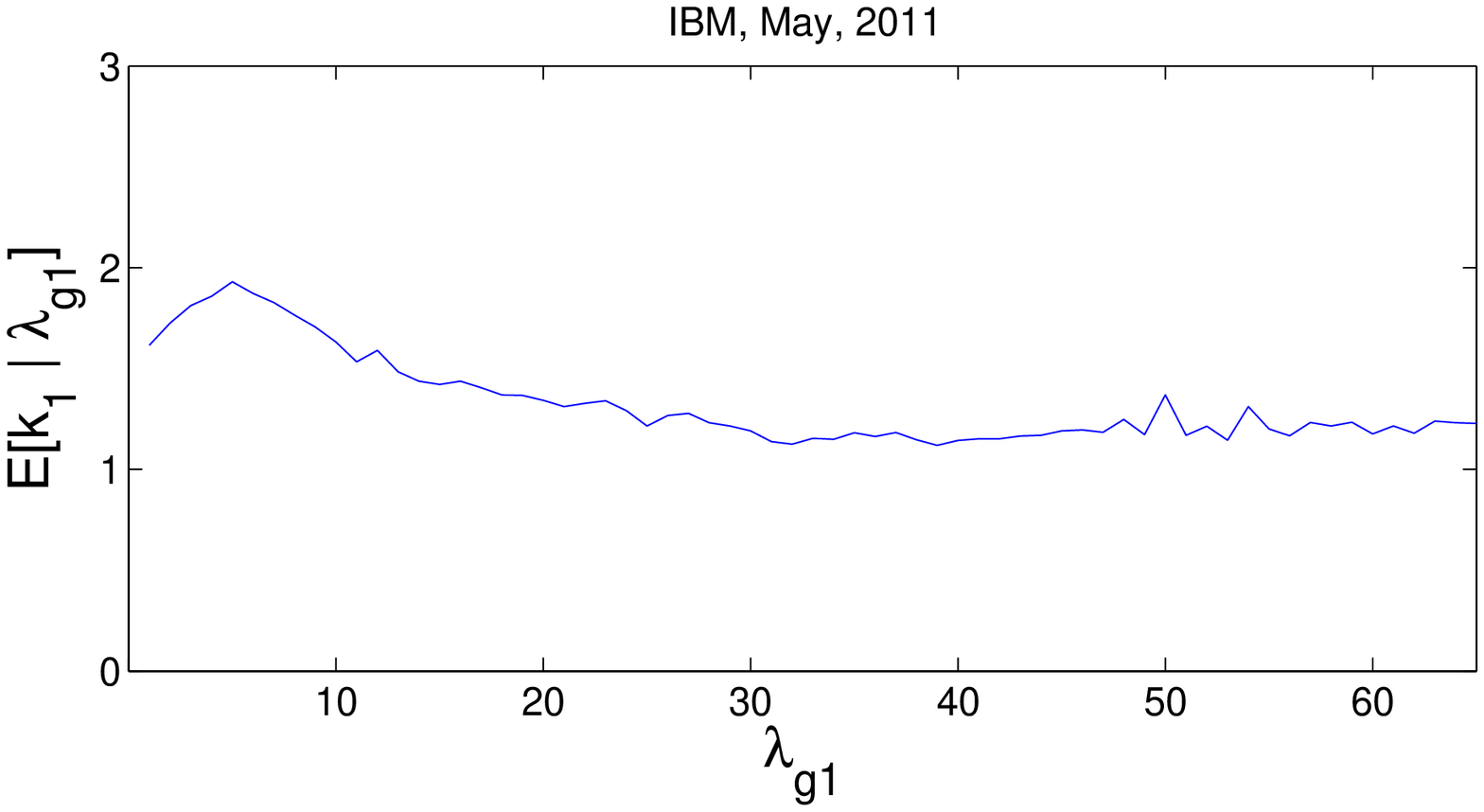}
	\end{subfigure}
	\begin{subfigure}[b]{0.45\textwidth}
		\includegraphics[width=\textwidth]{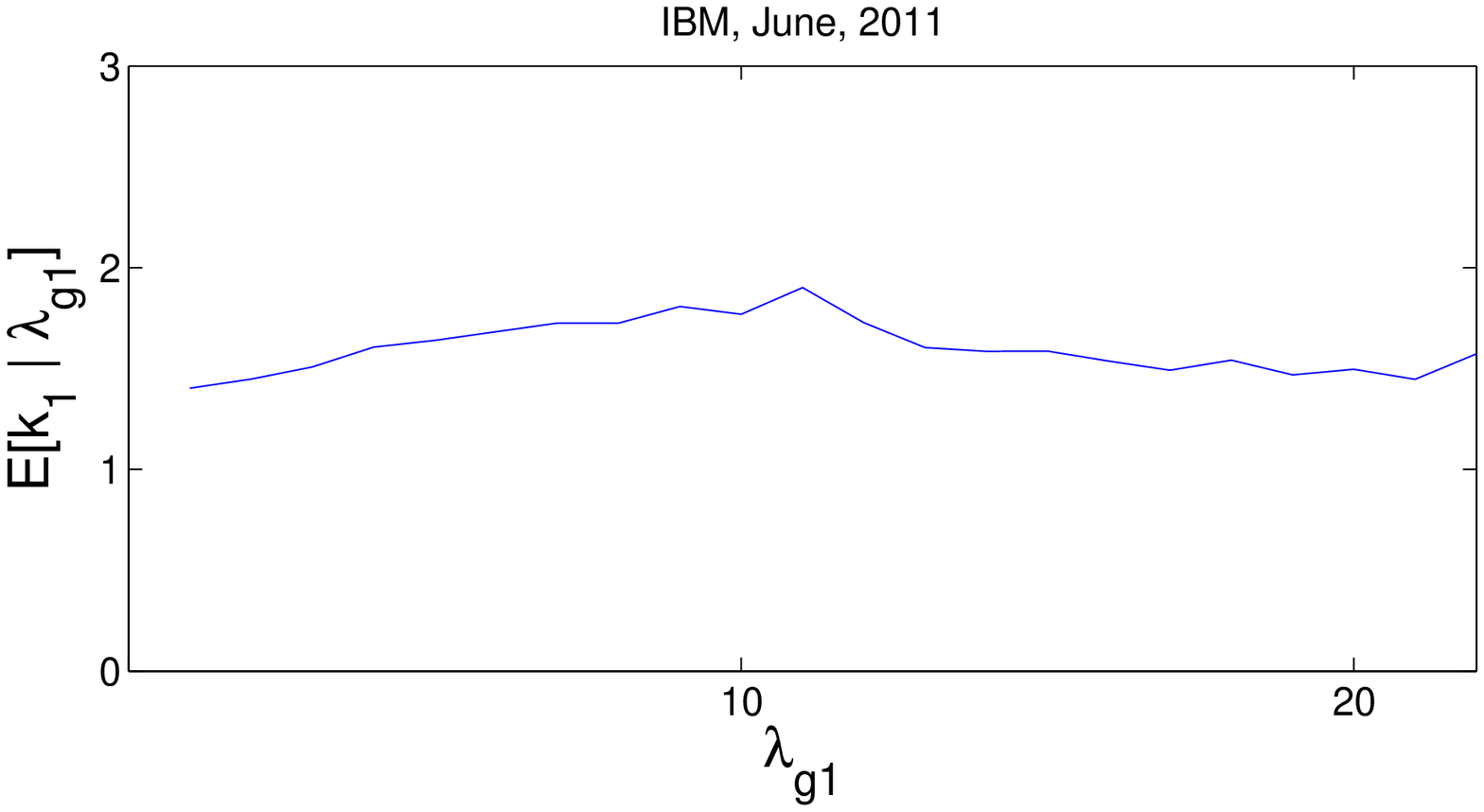}
	\end{subfigure}
	\caption{Conditional expectation of $k_1$ on $\lambda_{g1}$, IBM, 2011}\label{Fig:CEk_M}
\end{figure}

\subsection{Estimation result}

Table~\ref{Table:estimatesIBM2011} lists the likelihood estimation results of the marked Hawkes model with the tick data of IBM, January 2011, where $\log L_g$ of Eq.~\eqref{Eq:likelihood} is maximized.
The numerically computed standard errors are reported in the parentheses.
The estimations were performed on a daily basis, i.e., the estimates were recalculated in every business day.
The behaviors of $\mu, \alpha_s, \alpha_c,$ and $\beta$ in Figure~\ref{Fig:EstimationIBM2011} are similar to those estimated in the simple Hawkes model, see~\cite{LeeSeo}.

\begin{table}
\caption{Estimates of IBM, 2011 with linear impact function}\label{Table:estimatesIBM2011}
\centering
\begin{tabular}{ccccccc}
\hline
date & $\mu$ & $\alpha_s$ & $\alpha_c$ & $\beta$ & $\eta$ & $\log L_g$ \\
\hline 
0103 & 0.1080 & 0.6577 & 0.9956 & 2.2921 & 0.1241 & $-17185.5$ \\
 & (0.0021) & (0.0175) & (0.0218) & (0.0339) & (0.0187) \\
0104 & 0.1450 & 0.7033 & 1.0334 & 2.3527 & 0.2266 & $-17371.5$ \\
 & (0.0026) & (0.0157) & (0.0188) & (0.0272) & (0.0233) \\
0105 & 0.1079 & 0.9736 & 0.9414 & 2.550 & 0.1654 & $-13160.3$ \\
 & (0.0021) & (0.0207) & (0.0203) & (0.0334) & (0.0180)\\
0106 & 0.1335 & 0.8198 & 0.9475 & 2.3615 & 0.1357 & $-16603.4$\\
 & (0.0025) & (0.0173) & (0.0191) & (0.0296) & (0.0171) \\
0107 & 0.1588 & 0.8574 & 1.0145 & 2.435 & 0.1560 & $-15956.4$\\
 & (0.0029) & (0.0164) & (0.0184) & (0.0284) & (0.0176)\\
0110 & 0.1338 & 0.7423 & 1.0724 & 2.4388 & 0.1927 & $-16141.3$\\
 & (0.0025) & (0.0160) & (0.0191) & (0.0266) & (0.0163)\\
0111 & 0.1271 & 0.5855 & 1.2108 & 2.4403 & 0.1570 & $-16923.1$\\
 & (0.0024) & (0.0159) & (0.0223) & (0.0314) & (0.0155)\\
0112 & 0.1160 &	0.6517 & 0.8552 & 2.0639 & 0.3561 & $-19492.9$\\
 & (0.0023) & (0.0158) & (0.0185) & (0.0307) & (0.0481) \\
0113 & 0.1042 & 0.7245 & 1.0502 & 2.5284 & 0.2372 & $-15508.4$\\
 & (0.0020) & (0.0192) & (0.0240) & (0.0369) & (0.0261) \\
0114 & 0.1138 & 0.6702 & 0.8798 & 2.3142 & 0.2380 & $-16589.3$ \\
 & (0.0022) & (0.0175) & (0.0202) & (0.0341) & (0.0183)\\
0118 & 0.1330 &	0.5642&	1.1548&	2.5082&	0.1651 & $-16374.7$\\
 & (0.0024) & (0.0169) & (0.0239) & (0.0354) & (0.0147) \\
0119 & 0.2198 &	0.5423&	1.2631 & 2.4964 & 0.1323 & $-11223.0$\\
 & (0.0036) & (0.0133) & (0.0207) & (0.0255) & (0.0093) \\
0120 & 0.1509 & 0.7060 & 1.0017 & 2.3114 & 0.1824 & $-15709.4$\\
 & (0.0028) & (0.0158) & (0.0211) & (0.0332) & (0.0154) \\
0121 & 0.1447 & 0.4901 & 1.3356 & 2.5806 & 0.1545 & $-16524.6$ \\
 & (0.0026) & (0.0152) & (0.0247) & (0.0333) & (0.0132)\\
0124 & 0.1771 &	0.6095 & 1.2424 & 2.5658 & 0.1649 & $-14711.6$\\
 & (0.0030) & (0.0151) & (0.0205) & (0.0290) & (0.0132)\\
0125 & 0.1669 & 0.8214 & 0.9528 & 2.2982 &	0.1954 & $-13602.1$\\
 & (0.0030) & (0.0147) & (0.0167) & (0.0244) & (0.0170) \\
0126 & 0.1421 & 0.5939 & 1.3809 & 2.6146 & 0.1217 & $-9239.8$\\
 & (0.0026) & (0.0154) & (0.0231) & (0.0295) & (0.0100) \\
\hline
\end{tabular}
\end{table}

\begin{figure}
		\centering
        \begin{subfigure}[b]{0.45\textwidth}
                \includegraphics[width=\textwidth]{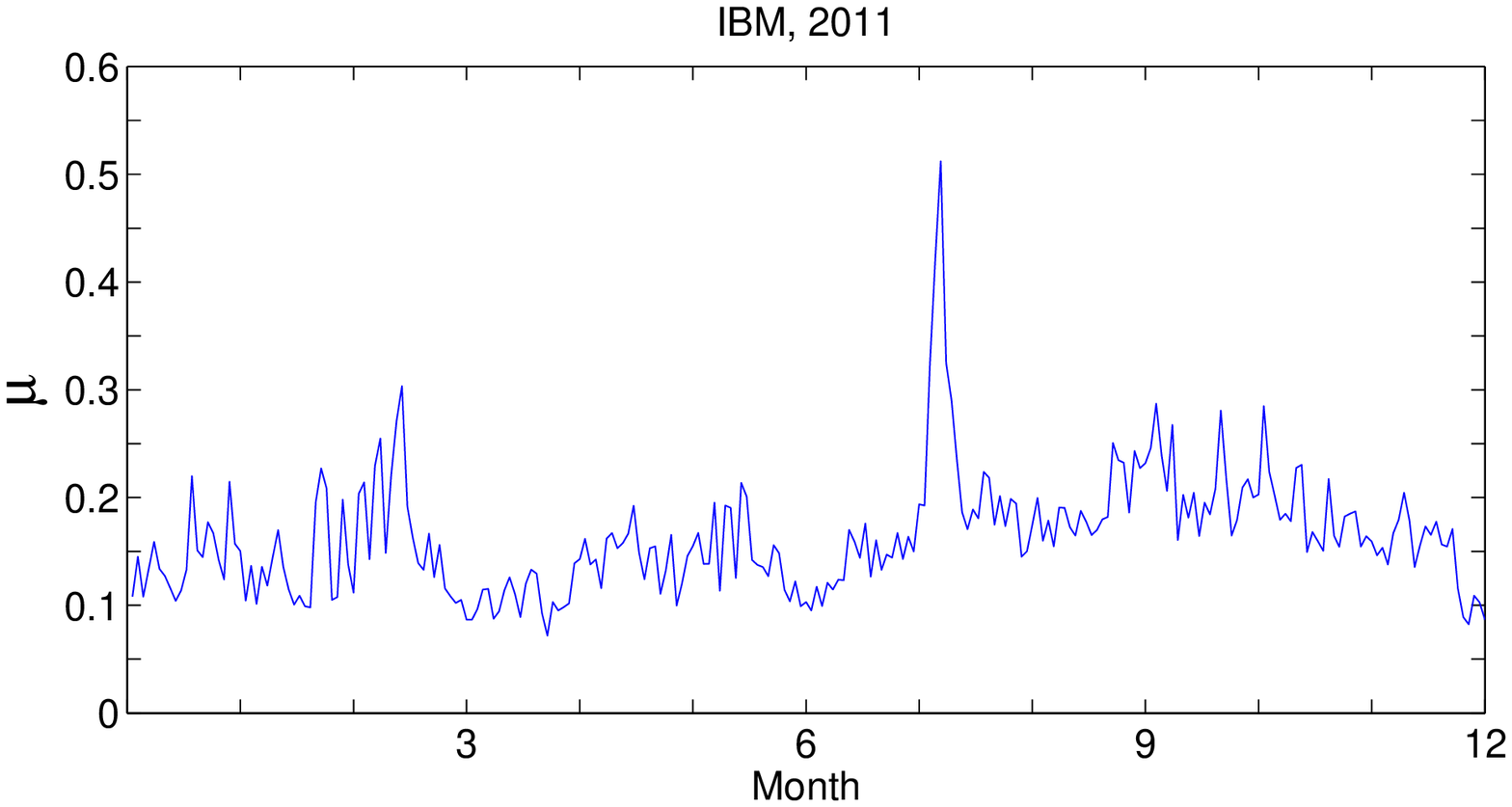}
                \caption{$\mu$}
                \label{Fig:mu_IBM2011}
        \end{subfigure}
       	\centering
        \begin{subfigure}[b]{0.45\textwidth}
                \includegraphics[width=\textwidth]{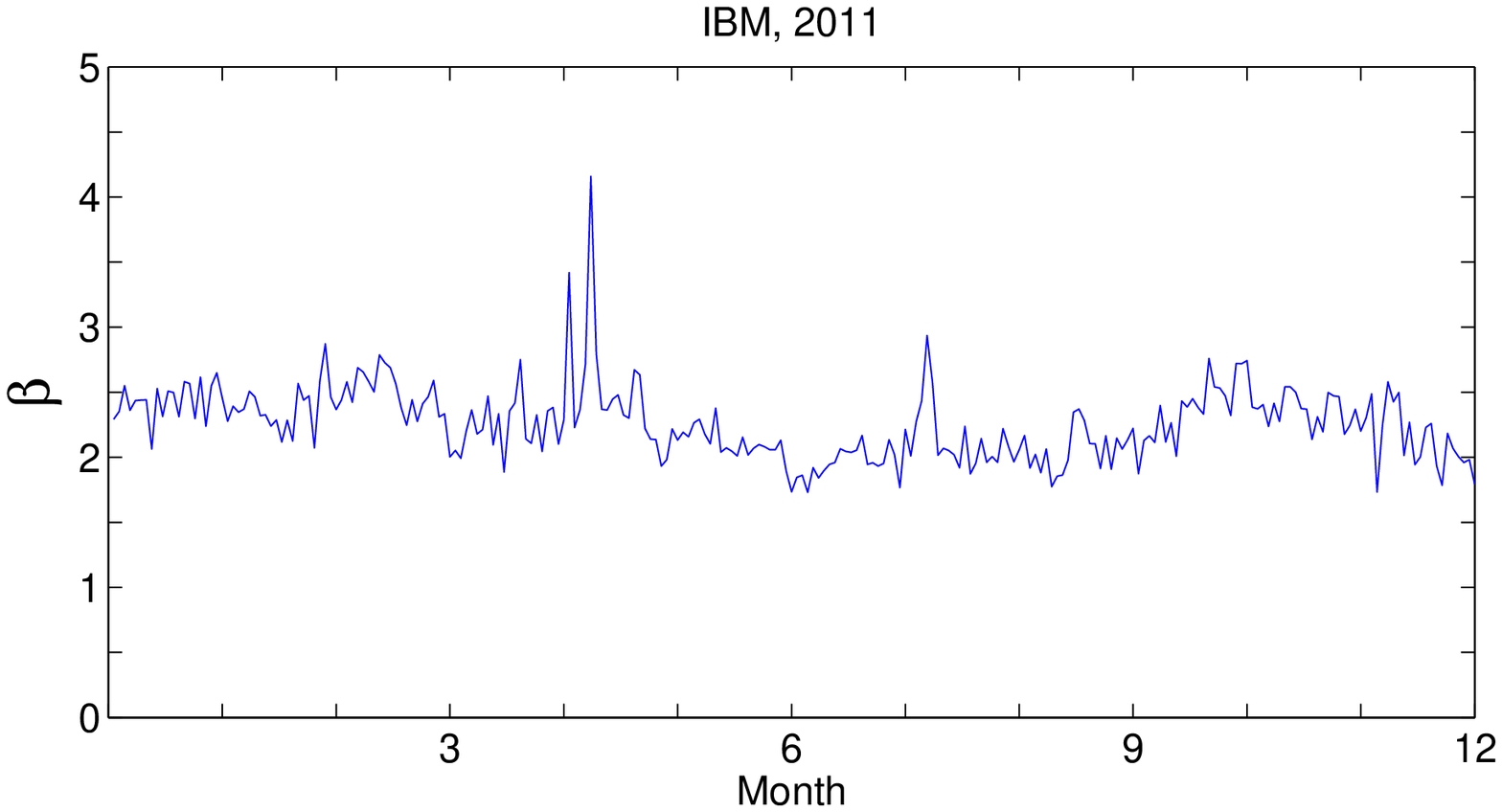}
                \caption{$\beta$}
                \label{Fig:beta_IBM2011}
        \end{subfigure}
	    \centering
        \begin{subfigure}[b]{0.45\textwidth}
                \includegraphics[width=\textwidth]{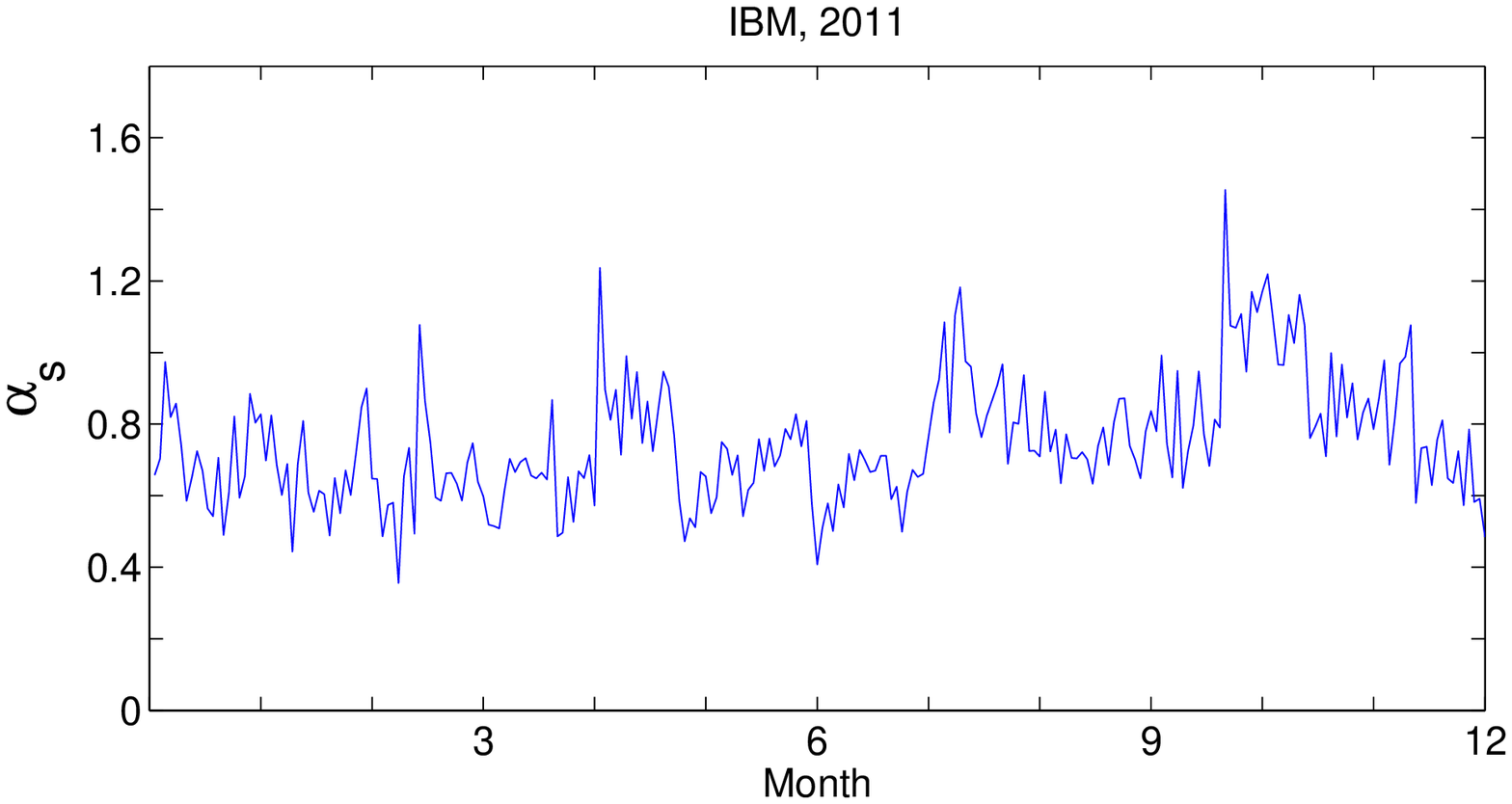}
                \caption{$\alpha_s$}
                \label{Fig:alphas_IBM2011}
        \end{subfigure}
	    \centering
        \begin{subfigure}[b]{0.45\textwidth}
                \includegraphics[width=\textwidth]{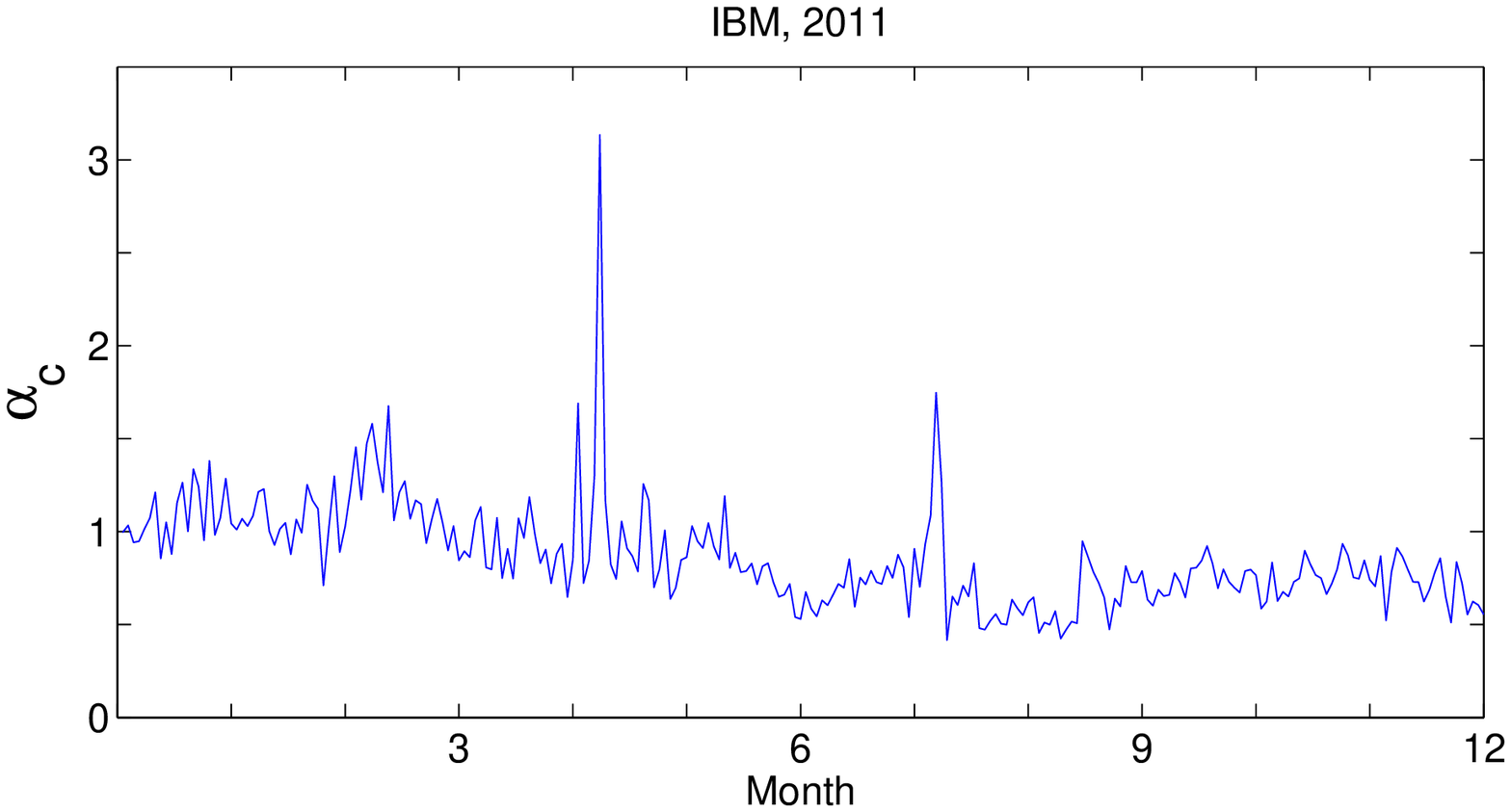}
                \caption{$\alpha_c$}
                \label{Fig:alphac_IBM2011}
        \end{subfigure}
		\caption{Marked Hawkes estimation result, IBM, 2011}\label{Fig:EstimationIBM2011}
\end{figure}

The dynamics of $\eta$ are illustrated in Figure~\ref{Fig:eta} where $\eta$ was estimated around 0.2 in general from 2008 to 2011.
The slope parameter for the impact function, $\eta$, is positive in general, which means that large mark tends to have a large impact for the future intensities.
On the other hand, $\eta$ is less than 1 and this implies that the impact of mark size 2 is generally less than the total impact of the two consecutive unit size jumps that occur over a very short time interval. 
Note that few negative $\eta$ are also observed.
Figure~\ref{Fig:etaCVX} shows the dynamics of $\eta$ estimated from CVX.
The overall behaviors of $\eta$ of IBM and CVX are similar but the $\eta$ of CVX was more volatile.

Figure~\ref{Fig:vol} compares the Hawkes volatility computed by Remark~\ref{Remark:vol} and TSRV of IBM, 2008-2011.
The trends of the Hawkes volatility and TSRV are similar but
the Hawkes volatility is generally larger than the TSRV, especially when the volatility is high.
This tendency was also found in the simple symmetric Hawkes model.
This discrepancy might be because of the restriction that the parameter settings need be symmetric and Markovian.
On the other hand, the precise reason for this is unclear at this point.
Note that in the previous simulation study, the Hawkes volatility and TSRV converges.

The empirical studies suggest that the parameter restrictions for symmetry do not perfectly meet with the real data.
Figure~\ref{Fig:EstimationIBM2011full} presents the dynamics of all parameters of the fully characterized Hawkes model of Subsection~\ref{Subsect:full} for IBM, 2011.
The result shows that for each parameter pair $(\mu_1, \mu_2)$, $(\alpha_{11}, \alpha_{22})$, $(\alpha_{12}, \alpha_{21})$, $(\beta_{11}, \beta_{22})$ and $(\beta_{12}, \beta_{21})$, 
a similar trend is observed over time but those were not exactly the same as each other. 
The summary statistics in Table~\ref{Table:IBM2011_full} show that the parameter pair has a similar mean but the mean absolute percentage error (MAPE) also shows a difference between the parameters.
The row `MAPE' presents the error between two adjacent parameters.
In addition, $\beta_{11}$ and $\beta_{12}$ are different, even in the mean.
A similar observation is found in the simple Hawkes approach.

\begin{figure}
        \centering
        \begin{subfigure}[b]{0.5\textwidth}
                \includegraphics[width=\textwidth]{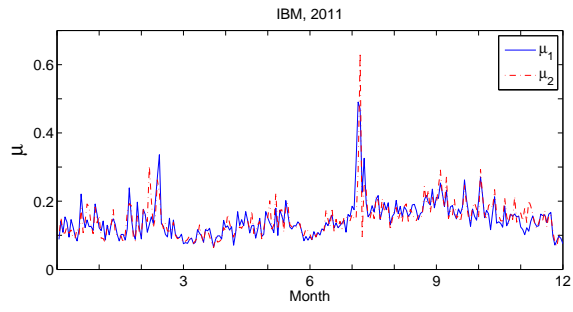}
                \caption{$\mu_1$ and $\mu_2$}
                \label{Fig:IBM2011_mu12}
        \end{subfigure}
        \centering
        \begin{subfigure}[b]{0.5\textwidth}
                \includegraphics[width=\textwidth]{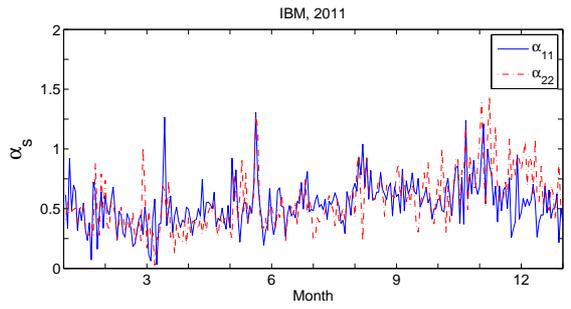}
                \caption{$\alpha_{11}$ and $\alpha_{22}$}
                \label{Fig:IBM2011_alpha1122}
        \end{subfigure}
	    \centering
        \begin{subfigure}[b]{0.5\textwidth}
                \includegraphics[width=\textwidth]{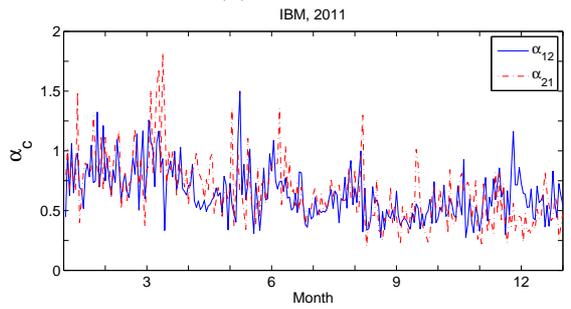}
                \caption{$\alpha_{12}$ and $\alpha_{21}$}
                \label{Fig:IBM2011_alpha1221}
        \end{subfigure}
	    \centering
        \begin{subfigure}[b]{0.5\textwidth}
                \includegraphics[width=\textwidth]{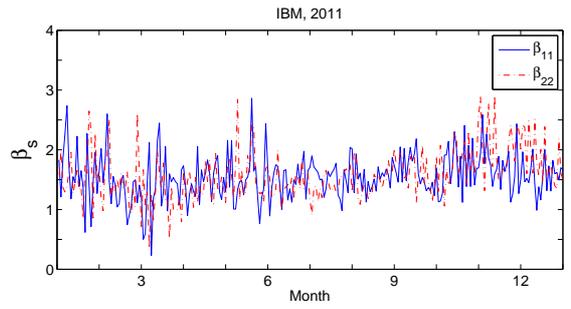}
                \caption{$\beta_{11}$ and $\beta_{22}$}
                \label{Fig:IBM2011_beta1122}
        \end{subfigure}
	    \centering
        \begin{subfigure}[b]{0.5\textwidth}
                \includegraphics[width=\textwidth]{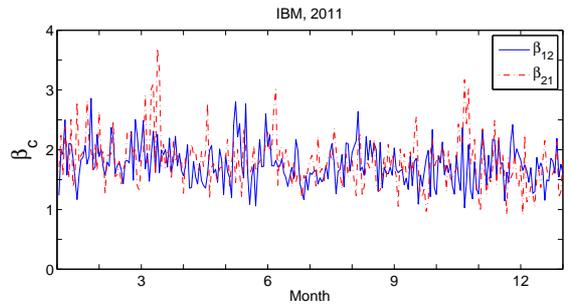}
                \caption{$\beta_{12}$ and $\beta_{21}$}
                \label{Fig:IBM2011_beta1221}
        \end{subfigure}
		\caption{Estimation result with the fully characterized marked Hawkes, IBM, 2011}\label{Fig:EstimationIBM2011full}
\end{figure}

\begin{table}
\caption{Estimation result of fully characterized self and mutually excited Hawkes process, IBM, 2011}\label{Table:IBM2011_full}
\centering
\begin{tabular}{ccccccccccccc}

\hline
 & $\mu_1$ & $\mu_2$ & $\alpha_{11}$ & $\alpha_{22}$& $\alpha_{12}$& $\alpha_{21}$ & $\beta_{11}$ &  $\beta_{22}$ & $\beta_{12}$ & $\beta_{21}$\\
\hline
IBM, 2011 \\
mean & 0.1462 & 0.1470 & 0.5404 & 0.5621 & 0.6443 & 0.6611 & 1.5499 & 1.5793 & 1.7538 & 1.8056 \\
std. & 0.0543 & 0.0545 & 0.1962 & 0.2361 & 0.2144 & 0.2717 & 0.3845 & 0.4192 & 0.3373 & 0.4282 \\
MAPE & \multicolumn{2}{c}{0.1436} & \multicolumn{2}{c}{0.3924} & \multicolumn{2}{c}{0.3179} & \multicolumn{2}{c}{0.2881} & \multicolumn{2}{c}{0.2247}\\
\hline
\end{tabular}
\end{table}

Interestingly, when the stock market is in a highly volatile state, the slope for the impact function, $\eta$ is estimated to be relatively close to zero.
For example, September 29, 2008, at the beginning of the financial crisis, the reported $\eta$ of IBM was around 0.02, which was much smaller than the annual average, 0.16, when the market is very unstable with a TSRV of 0.8766 and a Hawkes volatility 1.4530.
In the May 6, 2010 Flash Crash, the estimated $\eta$ of IBM was around 0.01 (annual average of $\eta= 0.23$) when the TSRV was 0.9656 and the Hawkes volatility was 1.3264.
Because of the statement of Federal Reserve, the stock market was highly volatile at August 9, 2011, and the $\eta$ of IBM was around 0.01 which is much smaller than the annual average 0.11.
CVX also showed a similar pattern.
The estimated $\eta$ of CVX for the above cases are around 0.04, 0.01, and 0.07, respectively, whereas the annual averages in 2008, 2010 and 2011 were 0.14, 0.41 and 0.24, respectively.
In a highly volatile market, a much larger number of large size marks are observed than in a stable market, and with those large size marks, the linear relationship between the mark size and the future impact is weak.

\begin{figure}
	\centering
	\begin{subfigure}[b]{0.45\textwidth}
		\includegraphics[width=\textwidth]{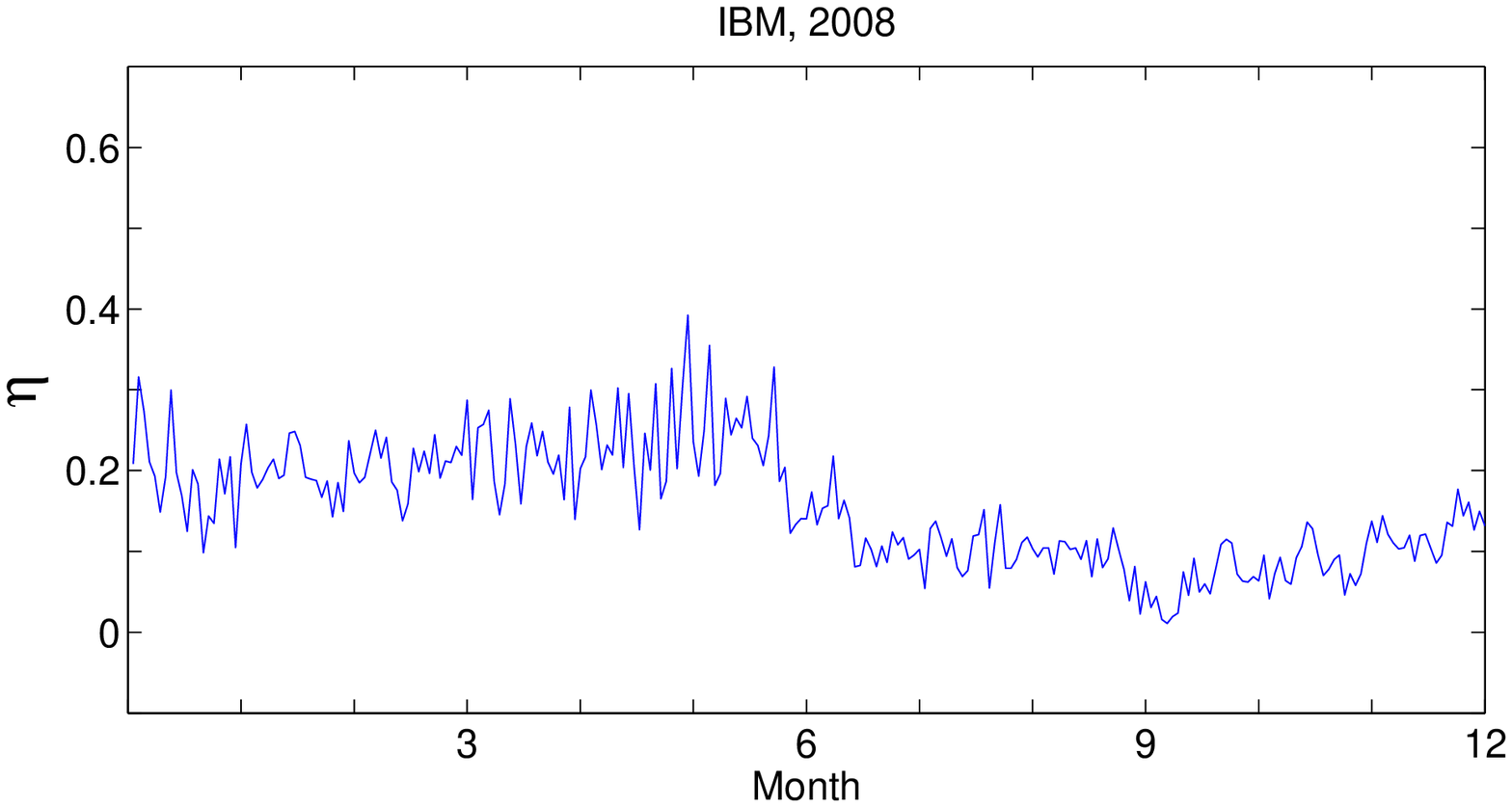}
	\end{subfigure}
	\begin{subfigure}[b]{0.45\textwidth}
		\includegraphics[width=\textwidth]{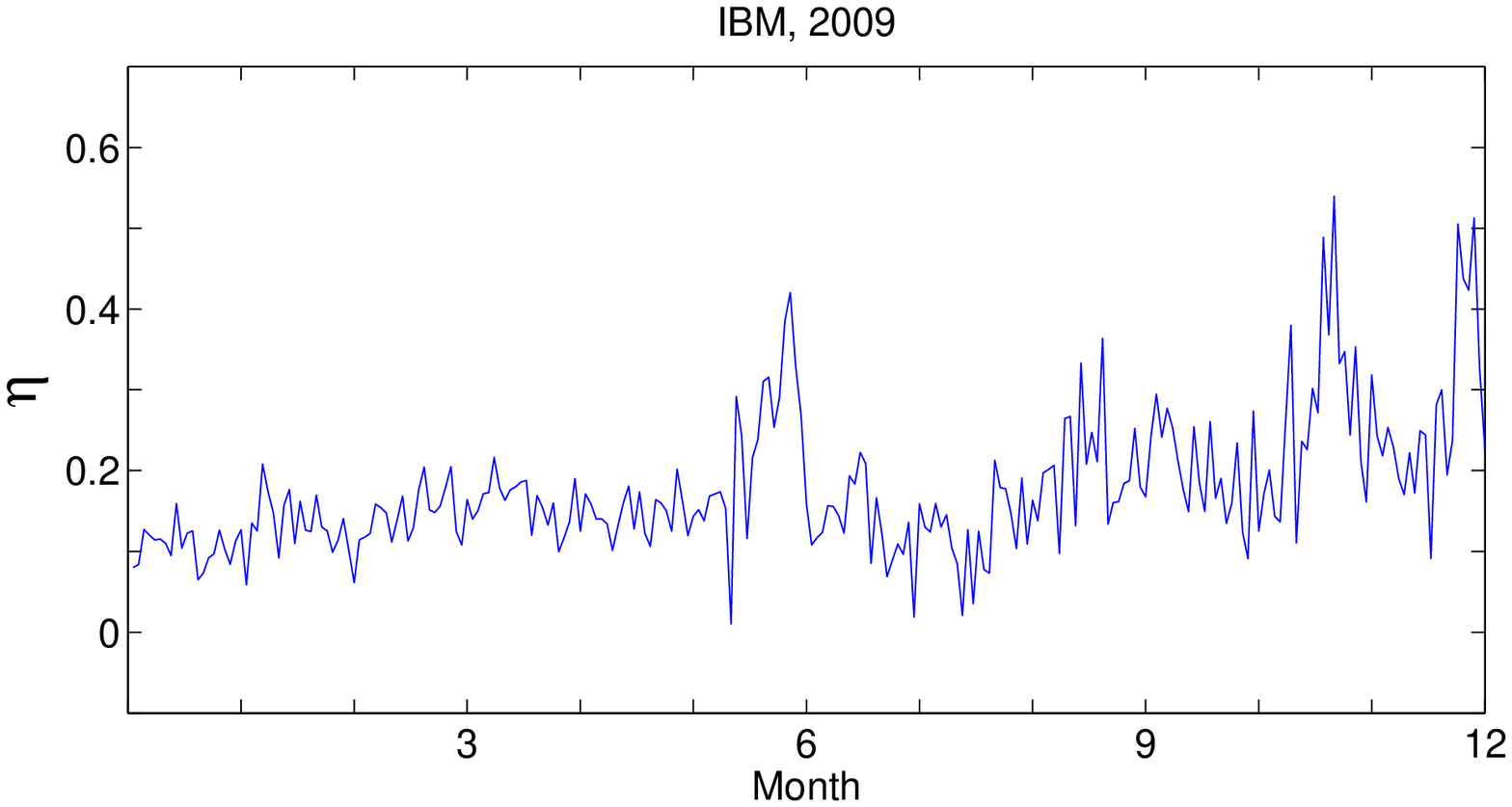}
	\end{subfigure}
	\begin{subfigure}[b]{0.45\textwidth}
		\includegraphics[width=\textwidth]{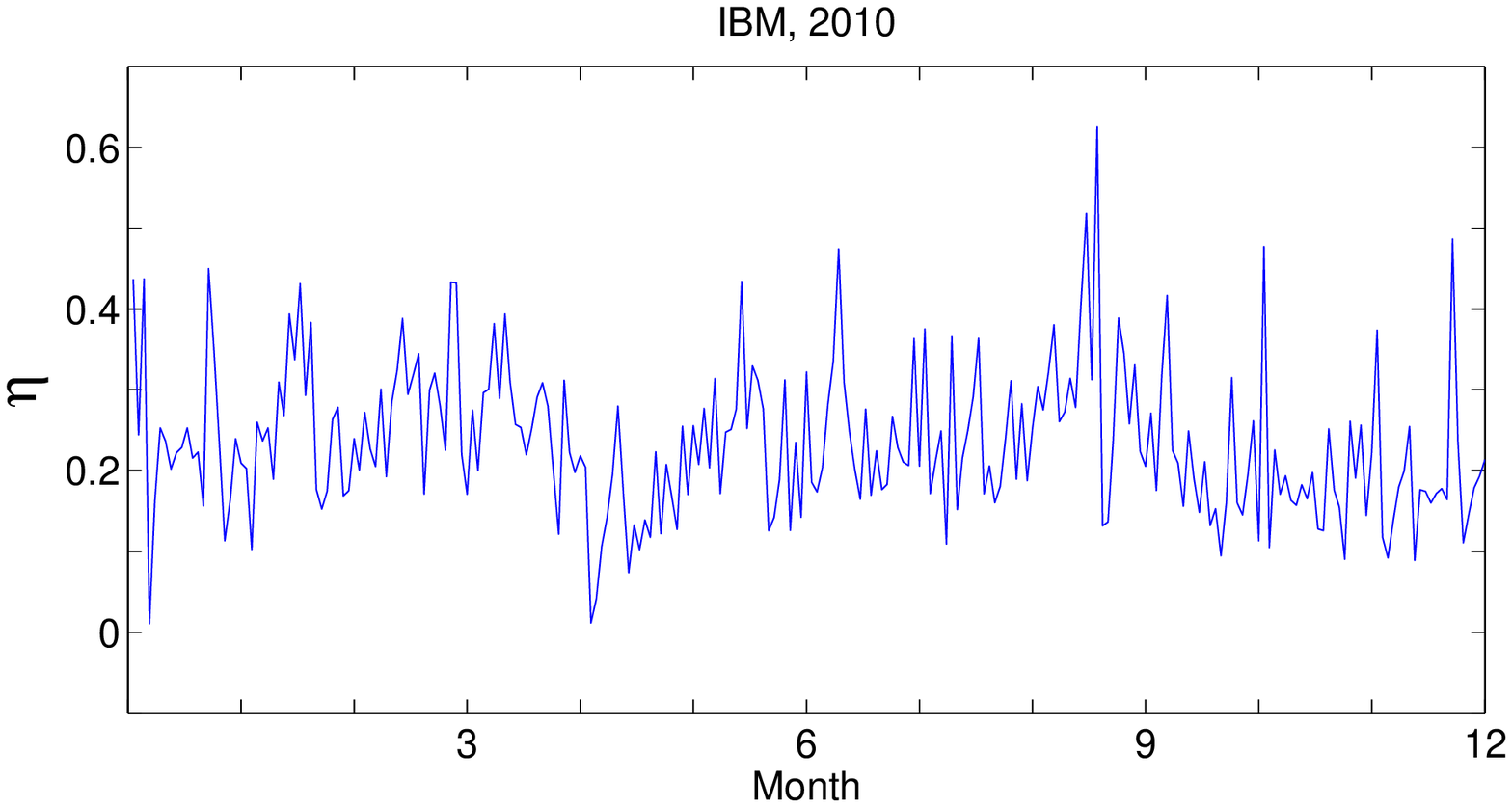}
	\end{subfigure}
	\begin{subfigure}[b]{0.45\textwidth}
		\includegraphics[width=\textwidth]{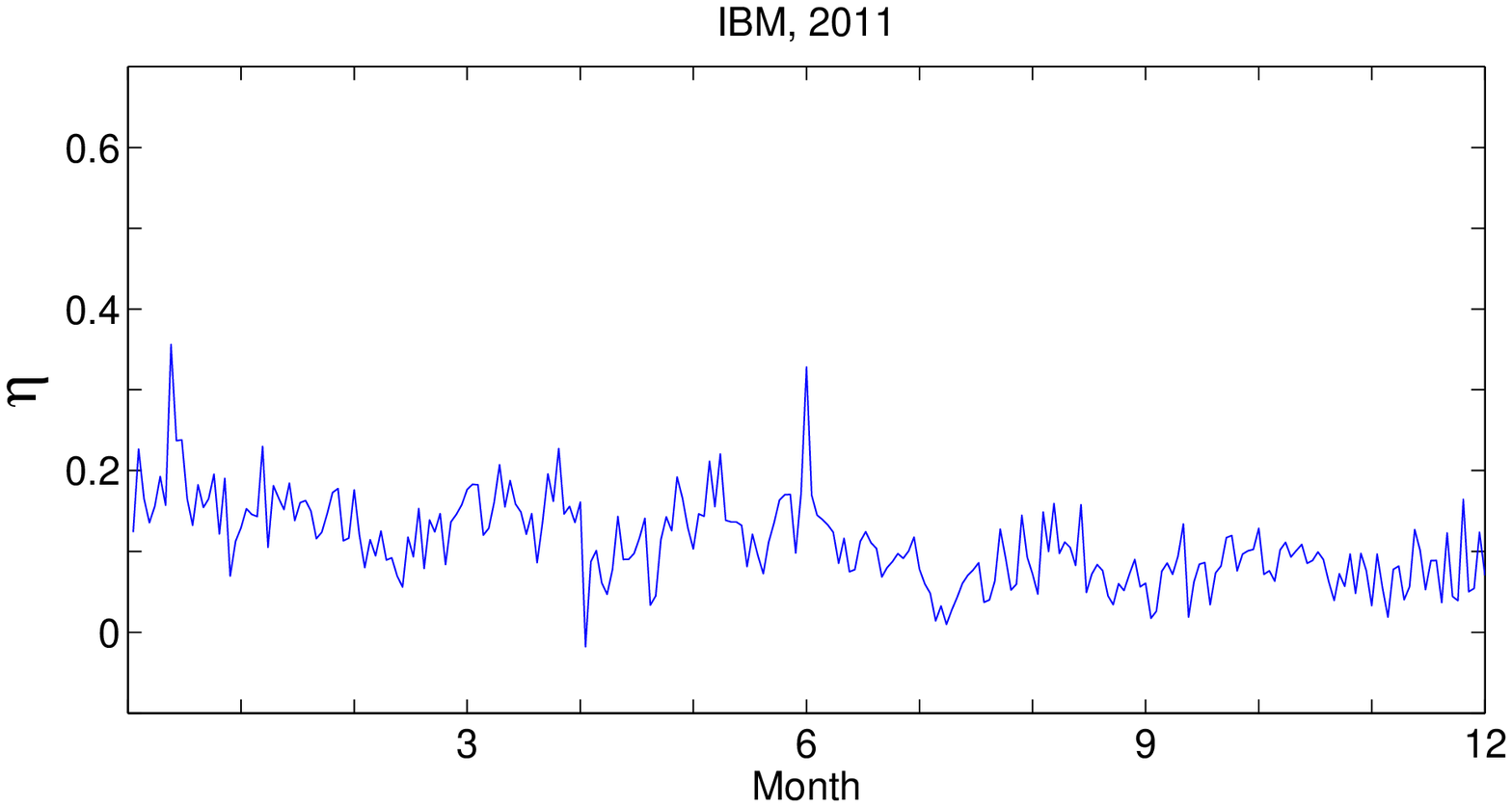}
	\end{subfigure}
	\caption{Estimation results of $\eta$, IBM, 2008-2011}\label{Fig:eta}
\end{figure}

\begin{figure}
	\centering
	\begin{subfigure}[b]{0.45\textwidth}
		\includegraphics[width=\textwidth]{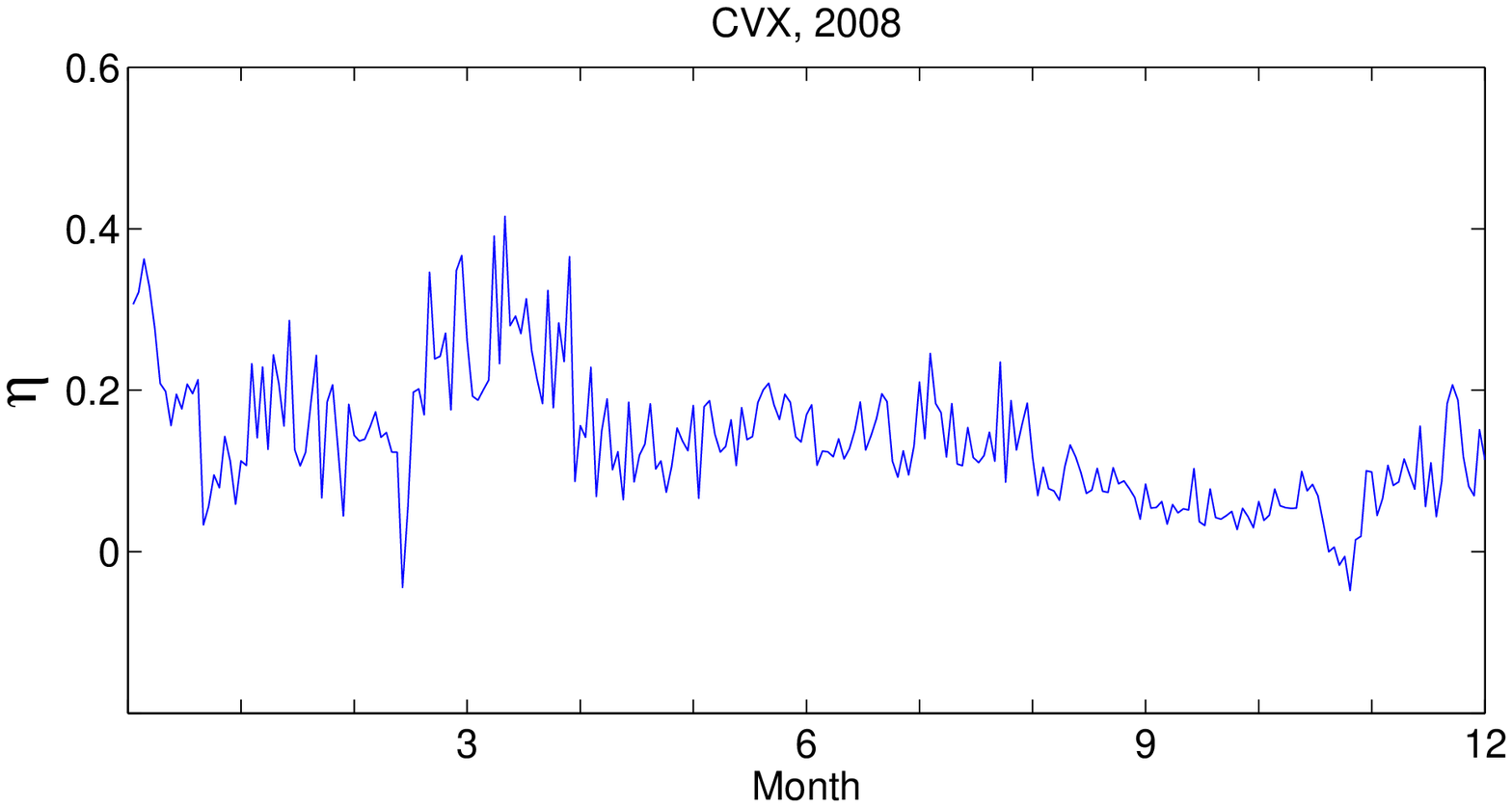}
	\end{subfigure}
	\begin{subfigure}[b]{0.45\textwidth}
		\includegraphics[width=\textwidth]{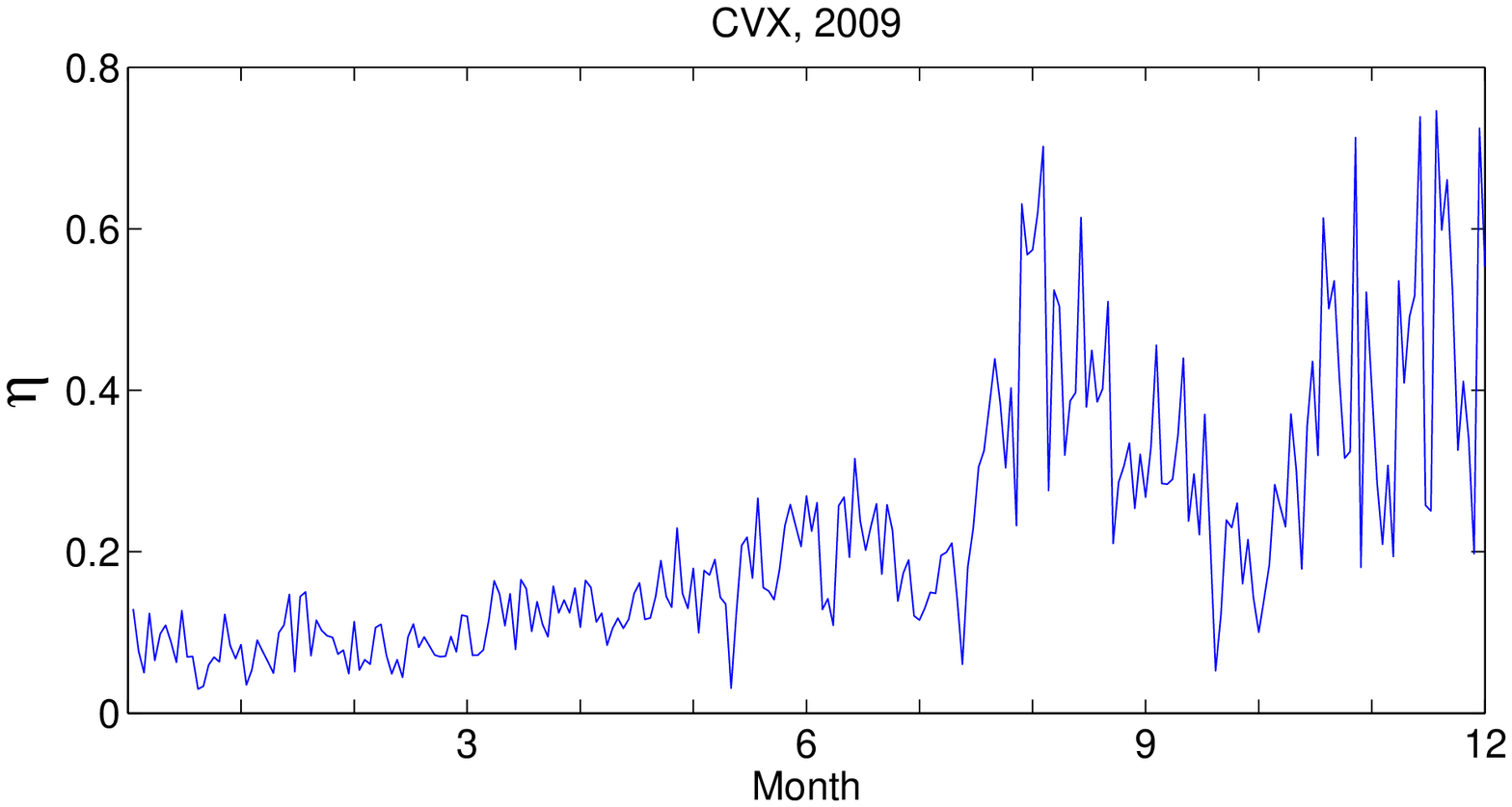}
	\end{subfigure}
	\begin{subfigure}[b]{0.45\textwidth}
		\includegraphics[width=\textwidth]{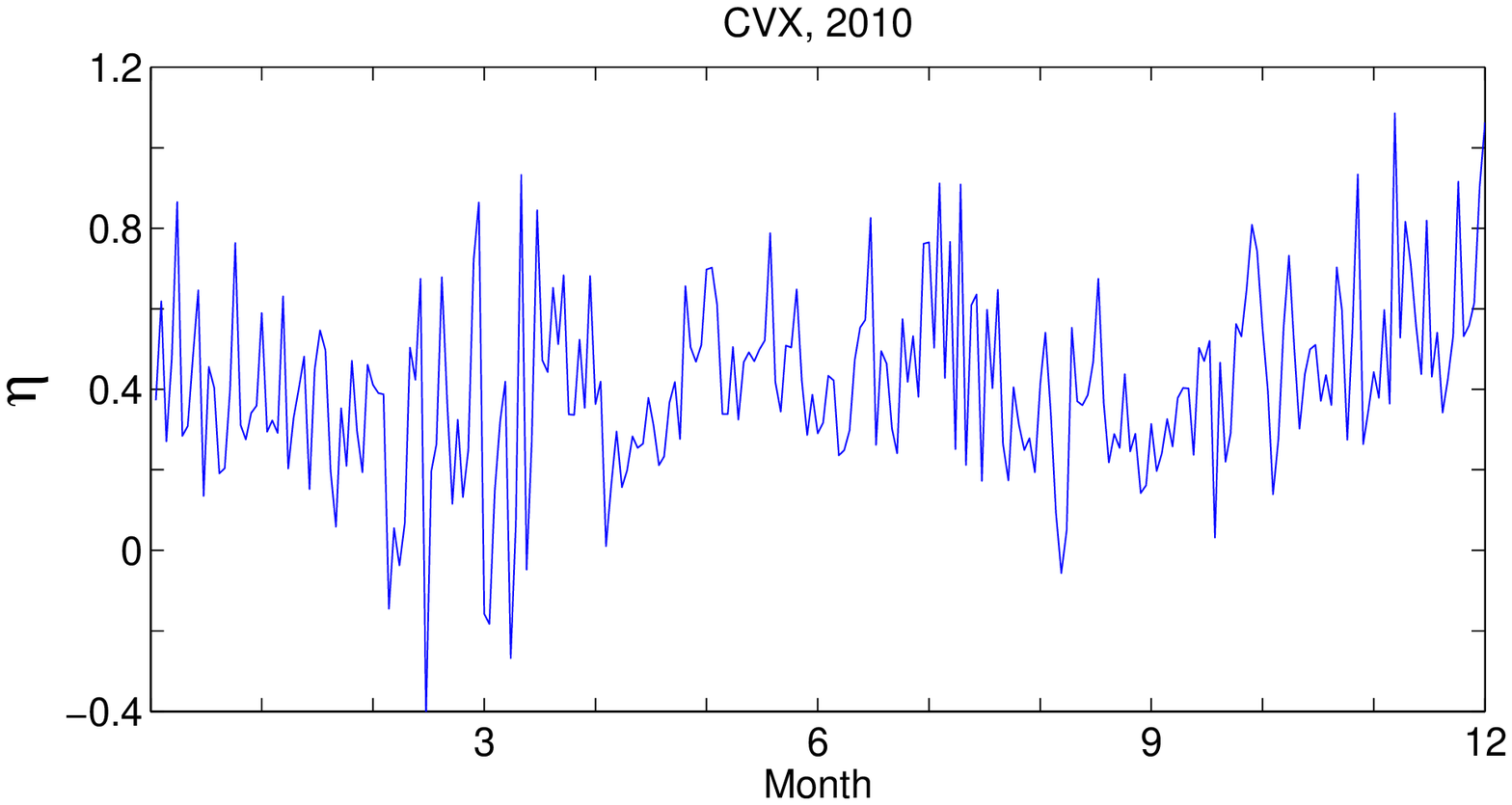}
	\end{subfigure}
	\begin{subfigure}[b]{0.45\textwidth}
		\includegraphics[width=\textwidth]{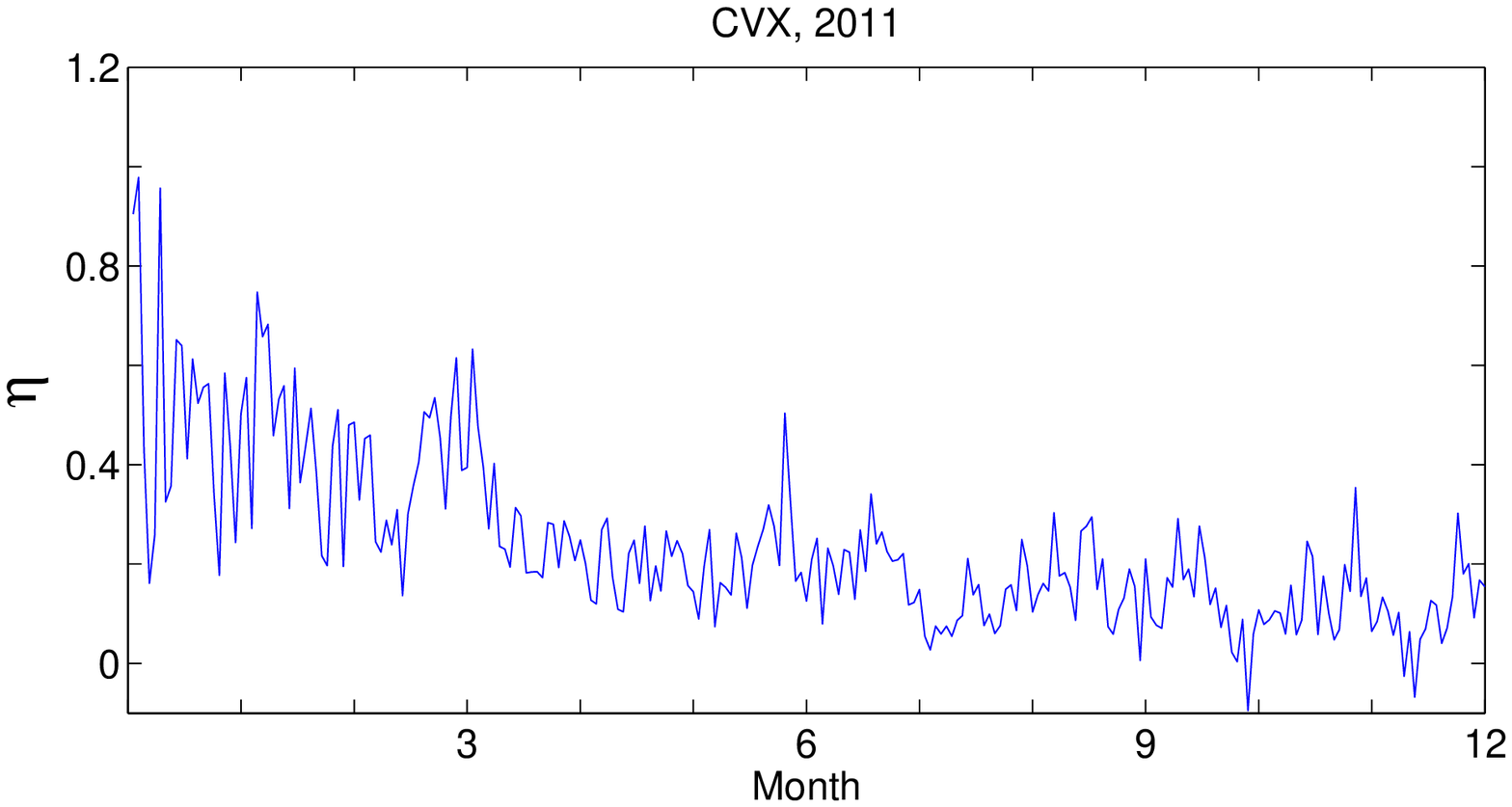}
	\end{subfigure}
	\caption{Estimation results of $\eta$, CVX, 2008-2011}\label{Fig:etaCVX}
\end{figure}

\begin{figure}
	\centering
	\begin{subfigure}[b]{0.45\textwidth}
		\includegraphics[width=\textwidth]{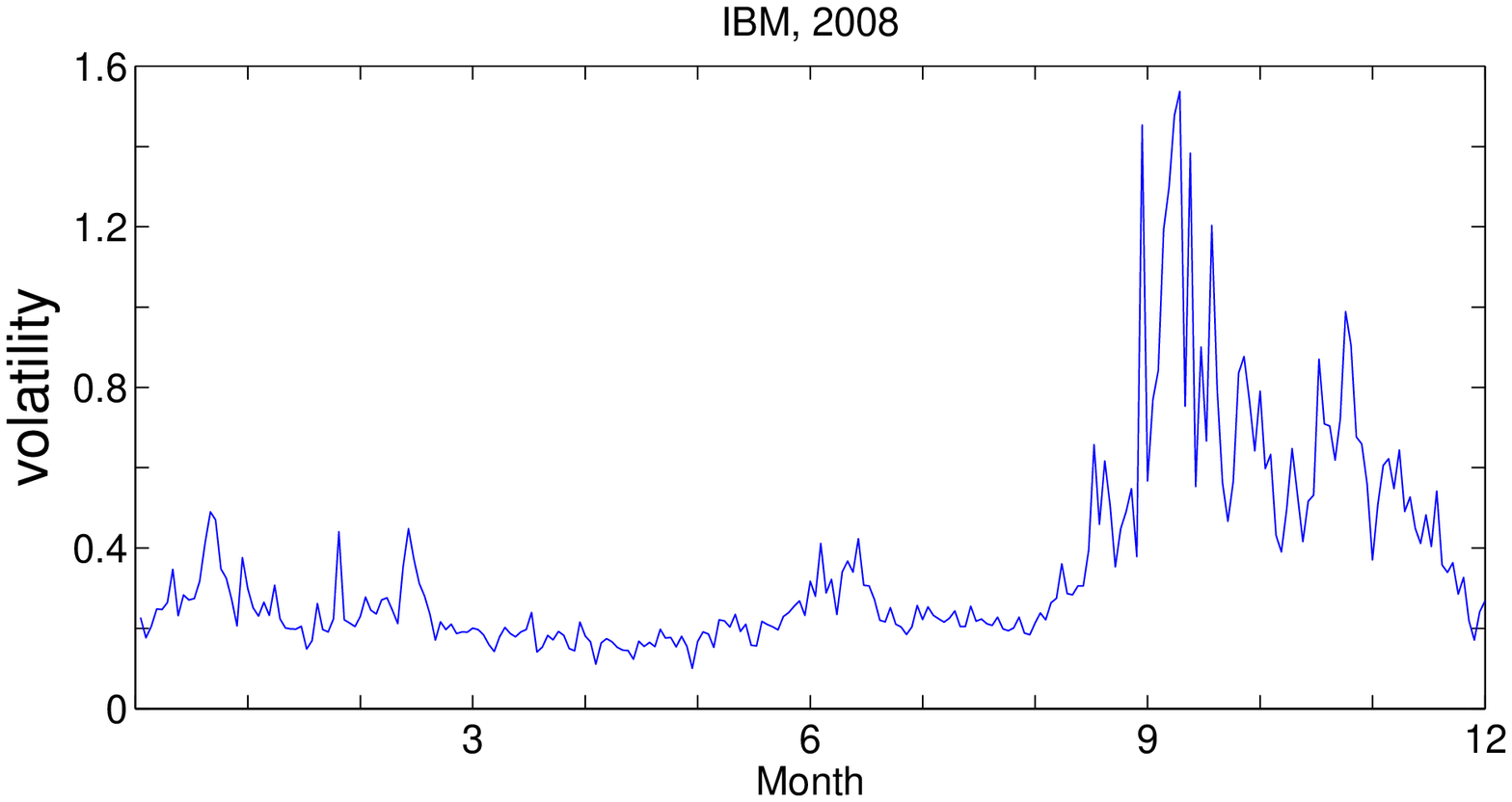}
		\caption{Hawkes volatility, 2008}
	\end{subfigure}
	\begin{subfigure}[b]{0.45\textwidth}
		\includegraphics[width=\textwidth]{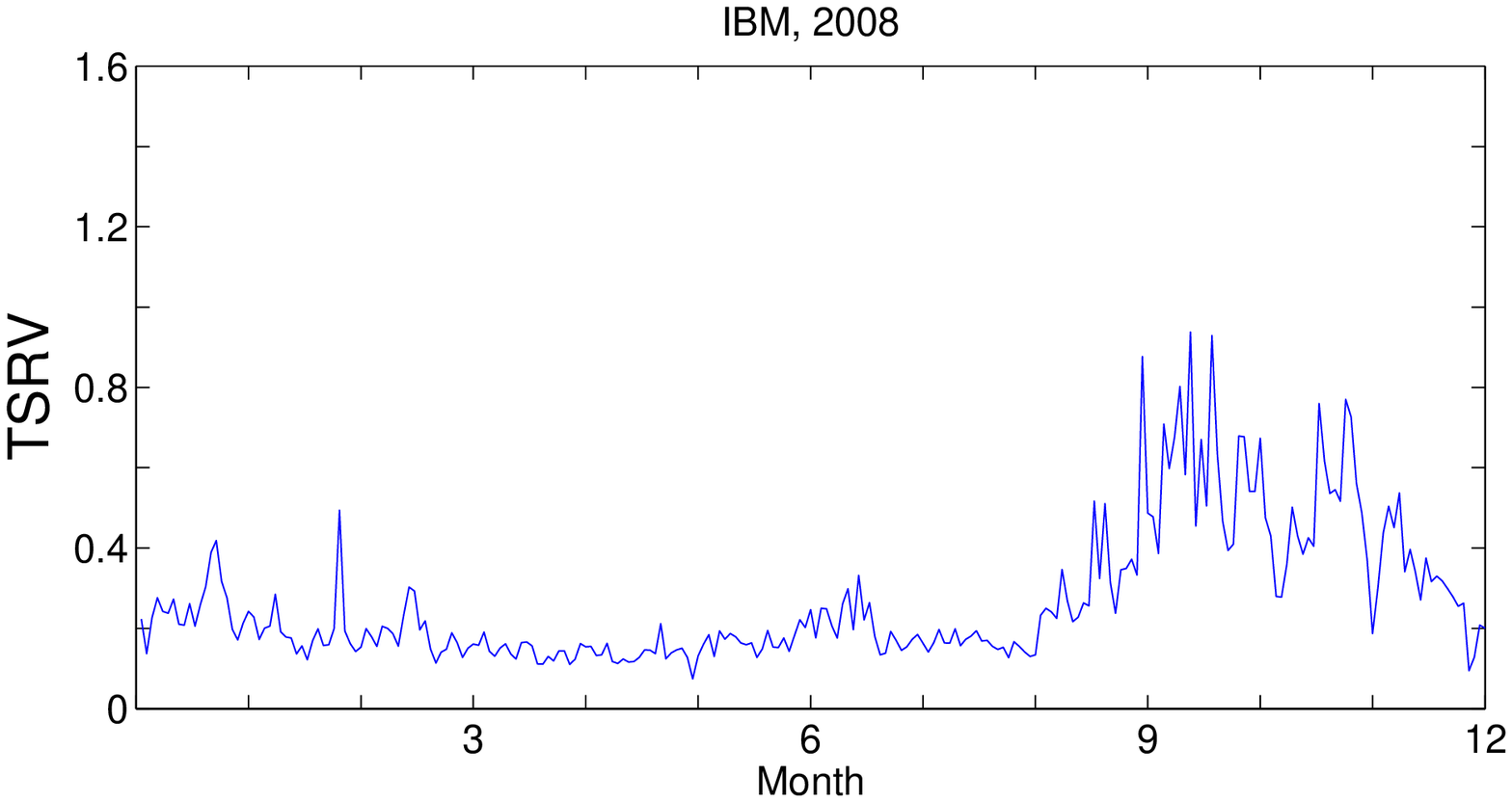}
		\caption{TSRV, 2008}
	\end{subfigure}
	\centering
	\begin{subfigure}[b]{0.45\textwidth}
		\includegraphics[width=\textwidth]{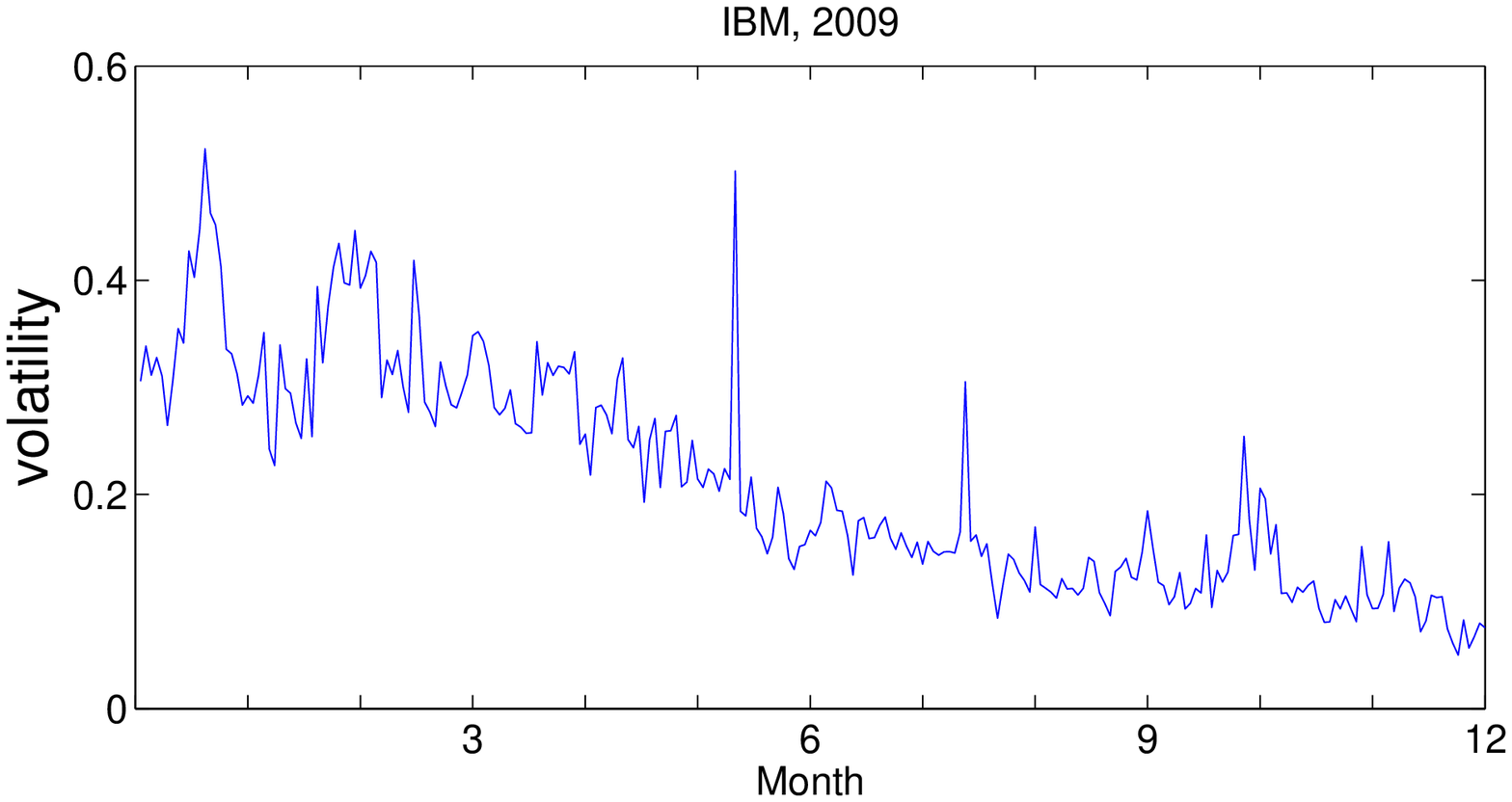}
		\caption{Hawkes volatility, 2009}
	\end{subfigure}
	\begin{subfigure}[b]{0.45\textwidth}
		\includegraphics[width=\textwidth]{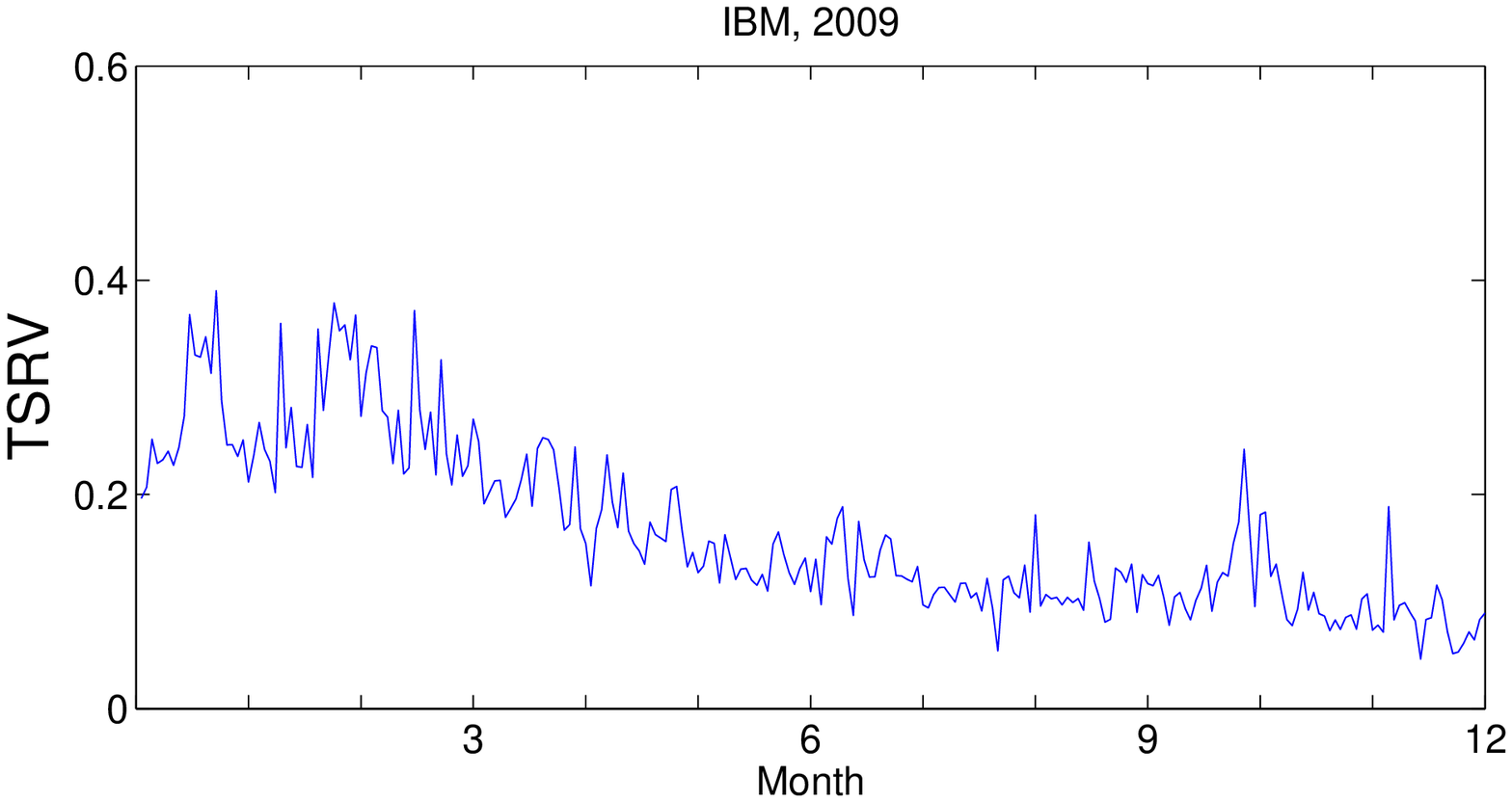}
		\caption{TSRV, 2009}
	\end{subfigure}
	\centering
	\begin{subfigure}[b]{0.45\textwidth}
		\includegraphics[width=\textwidth]{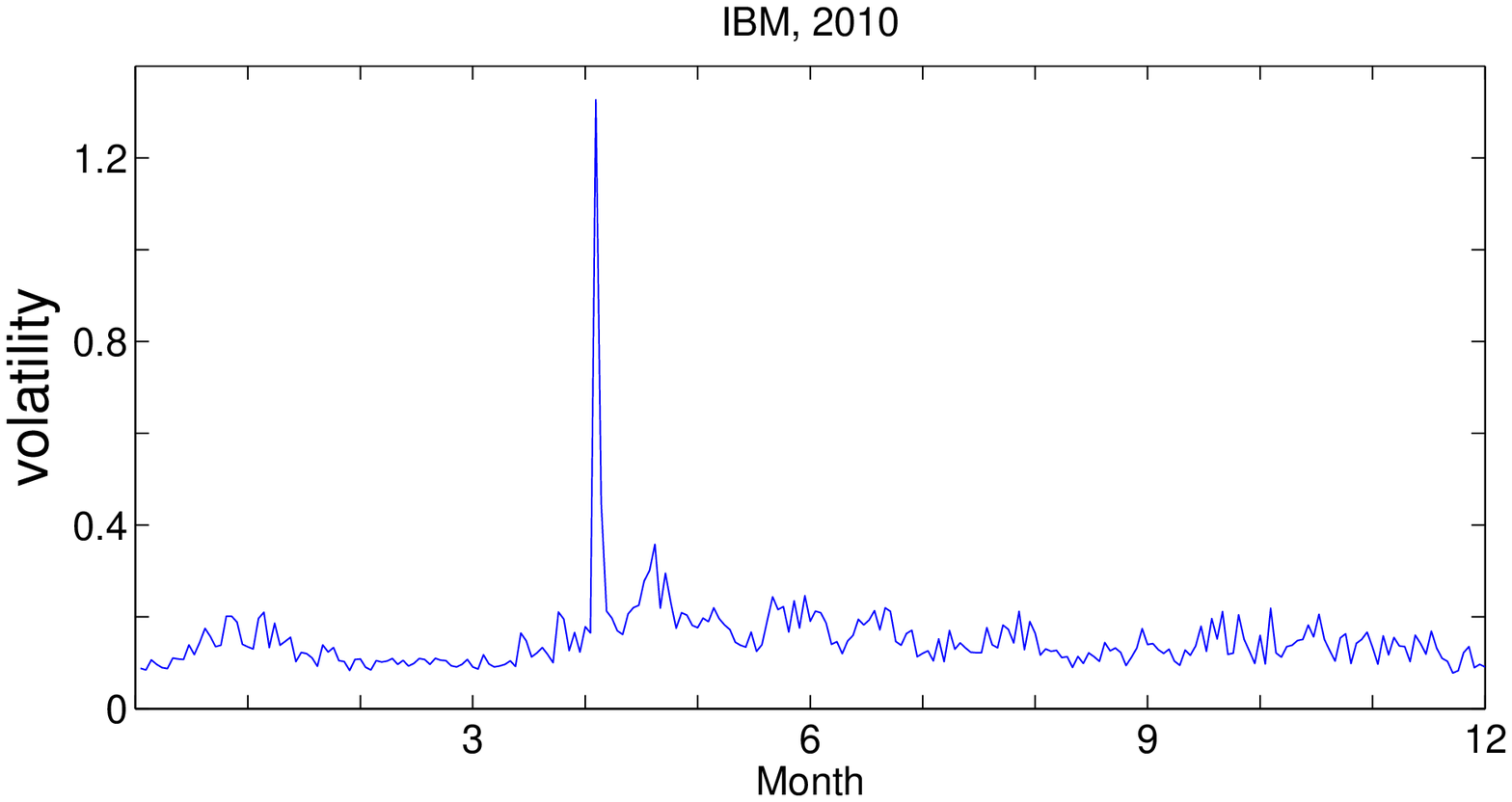}
		\caption{Hawkes volatility, 2010}
	\end{subfigure}
	\begin{subfigure}[b]{0.45\textwidth}
		\includegraphics[width=\textwidth]{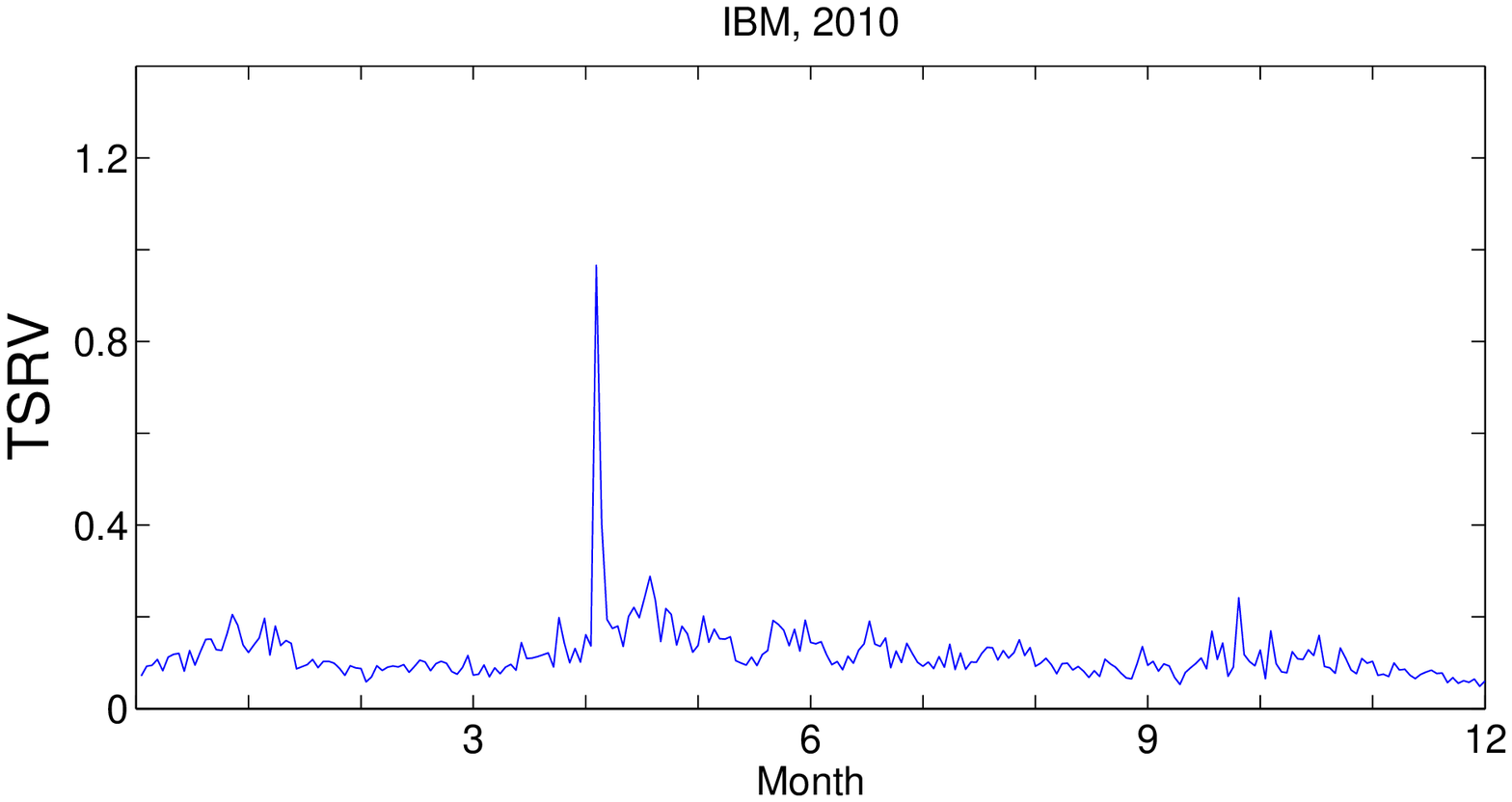}
		\caption{TSRV, 2010}
	\end{subfigure}
	\centering
	\begin{subfigure}[b]{0.45\textwidth}
		\includegraphics[width=\textwidth]{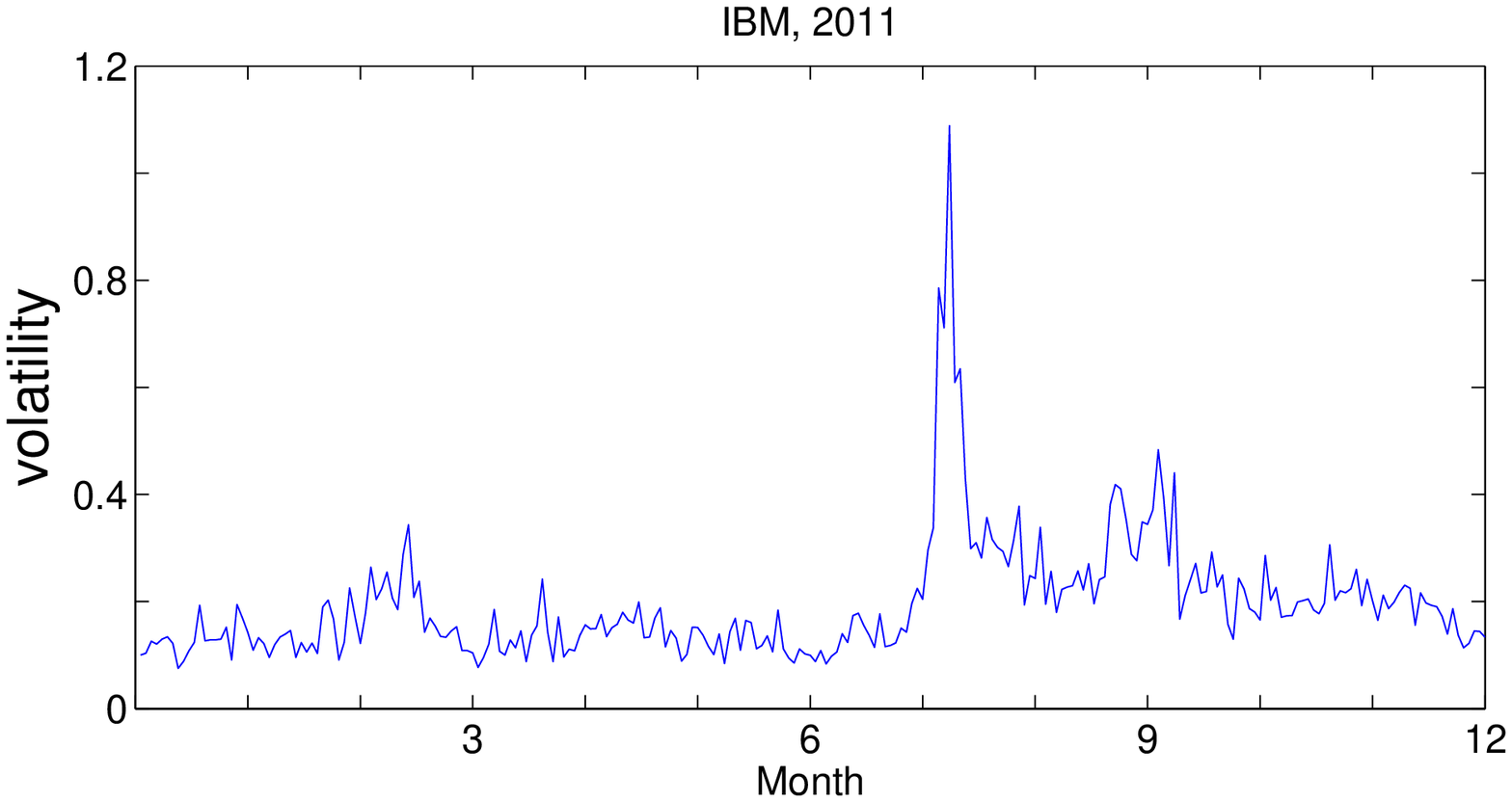}
		\caption{Hawkes volatility, 2011}
	\end{subfigure}
	\begin{subfigure}[b]{0.45\textwidth}
		\includegraphics[width=\textwidth]{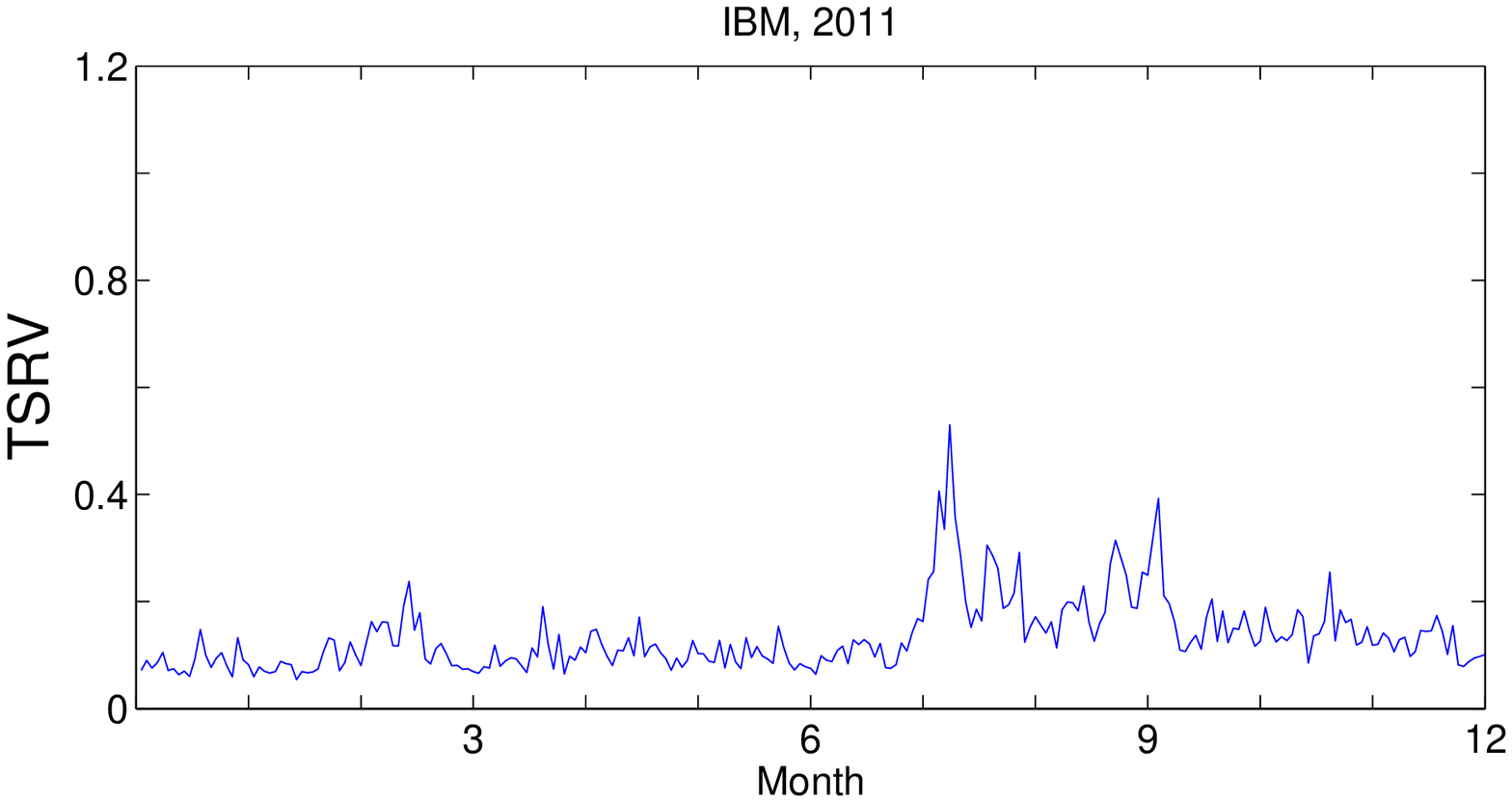}
		\caption{TSRV, 2011}
	\end{subfigure}
	\caption{Estimation results of the volatility, IBM, 2008-2011}\label{Fig:vol}
\end{figure}

In the estimation of the Hawkes models, the data of all arrival times over the sample period are used.
Even with a ten minute interval, usually more than a thousand records of arrival times are available, which is sufficient to provide a reliable result for the estimation in the aspect of the sample size
and hence adequate intraday analyses are possible.
The left of Figure~\ref{Fig:intraday} presents the intraday variation of the volatility of IBM on May 6, 2010, the day of the Flash Crash.
The estimation was performed on a ten minute basis from 10:00 to 15:30 and the cumulative volatility is plotted in the figure.
An abrupt increase in volatility is observed between 14:30 and 15:00 when the stock market crashed and the volatility was stabilized after 15:00. 
The right of the figure plots the intraday U-shape pattern of the volatility of IBM on an ordinary day of August 9, 2011.

\begin{figure}
\centering
\includegraphics[width=0.45\textwidth]{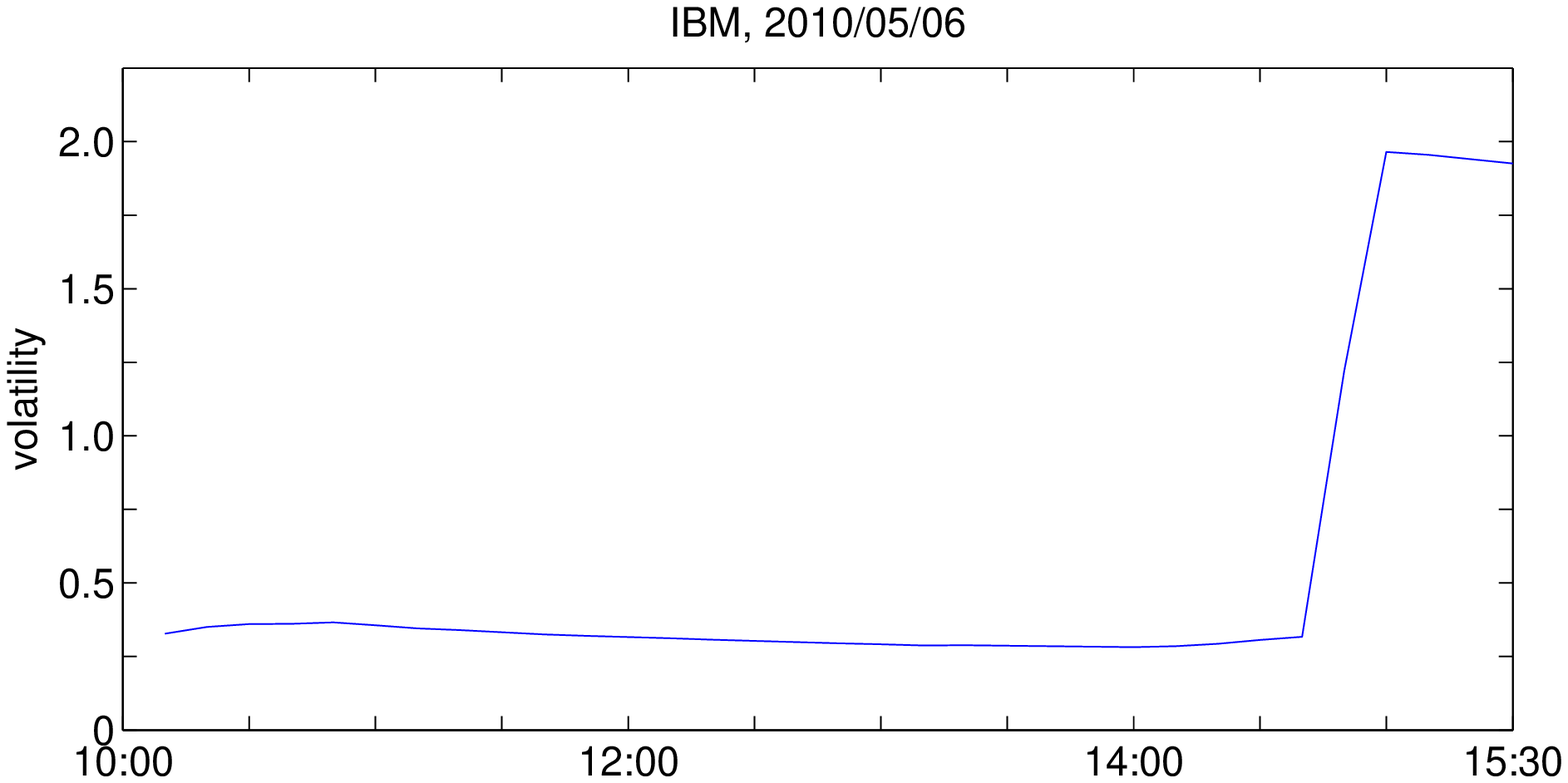}
\includegraphics[width=0.45\textwidth]{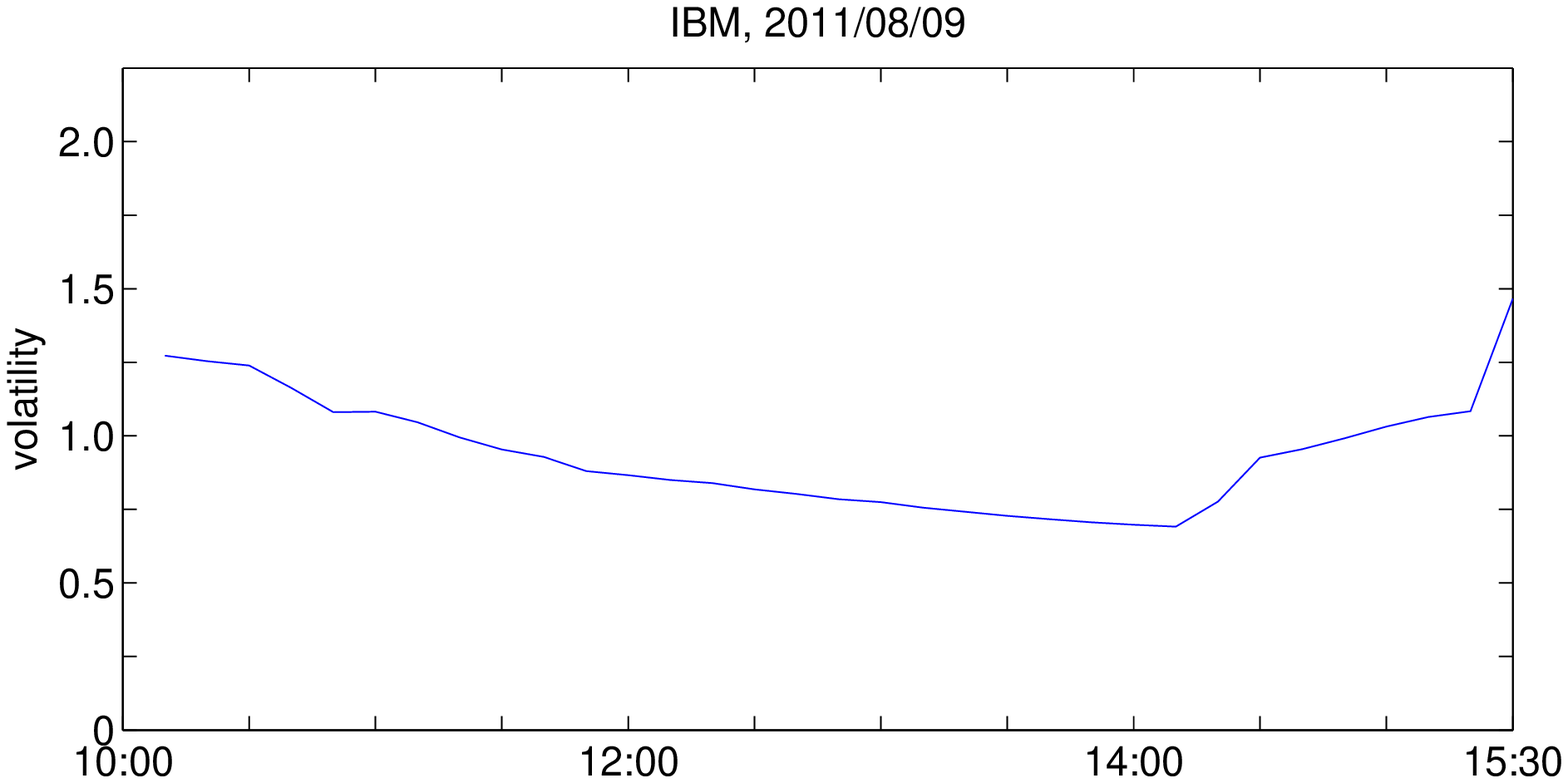}
\caption{Cumulative intraday volatitliy estimated under the marked Hawkes model with every ten minutes update}\label{Fig:intraday}
\end{figure}

In the volatility estimation perspective, the volatility formula under the i.i.d. mark in Corollary~\ref{Cor:iidvol} and the volatility formula without assuming the i.i.d. property are similar.
The empirical result also showed that the two formulas have similar values over time, as shown in Figure~\ref{Fig:iid}.
Therefore, for practical purposes, such as volatility computation, it may be sufficient to introduce an i.i.d. mark distribution.

\begin{figure}
\centering
\includegraphics[width=0.45\textwidth]{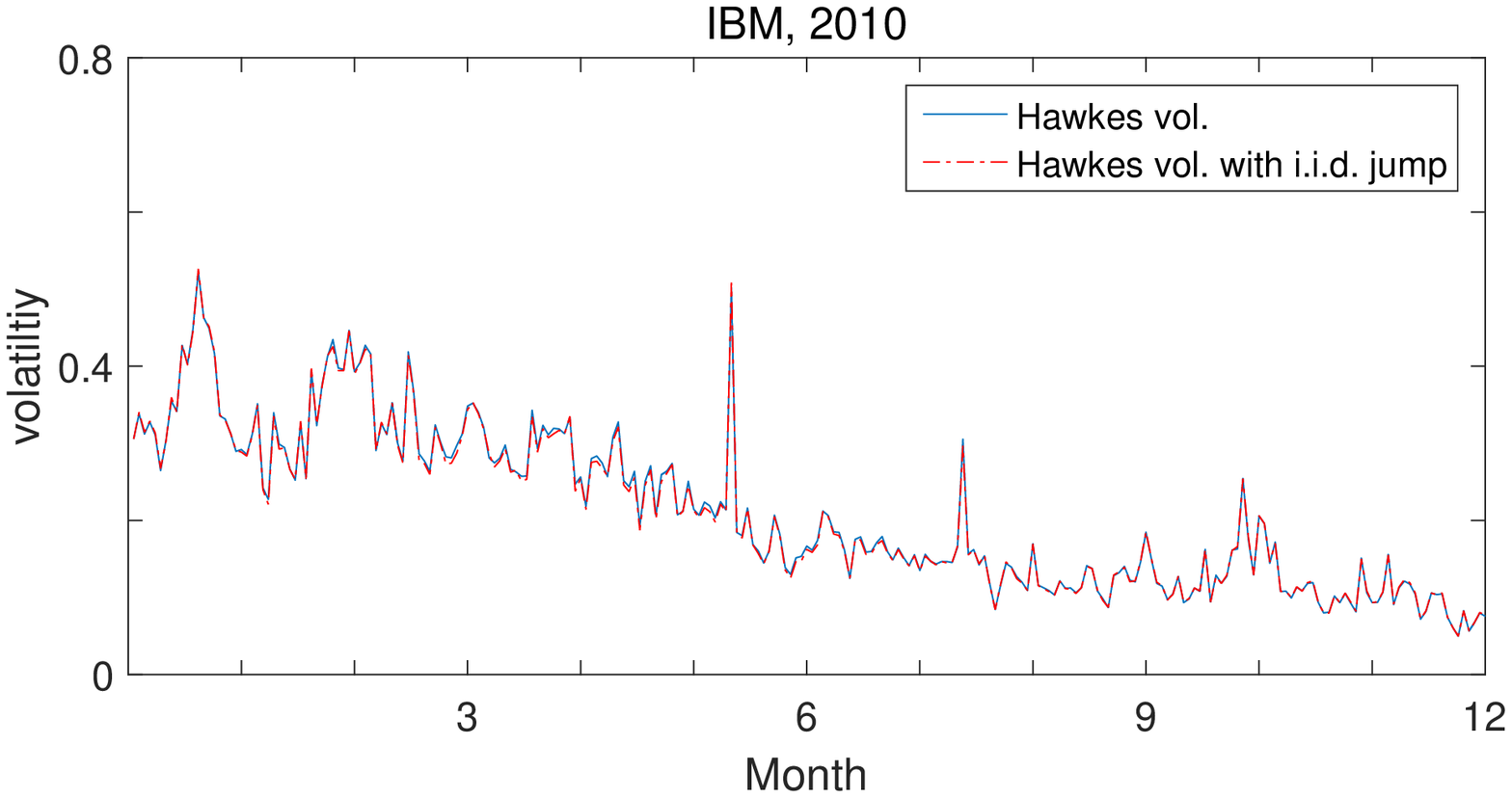}
\includegraphics[width=0.45\textwidth]{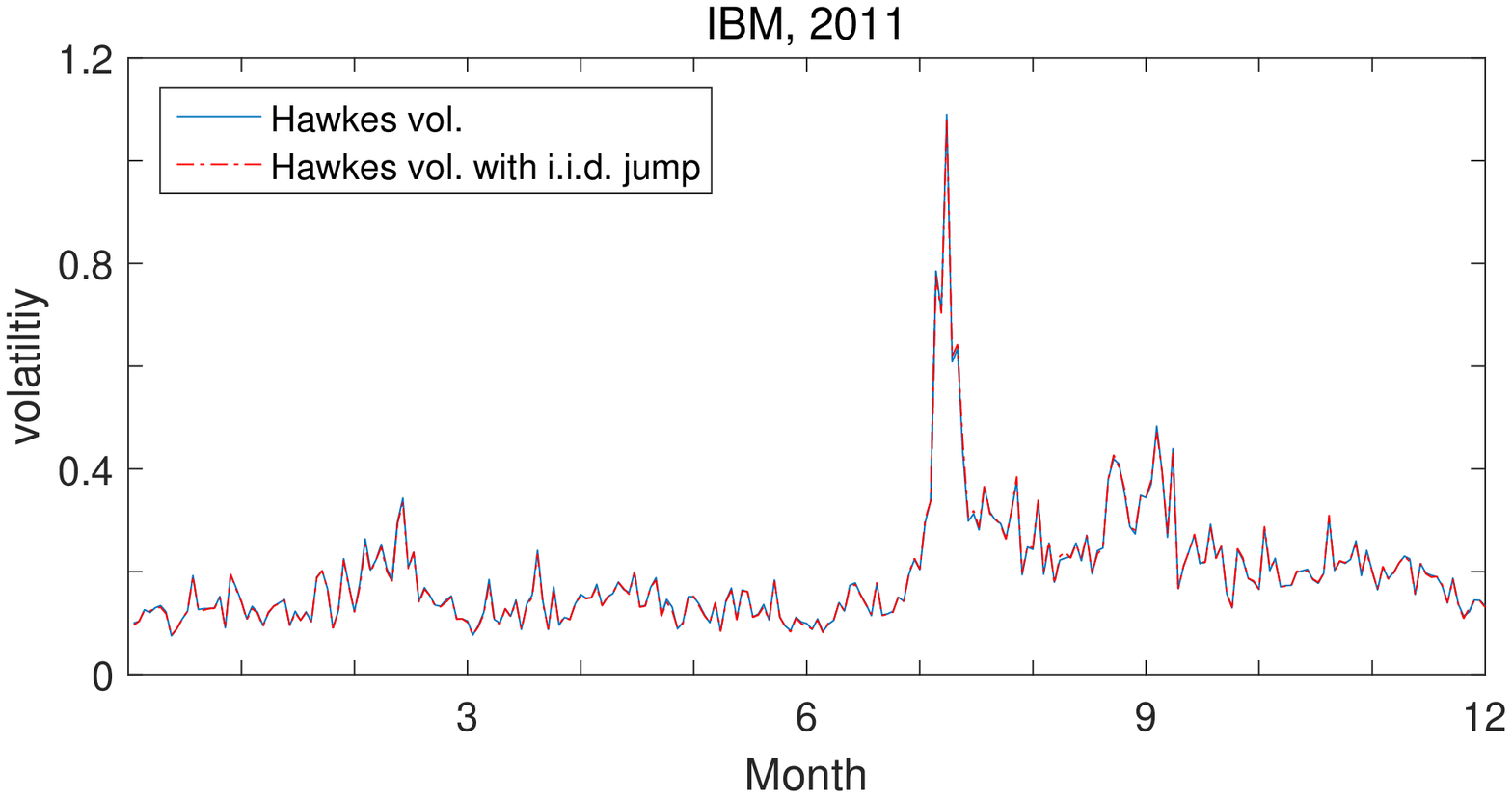}
\caption{Comparison between the daily volatility under the i.i.d. assumption and without i.i.d. assumption}\label{Fig:iid}
\end{figure}

\section{Concluding remark}\label{Sect:concl}
A marked Hawkes model was developed for price tick dynamics in equity markets.
A linear impact function was employed to describe the future effects of price jumps.
A specific distribution for the jump size was not assumed but the empirical distribution was used for the estimation.
This model is not limited to the independent mark since the empirical studies showed that the jump size depends on the ground intensities.
The volatility formula was derived based on stochastic calculus and statistical methods and the simulation studies showed that the Hawkes volatility and realized volatility are similar in the symmetric cases and the Hawkes volatility has less standard error.
On the other hand, there are biases when the underlying path is not symmetrical or the parameters vary with time.

A significant positive linear impact was observed, approximately 0.2, and various types, linear or humped shape, of the conditional mean structure of mark size in the empirical studies based on the equity prices reported in NYSE.
The Hawkes model is useful for estimating the intraday volatility particularly when the volatility is time varying.
The U shape seasonality of the changing volatility was observed and the interesting example of Flash Crash was examined.
As discussed in the simulation and empirical studies, the discrepancy between the Hawkes volatility and realized volatility and the biases from the sample volatility will be an important subject for a future study.
In the presence of the asymmetry in the price dynamics and the time varying parameters, a more robust estimation method is required for a more exact volatility computation.

One of the financial application of the marked Hawkes model is intraday volatility measurements. 
There is a discrepancy between the realized volatility and marked Hawkes volatility but the trend and the overall movements of two volatilities are consistent. 
In addition, in the marked Hawkes volatility estimation, the all available events reported in the exchange is used and hence even using the data in a relatively small interval, 
one can compute the Hawkes volatility and recognize the intraday changes of the volatility movement. 
This feature will help the traders, portfolio managers or algorithmic machines for the decision making by checking the changes in the intraday volatility trends of concerned assets.

The return model of the mark size will be considered for future work.
The distribution of the mark size depending on the current price of the underlying asset and 
it is worthwhile to examine the relationship between the mark size measured under the return process and future intensities.
In addition, it would be interesting to compare the performance of the marked Hawkes model with the ACD-GARCH model with modification of the future intensity as a function the marks.

\newpage
\bibliography{Mark}

\begin{thebibliography}{}

\bibitem[Abergel and Jedidi, 2013]{Abergel2013}
Abergel, F. and Jedidi, A. (2013).
\newblock A mathematical approach to order book modeling.
\newblock {\em International Journal of Theoretical and Applied Finance},
  16:1350025.

\bibitem[A{\"\i}t-Sahalia et~al., 2005]{Ait2005}
A{\"\i}t-Sahalia, Y., Mykland, P.~A., and Zhang, L. (2005).
\newblock How often to sample a continuous-time process in the presence of
  market microstructure noise.
\newblock {\em Review of Financial studies}, 18:351--416.

\bibitem[A{\"\i}t-Sahalia et~al., 2011]{Ait2011}
A{\"\i}t-Sahalia, Y., Mykland, P.~A., and Zhang, L. (2011).
\newblock Ultra high frequency volatility estimation with dependent
  microstructure noise.
\newblock {\em Journal of Econometrics}, 160:160--175.

\bibitem[Alsayed and McGroarty, 2014]{Alsayed2014}
Alsayed, H. and McGroarty, F. (2014).
\newblock Ultra-high-frequency algorithmic arbitrage across international index
  futures.
\newblock {\em Journal of Forecasting}, 33:391--408.

\bibitem[Andersen et~al., 2003]{ABDL}
Andersen, T.~G., Bollerslev, T., Diebold, F.~X., and Labys, P. (2003).
\newblock Modeling and forecasting realized volatility.
\newblock {\em Econometrica}, 71:579--625.

\bibitem[Bacry et~al., 2012]{Bacry2012}
Bacry, E., Dayri, K., and Muzy, J.-F. (2012).
\newblock Non-parametric kernel estimation for symmetric {Hawkes} processes.
  application to high frequency financial data.
\newblock {\em The European Physical Journal B-Condensed Matter and Complex
  Systems}, 85:1--12.

\bibitem[Bacry et~al., 2013]{Bacry2013}
Bacry, E., Delattre, S., Hoffmann, M., and Muzy, J.-F. (2013).
\newblock Modelling microstructure noise with mutually exciting point
  processes.
\newblock {\em Quantitative Finance}, 13:65--77.

\bibitem[Bacry et~al., 2016]{Bacry2016}
Bacry, E., Jaisson, T., and Muzy, J. (2016).
\newblock Estimation of slowly decreasing hawkes kernels: application to
  high-frequency order book dynamics.
\newblock {\em Quantitative Finance}, 16:1179--1201.

\bibitem[Bacry and Muzy, 2014]{Bacry2014}
Bacry, E. and Muzy, J.-F. (2014).
\newblock Hawkes model for price and trades high-frequency dynamics.
\newblock {\em Quantitative Finance}, 14:1147--1166.

\bibitem[Bollerslev, 1986]{Bollerslev1986}
Bollerslev, T. (1986).
\newblock Generalized autoregressive conditional heteroskedasticity.
\newblock {\em Journal of econometrics}, 31:307--327.

\bibitem[Bowsher, 2007]{Bowsher2007}
Bowsher, C.~G. (2007).
\newblock Modelling security market events in continuous time: Intensity based,
  multivariate point process models.
\newblock {\em Journal of Econometrics}, 141:876--912.

\bibitem[Chaboud et~al., 2014]{Chaboud2014}
Chaboud, A.~P., Chiquoine, B., Hjalmarsson, E., and Vega, C. (2014).
\newblock Rise of the machines: Algorithmic trading in the foreign exchange
  market.
\newblock {\em The Journal of Finance}, 69:2045--2084.

\bibitem[Cont and De~Larrard, 2013]{Cont2013}
Cont, R. and De~Larrard, A. (2013).
\newblock Price dynamics in a {Markovian} limit order market.
\newblock {\em SIAM Journal on Financial Mathematics}, 4:1--25.

\bibitem[Cont et~al., 2010]{Cont2010}
Cont, R., Stoikov, S., and Talreja, R. (2010).
\newblock A stochastic model for order book dynamics.
\newblock {\em Operations research}, 58:549--563.

\bibitem[Da~Fonseca and Zaatour, 2014]{Fonseca2014Self}
Da~Fonseca, J. and Zaatour, R. (2014).
\newblock Hawkes process: Fast calibration, application to trade clustering,
  and diffusive limit.
\newblock {\em Journal of Futures Markets}, 34:548--579.

\bibitem[Da~Fonseca and Zaatour, 2015]{Fonseca2014}
Da~Fonseca, J. and Zaatour, R. (2015).
\newblock Clustering and mean reversion in a {Hawkes} microstructure model.
\newblock {\em Journal of Futures Markets}, 35:813--838.

\bibitem[Da~Fonseca and Zaatour, 2016]{Fonseca2016}
Da~Fonseca, J. and Zaatour, R. (2016).
\newblock Correlation and lead–lag relationships in a hawkes microstructure
  model.
\newblock {\em Journal of Futures Markets}.

\bibitem[Daley and Vere-Jones, 2003]{Daley}
Daley, D.~J. and Vere-Jones, D. (2003).
\newblock {\em An introduction to the theory of point processes}, volume~1.
\newblock Springer.

\bibitem[Fodra and Pham, 2015]{Fodra}
Fodra, P. and Pham, H. (2015).
\newblock Semi-markov model for market microstructure.
\newblock {\em Applied Mathematical Finance}, 22:261--295.

\bibitem[Foucault, 2012]{Foucault2012}
Foucault, T. (2012).
\newblock {\em Algorithmic Trading: Issues and Preliminary Evidence}, pages
  1--40.
\newblock John Wiley \& Sons Ltd.

\bibitem[Hansen and Lunde, 2006]{Hansen}
Hansen, P.~R. and Lunde, A. (2006).
\newblock Realized variance and market microstructure noise.
\newblock {\em Journal of Business and Economic Statistics}, 24:127--161.

\bibitem[Hawkes, 1971]{Hawkes1971}
Hawkes, A.~G. (1971).
\newblock Spectra of some self-exciting and mutually exciting point processes.
\newblock {\em Biometrika}, 58:83--90.

\bibitem[Heston, 1993]{Heston1993}
Heston, S.~L. (1993).
\newblock A closed-form solution for options with stochastic volatility with
  applications to bond and currency options.
\newblock {\em Review of financial studies}, 6:327--343.

\bibitem[Hewlett, 2006]{Hewlett2006}
Hewlett, P. (2006).
\newblock Clustering of order arrivals, price impact and trade path
  optimisation.
\newblock In {\em Workshop on Financial Modeling with Jump processes, Ecole
  Polytechnique}.

\bibitem[Hoffmann, 2014]{Hoffmann2014}
Hoffmann, P. (2014).
\newblock A dynamic limit order market with fast and slow traders.
\newblock {\em Journal of Financial Economics}, 113:156--169.

\bibitem[Huth and Abergel, 2014]{Huth2014}
Huth, N. and Abergel, F. (2014).
\newblock High frequency lead/lag relationships — empirical facts.
\newblock {\em Journal of Empirical Finance}, 26:41 -- 58.

\bibitem[Large, 2007]{Large2007}
Large, J. (2007).
\newblock Measuring the resiliency of an electronic limit order book.
\newblock {\em Journal of Financial Markets}, 10:1--25.

\bibitem[Lee and Seo, 2014]{LeeSeo}
Lee, K. and Seo, B.~K. (2014).
\newblock Modeling microstructure price dynamics with symmetric {Hawkes} and
  diffusion model using ultra-high-frequency stock data.
\newblock {\em Working paper}, 1:1--1.

\bibitem[Lo et~al., 2002]{Lo2002}
Lo, A.~W., MacKinlay, A.~C., and Zhang, J. (2002).
\newblock Econometric models of limit-order executions.
\newblock {\em Journal of Financial Economics}, 65:31--71.

\bibitem[Malo and Pennanen, 2012]{Malo2012}
Malo, P. and Pennanen, T. (2012).
\newblock Reduced form modeling of limit order markets.
\newblock {\em Quantitative Finance}, 12:1025--1036.

\bibitem[Ro\c{s}u, 2009]{Rosu2009}
Ro\c{s}u, I. (2009).
\newblock A dynamic model of the limit order book.
\newblock {\em Review of Financial Studies}, 22:4601--4641.

\bibitem[Russell and Engle, 2005]{Russell2005}
Russell, J.~R. and Engle, R.~F. (2005).
\newblock A discrete-state continuous-time model of financial transactions
  prices and times.
\newblock {\em Journal of Business \& Economic Statistics}, 23:166--180.

\bibitem[Zhang et~al., 2005]{Zhang2005}
Zhang, L., Mykland, P.~A., and A{\"\i}t-Sahalia, Y. (2005).
\newblock A tale of two time scales.
\newblock {\em Journal of the American Statistical Association},
  100:1394--1411.

\bibitem[Zheng et~al., 2014]{Zheng2014}
Zheng, B., Roueff, F., and Abergel, F. (2014).
\newblock Modelling bid and ask prices using constrained {Hawkes} processes:
  Ergodicity and scaling limit.
\newblock {\em SIAM Journal on Financial Mathematics}, 5:99--136.

\end{thebibliography}
\bibliographystyle{apalike}

\appendix
\section{Proof of Theorem~\ref{Thm:var}}\label{Proof:var}

We apply the stochastic integration theory to derive the second moment property of the two dimensional symmetric marked Hawkes process.
The quadratic (co)variation process of the semimartingales $X$ and $Y$ for $t \geq 0$ is defined as
$$ [X,Y]_t = X_t Y_t - \int_0^t X_{u-} \D Y_u - \int_0^t Y_{u-} \D X_u.$$
For the quadratic pure jump processes $X$ and $Y$, such as $\lambda_{gi}(t), N_{gi}(t)$ and $N_i (t)$ in the proposed model
\begin{equation}
[X]_t := [X,X]_t =  X_0^2 + \sum_{0\leq s<t} (\Delta X_s)^2 \label{Eq:QV}
\end{equation}
and
$$ [X,Y]_t =  X_0 Y_0 + \sum_{0\leq s<t} (\Delta X_s \Delta Y_s).$$

\subsection{Step 1}
This subsection derives the unconditional expectations of $\lambda^2_{gi}(t)$ and $\lambda_{gi}(t)\lambda_{gj}(t)$.
By the definition of the quadratic variation process, we have
\begin{align*}
\E[\lambda_{g1}^2 (t)] = \E[ [\lambda_{g1}]_t] + 2\E\left[\int_0^t \lambda_{g1}(u) \D \lambda_{g1}(u)\right].
\end{align*}
When the ground intensity $\lambda_{gi}$ is used as an integrator, we consider $\lambda_{gi}$ as the right continuous modification.
The stochastic integration is indeed a path-by-path Lebesgue-Stieltjes integral.
The integration part of the r.h.s. of the above equation is represented by
\begin{align*}
 \int_0^t \lambda_{g1}(u) \D \lambda_{g1}(u) &= \int_0^t \lambda_{g1}(u) \beta(\mu - \lambda_{g1}(u))\D u \\
&+ \int_{(0,t)\times\mathbb Z^+} q_s \beta g(k_1) \lambda_{g1}(u)N_1(\D u \times \D k_1) + \int_{(0,t)\times\mathbb Z^+} q_c \beta g(k_2) \lambda_{g1}(u) N_2(\D u \times \D k_2)
\end{align*}
or heuristically,
\begin{align*}
\D \lambda_{g1}(u) ={}\beta(\mu - \lambda_{g1}(u))\D u & + \int_{(u, u+\D u)\times\mathbb Z^+} q_s \beta g(k_1) N_1(\D u \times \D k_1) + \int_{(u, u+\D u)\times\mathbb Z^+}q_c \beta g(k_2) N_2(\D u \times \D k_2)\\
={} \beta(\mu - \lambda_{g1}(u))\D u & + \int_{(u, u+\D u)\times\mathbb Z^+} \alpha_s (1+(k_1-1)\eta) N_1(\D u \times \D k_1) \\
&+ \int_{(u, u+\D u)\times\mathbb Z^+}\alpha_c (1+(k_2-1)\eta) N_2(\D u \times \D k_2).
\end{align*}

Since the jump size of $\lambda_{g1}$ is represented by $\alpha_s (1+(k_1 -1)\eta)$ and $\alpha_c (1+(k_2 -1)\eta)$, using Eq.~\eqref{Eq:QV}, we have
\begin{align*}
\E[ [\lambda_{g1}]_t] ={}& \E [\lambda_{g1}^2(0)] + \E \left[\int_{(0,t)\times \mathbb Z^+} \alpha_s^2 (1+(k_1 -1)\eta)^2 N_1(\D u \times \D k_1) \right] \\
&+ \E \left[\int_{(0,t)\times \mathbb Z^+} \alpha_c^2 (1+(k_2-1)\eta)^2 N_2(\D u \times \D k_2) \right] \\
={}& \E [\lambda_{g1}^2(t)] + \int_0^t \alpha_s^2 \left(1+\E[ 2(k_1-1)\eta\lambda_{g1}(u)] + \E [(k_1^2 - 2k_1 + 1)\eta^2\lambda_{g1}(u)] \right) \D u  \\
&+ \int_0^t \alpha_c^2 \left(1+\E[ 2(k_2-1)\eta\lambda_{g2}(u)] + \E [(k_2^2 - 2k_2 + 1)\eta^2\lambda_{g2}(u)] \right) \D u\\
={}& \E [\lambda_{g1}^2(t)] + (\alpha_s^2 + \alpha_c^2)\{ 1+2(K_{1\lambda_{g1}}-1)\eta + (K_{1\lambda_{g1}}^{(2)} -2K_{1\lambda_{g1}} +1 )\eta^2 \}\E[\lambda_{g1}(t)]t \\
={}& \E [\lambda_{g1}^2(t)] + (\alpha_s^2 + \alpha_c^2)\bar{\bar K} \E[\lambda_{g1}(t)]t
\end{align*}
where the stationarity of $\lambda_{gi}$ and the symmetry are used.
In addition,
\begin{align*}
&\E\left[\int_0^t \lambda_{g1}(u) \D \lambda_{g1}(u)\right] =\E\left[\int_0^t \lambda_{g1}(u) \beta(\mu - \lambda_{g1}(u))\D u \right] \\
&\quad + \E\left[ \int_{(0,t)\times \mathbb Z^+}  q_s \beta g(k_1) \lambda_{g1}(u)N_1(\D u \times \D k_1) + \int_{(0,t)\times \mathbb Z^+} q_c \beta g(k_2)\lambda_{g1}(u) N_2(\D u \times \D k_2) \right]\\
&=(\beta \mu \E[\lambda_{g1}(t)] + [\alpha_s\{1+(K_{1\lambda^2_{g1}}-1)\eta\}-\beta]\E[\lambda_{g1}^2(t)] + \alpha_c\{1+(K_{2\lambda_{g1}\lambda_{g2}}-1)\eta \}\E [\lambda_{g1}(t) \lambda_{g2}(t)]) t\\
&=(\beta \mu \E[\lambda_{g1}(t)] + (\breve \alpha_s-\beta)\E[\lambda_{g1}^2(t)] + \tilde \alpha_c\E [\lambda_{g1}(t) \lambda_{g2}(t)]) t
\end{align*}
 
Similarly, 
\begin{align*}
\E[\lambda_{g1}(t)\lambda_{g2}(t)] = \E[ [\lambda_{g1}, \lambda_{g2}]_t] + \E\left[\int_0^t \lambda_{g1}(u) \D \lambda_{g2}(u)\right] +  \E\left[\int_0^t \lambda_{g2}(u) \D \lambda_{g1}(u)\right],
\end{align*}
and, for the covariation process $[\lambda_{g1}, \lambda_{g2}]_t$, we have
\begin{align*}
\E[ [\lambda_{g1},\lambda_{g2}]_t] ={}& \E [\lambda_{g1}(0)\lambda_{g2}(0)] + \E \left[\int_{(0,t)\times \mathbb Z^+} \alpha_s\alpha_c (1+(k_1-1)\eta)^2 N_1(\D u \times \D k_1) \right] \\
&+ \E \left[\int_{(0,t)\times \mathbb Z^+} \alpha_s\alpha_c (1+(k_2-1)\eta)^2 N_2(\D u \times \D k_2) \right] \\
={}& \E[\lambda_{g1}(t)\lambda_{g2}(t)] + \int_0^t\alpha_s\alpha_c\left(1+\E[ 2(k_1-1)\eta\lambda_{g1}(u)] + \E [(k_1^2 - 2k_1 + 1)\eta^2\lambda_{g1}(u)] \right) \D u\\
&+ \int_0^t\alpha_s\alpha_c \left(1+\E[ 2(k_2-1)\eta\lambda_{g2}(u)] + \E [(k_2^2 - 2k_2 + 1)\eta^2\lambda_{g2}(u)] \right) \D u\\
={}& \E[\lambda_{g1}(t)\lambda_{g2}(t)] + \alpha_s\alpha_c \left\{ 1+2(K_{1\lambda_{g1}}-1)\eta+ (K_{1\lambda_{g1}}^{(2)} -2K_{1\lambda_{g1}} +1 )\eta^2 \right\}\E[\lambda_{g1}(t)]t\\
&+ \alpha_s\alpha_c \left\{ 1+2(K_{2\lambda_{g2}}-1)\eta+ (K_{2\lambda_{g2}}^{(2)} -2K_{2\lambda_{g2}} +1 )\eta^2 \right\}\E[\lambda_{g2}(t)]t\\
={}& \E[\lambda_{g1}(t)\lambda_{g2}(t)] +  2\alpha_s\alpha_c\{ 1+2(K_{1\lambda_{g1}}-1)\eta + (K_{1\lambda_{g1}}^{(2)} -2K_{1\lambda_{g1}} +1 )\eta^2 \}\E[\lambda_{g1}(t)]t \\
={}& \E[\lambda_{g1}(t)\lambda_{g2}(t)] +  2\alpha_s\alpha_c \bar{\bar K} \E[\lambda_{g1}(t)]t
\end{align*}
and
\begin{align*}
&\E\left[\int_{(0,t)\times \mathbb Z^+} \lambda_{g1}(u) \D \lambda_{g2}(u)\right] =\E\left[\int_0^t \lambda_{g1}(u) \beta(\mu - \lambda_{g2}(u))\D u \right] \\
& \quad + \E \left[ \int_{(0,t)\times \mathbb Z^+}  q_c \beta g(k_1) \lambda_{g1}(u) N_1(\D u \times \D k_1) + \int_{(0,t)\times \mathbb Z^+}  q_s \beta g(k_2)\lambda_{g1}(u) N_2(\D u \times \D k_2) \right]\\
&=(\beta \mu \E[\lambda_{g1}(t)] + \alpha_c\{1+(K_{1\lambda^2_{g1}}-1)\eta\}\E[\lambda_{g1}^2(t)] + [\alpha_s\{1+(K_{2\lambda_{g1}\lambda_{g2}}-1)\eta \}-\beta]\E [\lambda_{g1}(t) \lambda_{g2}(t)]) t\\
&=(\beta \mu \E[\lambda_{g1}(t)] + \breve \alpha_c \E[\lambda_{g1}^2(t)] + (\tilde \alpha_s-\beta)\E [\lambda_{g1}(t) \lambda_{g2}(t)]) t.
\end{align*}

Combining the above results, we have
\begin{align*}
0 &= (\alpha_s^2 + \alpha_c^2)\bar{\bar K} \E[\lambda_{g1}(t)]t + 2\left\{ \beta \mu \E[\lambda_{g1}(t)] + (\breve \alpha_s-\beta)\E[\lambda_{g1}^2(t)] + \tilde \alpha_c\E [\lambda_{g1}(t) \lambda_{g2}(t)] \right\} t \\
0 &= 2\alpha_s\alpha_c \bar{\bar K} \E[\lambda_{g1}(t)]t + 2\left\{ \beta \mu \E[\lambda_{g1}(t)] + \breve \alpha_c \E[\lambda_{g1}^2(t)] + (\tilde \alpha_s-\beta)\E [\lambda_{g1}(t) \lambda_{g2}(t)] \right \}t
\end{align*}
and, in a matrix form,
\begin{align*}
\begin{bmatrix} \E[\lambda_{g1}^2 (t)] \\ \E[\lambda_{g1}(t)\lambda_{g2}(t)] \end{bmatrix}
= -\frac{1}{2}\E[\lambda_{g1}(t)] \bf M^{-1} 
\begin{bmatrix} (\alpha_s^2 + \alpha_c^2) \bar{\bar K} + 2\beta\mu \\ 2(\alpha_s\alpha_c\bar{\bar K}+\beta\mu) \end{bmatrix}
\end{align*}
where 
$$ \mathbf{M} = \begin{bmatrix} \breve \alpha_s - \beta & \tilde \alpha_c \\ \breve \alpha_c & \tilde \alpha_s - \beta \end{bmatrix}.
$$

\subsection{Step 2}
In this step, the goal is to calculate $\E[\lambda_{g1}(t)N_1(t)]$ and $\E[\lambda_{g1}(t)N_2(t)]$.
We have
\begin{align*}
\E[\lambda_{g1}(t)N_1(t)] = \E[[\lambda_{g1}, N_1]_t] + \E\left[\int_{(0,t)\times \mathbb Z^+}\lambda_{g1}(u) k_1 N_1(\D u\times \D k_1)\right] + \E\left[\int_0^t N_1(u-) \D \lambda_{g1}(u)\right].
\end{align*}
For each component in the r.h.s., 
\begin{align*}
\E[[\lambda_{g1}, N_1]_t] &= \E\left[\int_{(0,t)\times \mathbb Z^+} \alpha_s(1+(k_1-1)\eta)k_1 N_1(\D u\times \D k_1) \right]\\
&= \alpha_s (K_{1\lambda_{g1}}+(K_{1\lambda_{g1}}^{(2)}-K_{1\lambda_{g1}})\eta)\E[\lambda_{g1}(t)]t\\
&= \alpha_s \bar K \E[\lambda_{g1}(t)]t
\end{align*}
and
\begin{align*}
\E\left[\int_{(0,t)\times \mathbb Z^+} \lambda_{g1}(u) k_1 N_1(\D u\times \D k_1)\right] = \int_0^t \E[k_1\lambda_{g1}^2(u)]\D u = K_{1\lambda^2_{g1}} \E[\lambda_{g1}^2(t)]t
\end{align*}
and
\begin{align*}
\E\left[\int_0^t N_1(u-) \D \lambda_{g1}(u)\right] ={}& \E\left[ \int_0^t N_1(u) \beta(\mu - \lambda_{g1}(u))\D u \right] \\
&+ \E \left[ \int_{(0,t)\times \mathbb Z^+} N_1(u-) \alpha_s(1+(k_1-1)\eta) N_1(\D u \times \D k_1) \right] \\
&+ \E \left[ \int_{(0,t)\times \mathbb Z^+} N_1(u-) \alpha_c(1+(k_2-1)\eta) N_2(\D u \times \D k_2) \right] \\
={}& \int_0^t \{ \beta \mu K \E [\lambda_{g1}(u)]u + (\acute \alpha_s - \beta)\E[\lambda_{g1}(u)N_1(u)] + \grave \alpha_c\E[\lambda_{g2}(u)N_1(u)] \} \D u\\
={}& \int_0^t \{ \beta \mu K \E [\lambda_{g1}(u)]u + (\acute \alpha_s - \beta)\E[\lambda_{g1}(u)N_1(u)] + \grave \alpha_c\E[\lambda_{g1}(u)N_2(u)] \} \D u.
\end{align*}

Similarly,
\begin{align*}
\E[\lambda_{g1}(t)N_2(t)] = \E[[\lambda_{g1}, N_2]_t] + \E\left[\int_{(0,t)\times \mathbb Z^+}\lambda_{g1}(u) k_2 N_2(\D u\times \D k_2)\right] + \E\left[\int_0^t N_2(u-) \D \lambda_{g1}(u)\right]
\end{align*}
and
\begin{align*}
\E[[\lambda_{g1}, N_2]_t] &= \E\left[\int_{(0,t)\times \mathbb Z^+} \alpha_c(1+(k_2-1)\eta)k_2 N_2(\D u\times \D k_2) \right]\\
&=\alpha_c (K_{1\lambda_{g1}}+(K^{(2)}-K_{1\lambda_{g1}})\eta)\E[\lambda_{g1}(t)]t\\
&=\alpha_c \bar K \E[\lambda_{g1}(t)]t
\end{align*}
and
\begin{align*}
\E\left[\int_{(0,t)\times \mathbb Z^+} \lambda_{g1}(u) k_2 N_2(\D u\times \D k_2)\right] = \int_0^t \E[k_2\lambda_{g1}(u)\lambda_{g2}(u)]\D u = K_{2\lambda_{g1}\lambda_{g2}} \E[\lambda_{g1}(t)\lambda_{g2}(t)]t
\end{align*}
and
\begin{align*}
\E\left[\int_0^t N_2(u-) \D \lambda_{g1}(u)\right] ={}& \E\left[ \int_0^t N_2(u) \beta(\mu - \lambda_{g1}(u))\D u \right] \\
&+ \E \left[ \int_{(0,t)\times \mathbb Z^+} N_2(u-) \alpha_s(1+(k_1-1)\eta) N_1(\D u \times \D k_1) \right] \\
&+ \E \left[ \int_{(0,t)\times \mathbb Z^+} N_2(u-) \alpha_c(1+(k_2-1)\eta) N_2(\D u \times \D k_2) \right] \\
={}& \int_0^t \{ \beta \mu K_{1\lambda_{g1}} \E [\lambda_{g1}(u)]u + (\grave \alpha_s - \beta)\E[\lambda_{g1}(u)N_2(u)] + \acute \alpha_c\E[\lambda_{g2}(u)N_2(u)] \} \D u\\
={}& \int_0^t \{ \beta \mu K_{1\lambda_{g1}} \E [\lambda_{g1}(u)]u + (\grave \alpha_s - \beta)\E[\lambda_{g1}(u)N_2(u)] + \acute \alpha_c\E[\lambda_{g1}(u)N_1(u)] \} \D u.
\end{align*}

Combing the above results, we have
\begin{align*}
\E[\lambda_{g1}(t)N_1(t)] ={}& \alpha_s \bar K \E[\lambda_{g1}(t)]t + K_{1\lambda^2_{g1}} \E[\lambda_{g1}^2(t)]t \\
&+ \int_0^t \{ \beta \mu K \E [\lambda_{g1}(u)]u + (\acute \alpha_s - \beta)\E[\lambda_{g1}(u)N_1(u)] + \grave \alpha_c\E[\lambda_{g1}(u)N_2(u)] \} \D u\\
\E[\lambda_{g1}(t)N_2(t)] = {}& \alpha_c \bar K \E[\lambda_{g1}(t)]t +  K_{2\lambda_{g1}\lambda_{g2}} \E[\lambda_{g1}(t)\lambda_{g2}(t)]t \\
&+ \int_0^t \{ \beta \mu K \E [\lambda_{g1}(u)]u + (\grave \alpha_s - \beta)\E[\lambda_{g1}(u)N_2(u)] + \acute \alpha_c\E[\lambda_{g1}(u)N_1(u)] \} \D u.
\end{align*}
Let
$$ \mathbf{M}_2 = \begin{bmatrix} \acute \alpha_s - \beta & \grave \alpha_c \\ \acute \alpha_c & \grave \alpha_s - \beta \end{bmatrix}, \quad \mathbf{K}_2 = \begin{bmatrix} K_{1\lambda_{g1}^2} & 0 \\ 0 & K_{2\lambda_{g1}\lambda_{g2}} \end{bmatrix}.$$
Then,
\begin{align*}
\begin{bmatrix}\dfrac{\D \E[\lambda_{g1}(t)N_1(t)]}{\D t} \\ \dfrac{\D \E[\lambda_{g1}(t)N_2(t)]}{\D t}\end{bmatrix}
= \mathbf{M}_2 \begin{bmatrix} \E[\lambda_{g1}(t)N_1(t)] \\ \E[\lambda_{g1}(t)N_2(t)] \end{bmatrix} + 
\begin{bmatrix}\alpha_s \bar K \E[\lambda_{g1}(t)] + K_{1\lambda^2_{g1}}\E[\lambda_{g1}^2(t)]+ \beta\mu K\E[\lambda_{g1}(t)]t \\ \alpha_c \bar K \E[\lambda_{g1}(t)] + K_{2\lambda_{g1}\lambda_{g2}}\E[\lambda_{g1}(t)\lambda_{g2}(t)]+ \beta\mu K \E[\lambda_{g1}(t)]t \end{bmatrix}\\
= \mathbf{M}_2 \begin{bmatrix} E[\lambda_{g1}(t)N_1(t)] \\ \E[\lambda_{g1}(t)N_2(t)] \end{bmatrix} 
+ \E[\lambda_{g1}(t)] \left(\begin{bmatrix}1 \\1\end{bmatrix}\beta \mu K t + \begin{bmatrix} \alpha_s \bar K \\ \alpha_c \bar K\end{bmatrix}
-\frac{1}{2} \mathbf{K}_2 \mathbf{M}^{-1} 
\begin{bmatrix}  (\alpha_s^2 + \alpha_c^2) \bar{\bar K} + 2\beta\mu  \\ 2(\alpha_s\alpha_c\bar{\bar K}+\beta\mu) \end{bmatrix} \right)
\end{align*}
and the particular solution is
\begin{align*}
\begin{bmatrix} \E[\lambda_{g1}(t)N_1(t)] \\ \E[\lambda_{g1}(t)N_2(t)] \end{bmatrix}
= -\E[\lambda_{g1}(t)] & \left\{  \beta \mu \mathbf{K} \mathbf{M}_2^{-1} \begin{bmatrix}1\\1 \end{bmatrix}t + \mathbf{M}_2^{-1} \begin{bmatrix} \alpha_s \bar K \\ \alpha_c \bar K\end{bmatrix} 
-\frac{1}{2} \mathbf{M}_2^{-1} \mathbf{K}_2 \mathbf{M}^{-1} 
\begin{bmatrix}  (\alpha_s^2 + \alpha_c^2) \bar{\bar K}  \\ 2\alpha_s\alpha_c\bar{\bar K} \end{bmatrix} \right.\\
{}&\left. + \mathbf{M}_2^{-1}(\mathbf{K}\mathbf{M}_2^{-1}- \mathbf{K}_2 \mathbf{M}^{-1}) \begin{bmatrix} \beta\mu \\ \beta\mu \end{bmatrix} \right\}.
\end{align*}
Note that if $\mathbf{K}\mathbf{M}_2^{-1}- \mathbf{K}_2 \mathbf{M}^{-1}$ is close to zero, then the last term of the above equation is also close to zero and the formula is similar to the one in the simple Hawkes model.
\subsection{Step 3}
In the final step, $\E[N_1^2(t)]$ and $\E[N_1(t)N_2(t)]$ are derived.
We have
\begin{align*}
\E[N_1^2(t)] &= \E [[N_1]_t] + 2\E\left[\int_{(0,t)\times \mathbb Z^+} k_1 N_1(u) N_1(\D u \times \D k_1)  \right] \\
&= \E [k_1^2 \lambda_{g1}(t)]t + 2\int_0^t \E[k_1 \lambda_{g1}(u) N_1(u) ] \D u \\
&= K^{(2)} \E [\lambda_{g1}(t)]t + 2 \int_0^t K_{1\lambda_{g1} N_1} \E[ \lambda_{g1}(u) N_1(u)] \D u
\end{align*}
and
\begin{align*}
\E[N_1(t)N_2(t)] &= \E[[N_1,N_2]_t] + \E\left[\int_{(0,t)\times \mathbb Z^+}k_2 N_1(u-) N_2(\D u\times \D k_2)\right] + \E\left[\int_{(0,t)\times \mathbb Z^+} k_1 N_2(u-) N_1(\D u\times \D k_1)\right] \\
&= \int_0^t  K_{2\lambda_{g2} N_1} \E[ \lambda_{g2}(u) N_1(u)] \D u + \int_0^t  K_{1\lambda_{g1} N_2} \E[ \lambda_{g1}(u) N_2(u)] \D u \\
&= 2 \int_0^t  K_{1\lambda_{g1} N_2} \E[ \lambda_{g1}(u) N_2(u)] \D u.
\end{align*}
Then
\begin{align*}
\begin{bmatrix} \dfrac{ \D \E[N_1^2(t)]}{\D t} \\ \dfrac{\D \E[N_1(t)N_2(t)]}{\D t} \end{bmatrix} = 2 \mathbf{K}_3
\begin{bmatrix} \E[ \lambda_{g1}(t) N_1(t)] \\  \E[ \lambda_{g1}(t) N_2(t)] \end{bmatrix} + \begin{bmatrix} K^{(2)} \E [\lambda_{g1}(t)] \\ 0 \end{bmatrix} 
\end{align*}
where
$$ \mathbf{K}_3 = \begin{bmatrix} K_{1\lambda_{g1} N_1} & 0 \\ 0 & K_{1\lambda_{g1} N_2} \end{bmatrix}$$
and
\begin{align*}
&\begin{bmatrix} \E[N_1^2(t)] \\ \E[N_1(t)N_2(t)] \end{bmatrix} = -\E[\lambda_{g1}(t)]\mathbf{K}_3 \left\{  \beta \mu \mathbf{K} \mathbf{M}_2^{-1} \begin{bmatrix}1\\1 \end{bmatrix}t^2 \right. \\
{}&\left. + \left( 2\mathbf{M}_2^{-1} \begin{bmatrix} \alpha_s \bar K \\ \alpha_c \bar K\end{bmatrix} 
- \mathbf{M}_2^{-1} \mathbf{K}_2 \mathbf{M}^{-1} 
\begin{bmatrix}  (\alpha_s^2 + \alpha_c^2) \bar{\bar K}  \\ 2\alpha_s\alpha_c\bar{\bar K} \end{bmatrix} 
 + 2\mathbf{M}_2^{-1}(\mathbf{K}\mathbf{M}_2^{-1}- \mathbf{K}_2 \mathbf{M}^{-1}) \begin{bmatrix} \beta\mu \\ \beta\mu \end{bmatrix} - \begin{bmatrix}K^{(2)}/K_{1\lambda_{g1} N_1} \\ 0\end{bmatrix} \right)t \right\}.
\end{align*}
\end{document}